\newtheorem{theorem}{Theorem}
\newtheorem{corollary}[theorem]{Corollary}
\newtheorem{lemma}[theorem]{Lemma}
\newtheorem{prop}[theorem]{Proposition}
\theoremstyle{remark}
\newtheorem{remark}{Remark}
\def\reff#1{(\ref{#1})}
\newcommand{\ket}[1]{|#1\rangle}
\newcommand{\bra}[1]{\langle#1|}
\newcommand{\tr}{\text{\rm tr}}
\newcommand{\E}{\mathbb{E}}
\newcommand{\Ent}{\text{\rm Ent}}
\newcommand{\HS}{\text{\rm HS}}
\newcommand{\cE}{\mathcal{E}}
\newcommand{\cD}{\mathcal{D}}
\newcommand{\cB}{\mathcal{B}}
\newcommand{\cH}{\mathcal{H}}
\newcommand{\cK}{\mathcal{K}}
\newcommand{\cI}{\mathcal{I}}
\newcommand{\dd}{\text{\rm{d}}}
\newcommand{\cL}{\mathcal{L}}
\newcommand{\cP}{\mathcal{P}}
\newcommand{\Var}{\text{\rm{Var}}}
\newcommand{\id}{\mathcal{I}}
\def\e{\mathrm{e}}
\def\eps{\varepsilon}
\def\II{\mathbb I}
\def\RR{\mathbb{R}}
\title{Quantum reverse hypercontractivity: its tensorization and application to strong converses}
\author[1]{Salman Beigi}
\affil[1]{\it \small School of Mathematics, Institute for Research in Fundamental Sciences (IPM), Tehran, Iran}
\author[2,3]{Nilanjana Datta}
\affil[2]{\it \small Department of Applied Mathematics and Theoretical Physics, Cambridge, UK}
\author[3]{Cambyse Rouz\'{e}}
\affil[3]{\it \small Statistical Laboratory, Centre for Mathematical Sciences, University of Cambridge, Cambridge, UK}
\begin{document}

\maketitle

\begin{abstract}
In this paper we develop the theory of quantum reverse hypercontractivity inequalities and show how they can be derived from log-Sobolev inequalities. Next we prove a generalization of the Stroock-Varopoulos inequality in the non-commutative setting which allows us to derive quantum hypercontractivity and reverse hypercontractivity  inequalities solely from $2$-log-Sobolev and $1$-log-Sobolev inequalities respectively. 
We then prove some tensorization-type results providing us with tools to prove hypercontractivity and reverse hypercontractivity not only for certain quantum superoperators but also for their tensor powers. Finally as an application of these results, we generalize a recent technique for proving strong converse bounds in information theory via reverse hypercontractivity inequalities to the quantum setting. We prove strong converse bounds for the problems of quantum hypothesis testing and classical-quantum channel coding based on the quantum reverse hypercontractivity inequalities that we derive. 

\end{abstract}

%******************************************************************************
\section{Introduction}

Let $\{T_t:\, t\geq 0\}$ be a continuous semigroup of stochastic maps (a Markov semigroup) with a unique stationary distribution $\pi$. 
Defining the $p$-norm, for $p\geq 1$, of a function $f$ by $\|f\|_p:=(\E |f|^p)^{1/p}$, where the expectation is with respect to $\pi$, a simple convexity-type argument verifies that 
$\|T_tf\|_p\leq \|f\|_p$.  That is, $T_t$, for all $t\geq 0$, is a contraction under $p$-norms. Since  $p\mapsto \|f\|_p$ is non-decreasing, a stronger contractivity inequality is the following: 
\begin{align}\label{eq:T-HC1}
\|T_tf\|_p\leq \|f\|_q,
\end{align}
for $1\leq q\le p$ and $t=t(p)$ an {increasing} function of $p$ satisfying $t(q)=0$. Thus an inequality of this form is called a \emph{hypercontractivity inequality}. Since $T_0$ equals the identity map, the inequality~\eqref{eq:T-HC1} for $p=q$ reduces to an equality. Thus its infinitesimal version around $t=0$ must also hold. This infinitesimal version is derived from the derivative of the left hand side of~\eqref{eq:T-HC1} and is called a \emph{$q$-log-Sobolev inequality}\footnote{For sake of brevity, we refrain from defining the phrases shown in italics throughout this introduction. Please refer to the main text and references therein for details}. Such an inequality involves two quantities: the \emph{entropy function} and the \emph{Dirichlet form}. A log-Sobolev inequality guarantees the existence of a positive constant, called a \emph{log-Sobolev constant}, up to which the entropy function is dominated by the Dirichlet form. Not only can one derive log-Sobolev inequalities from hypercontractivity ones, but a collection of the former inequalities can also be used to prove hypercontractivity inequalities through integration. Thus log-Sobolev inequalities and hypercontractivity inequalities are essentially equivalent. 

A fundamental tool in the theory of log-Sobolev inequalities is the \emph{Stroock-Varopoulos inequality}. This inequality enables us to compare the Dirichlet forms associated to different values of $q$, using which a log-Sobolev inequality for $q=2$ can be used to derive a log-Sobolev inequality for any $q$. Indeed, the Stroock-Varopoulos inequality allows us to derive a collection of log-Sobolev inequalities from a single one, from which hypercontractivity inequalities can be proven by integration. 

Hypercontractivity inequalities were first studied in the context of quantum field theory~\cite{G75b, N66, SH-K72}, but later found several important applications in different areas of mathematics, e.g., concentration of measure inequalities~\cite{BLM13, RS13}, transportation cost inequalities~\cite{GL10}, estimating the mixing times~\cite{DSC96}, analysis of Boolean functions~\cite{deWolf08} and information theory~\cite{AG76, KA12}.    
One of the main ingredients of most of these applications is the so called \emph{tensorization property}. It states that the hypercontractivity inequality
$$\|T_t^{\otimes n} f\|_p\leq \|f\|_q,$$
is satisfied for every $n\geq 1$ if and only if it holds for $n=1$. That is, the hypercontractivity of $T_t$ is equivalent to the hypercontractivity of its tensor powers. Proof of the tensorization property is not hard, and can be obtained using the multiplicativity of the operator $(q\to p)$-norm. Another proof, based on the equivalence of log-Sobolev and hypercontractivity inequalities, uses chain rule and the subadditivity of the entropy function.

Hypercontractivity inequalities can also be studied for $ p, q<1$. Although $\|\cdot\|_p$ for $p<1$ is not a norm, it satisfies the \emph{reverse Minkowski inequality} from which one can show that $\|T_tf\|_p\geq \|f\|_p$ when $p<1$. Thus it is natural to consider inequalities of the form~\eqref{eq:T-HC1} for $ p, q<1$ in the reverse direction. Such inequalities are called \emph{reverse hypercontractivity inequalities}. The theory of log-Sobolev inequalities for the range of $q<1$ is developed  similarly and can be used for proving reverse hypercontractivity inequalities as well~\cite{MOS12}.

\paragraph{Quantum hypercontractivity inequalities:} 
The theory of hypercontractivity and log-Sobolev inequalities in the quantum  (non-commutative) case has been developed  by Olkiewicz and Zegarlinski~\cite{OZ99}. Here the semigroup of stochastic maps is replaced by  
a semigroup of quantum superoperators (QMS) representing the time evolution of an open quantum system under the Markovian approximation in the Heisenberg picture.   
Kastoryano and Temme in~\cite{KT13} used log-Sobolev inequalities to estimate the mixing time of quantum Markov semigroups. The study of quantum reverse hypercontractivity was initiated in~\cite{CMT15}, where following~\cite{MOS12} some applications were discussed. For other applications of hypercontractivity inequalities in quantum information theory see~\cite{M12, DB14, MSFW}.

Due to the non-commutative features of quantum physics, hypercontractivity and log-Sobolev inequalities in the quantum case are much more complicated.  
Therefore, despite the apparent analogy with the classical (i.e.~commutative) case, several complications arise. In particular, one of the main drawbacks of the theory in the non-commutative case is the lack of a general quantum Stroock-Varopoulos inequality. {As mentioned above, such an inequality would allow one to derive hypercontractivity inequalities solely from a $2$-log-Sobolev inequality.} {Special cases of the quantum Stroock-Varopoulos inequality}, called \emph{regularity} and \emph{strong regularity} properties, were considered in the literature and proved for certain examples~\cite{OZ99, KT13}. The most general result in this direction is a proof of the strong regularity property for a wide class of quantum Markov semigroups obtained in~\cite{BarEID17}.

Even more problematic is the issue of tensorization. As mentioned before, the proof of the tensorization property in the commutative case is quite easy and can be done with at least two methods, yet none of them generalize to the non-commutative case; {(i) The superoperator norm is not multiplicative in general, and (ii) one cannot interpret the quantum conditional entropy as an average of an entropic quantity over a smaller system, which is a crucial aspect of the proof in the classical setting}. Thus far, the tensorization property has been proven only for a few special examples of quantum Markov semigroups. In particular, it was proven for the \emph{qubit} depolarizing semigroup in~\cite{MO10, KT13} and is generalized for all unital qubit semigroups in~\cite{King14}. Moreover, {in~\cite{TPK14, MSFW}} some techniques were developed for bounding the log-Sobolev constants associated to the tensor powers of quantum Markov semigroups, which can be considered as an intermediate resolution of the tensorization problem.  {We also refer to ~\cite{bardet2018hypercontractivity,BK16} for the theory of hypercontractivity and log-Sobolev inequalities for completely bounded norms. }

\subsection{Our results}
In this paper we first develop the theory of quantum \emph{reverse} hypercontractivity inequalities beyond the unital case. This is done almost in a manner analogous to the (forward) hypercontractivity inequalities. Here, in contrast to~\cite{OZ99, KT13}, we need to use different normalizations for the entropy function as well as the Dirichlet form to make them non-negative even for parameters $p<1$. Our results in this part are summarized in Theorem~\ref{thm:gen-hyper}.

Our next result is a quantum Stroock-Varopoulos inequality for both the forward and reverse cases. We prove this inequality under the assumption of \emph{strong reversibility} of the QMS. We provide two proofs for the quantum Stroock-Varopoulos inequality. The first proof is based on ideas in~\cite{CM16} {and \cite{OZ99}}. The second proof is based on ideas in~\cite{BarEID17} in which the strong regularity is proven under the same assumption. Indeed, our quantum Stroock-Varopoulos inequality is a generalization of the strong regularity property established in~\cite{BarEID17}. Theorem~\ref{thm:QSVineq} states our result in this part.

We then prove some tensorization-type results. The first one, Theorem~\ref{thm:tensorization-alpha-1}, provides a uniform bound on the $1$-log-Sobolev constant of  \emph{generalized depolarizing semigroups} and their tensor powers. The proof of this result is a generalization of the proof of a similar result in the classical case~\cite{MOS12}. This tensorization result together with our Stroock-Varopoulos inequality gives a reverse hypercontractivity inequality which is used in the subsequent section. The second tensorization result, Theorem~\ref{thm:tensorization-alpha-2}, shows that the $2$-log-Sobolev constant of the $n$-fold tensor power of a \emph{qubit} generalized depolarizing semigroup is independent of $n$. Next, in Theorem~\ref{thm:LSC-2-simple} we explicitly compute this $2$-log-Sobolev constant. Finally, in Corollary~\ref{cor:2-LSC-general} we use these results to establish a uniform bound on the $2$-log-Sobolev constant of \emph{any} qubit quantum Markov semigroup and its tensor powers. We note that the latter bound improves over the bounds provided in~\cite{TPK14}.

Let us briefly explain the ideas behind the latter tensorization results. Previously, Theorem~\ref{thm:tensorization-alpha-2} was known in the unital case (the usual depolarizing semigroup), the proof of which was based on an inequality on the norms of a $2\times 2$ block matrix and its submatrices from~\cite{King03}. Our proof of Theorem~\ref{thm:tensorization-alpha-2} is based on the same inequality. First in Lemma~\ref{lem:entropy-inequality} we derive an infinitesimal version of that inequality in terms of the entropies of a $2\times 2$ block matrix and its submatrices, and then use it to prove Theorem~\ref{thm:tensorization-alpha-2}. To prove Theorem~\ref{thm:LSC-2-simple} we need to show that a certain function of qubit density matrices  is optimized over diagonal ones. Once we show this, the explicit expression for the $2$-log-Sobolev constant is obtained from the associated classical log-Sobolev constant derived in~\cite{DSC96}. Finally, Corollary~\ref{cor:2-LSC-general} is a quantum generalization of a classical result from~\cite{DSC96} with an essentially similar proof except that we should take care of tensorization separately.

Finally, we apply the quantum reverse hypercontractivity in proving strong converse bounds for the tasks of quantum hypothesis testing and classical-quantum channel coding. In the next section, we briefly explain the key idea behind the application of reverse hypercontractivity to the problem of classical hypothesis testing.

\subsection{Application to hypothesis testing problem}
Recently, the authors of~\cite{LHV17} introduced a new technique to prove strong converse results in information theory using reverse hypercontractivity inequalities. In the following we briefly explain the ideas via the problem of hypothesis testing.  

Suppose that $n$ samples  independently drawn from a probability distribution on some sample space $\Omega$ are provided, and the task is to distinguish between two possible hypotheses which are given by the distributions $P$ and $Q$ on $\Omega$. In this setting, we apply a test function\footnote{The test could be probabilistic, but for simplicity of presentation we restrict to deterministic tests.} $f:\Omega^n\to \{0, 1\}$ to make the decision; Letting $(x_1, \dots, x_n)\in \Omega^n$ be the observed samples, if $f(x_1, \dots, x_n)$ equals $1$, we infer the hypothesis to be $P$, and otherwise infer it to be $Q$. The following two types of error may occur: the error of Type~I of wrongly inferring the distribution to be $Q$ given by $\alpha_n(f):=P^{\otimes n}(f=0)$, and the error of Type~II of wrongly inferring the distribution to be $P$ given by $\beta_n(f):=Q^{\otimes n}(f=1)$. In the \emph{asymmetric} regime, we further assume that $\alpha_n(f)$ is uniformly bounded by some fixed error $\eps\in (0,1)$, and we are interested in the smallest possible achievable error $\beta_n(f)$.  

The idea in~\cite{LHV17} is to use the following variational formula for the relative entropy between $P$ and $Q$ (see, e.g.,~\cite{RS13}):
\begin{align}\label{eq5}
nD(P\|Q)=D(P^{\otimes n}\| Q^{\otimes n})=\sup_{g>0} \E_{P^{\otimes n}}[\log g]-\log \E_{Q^{\otimes n}}[g], 
\end{align}
where $\E_{P^{\otimes n}}$ stands for the expectation with respect to the distribution $P^{\otimes n}$, and the maximum is over functions $g$ on $\Omega^n$. This formula is indeed used for $g$ being a \emph{noisy version} of $f$. To get this noisy version  a Markov semigroup is employed. 

For any function $h:\Omega\to \mathbb R$ define
\begin{align}\label{eq:Tt-simple}
	T_t(h):= \e^{-t}h +(1-\e^{-t}) \E_P[h],
	\end{align}
	{These maps define a classical version of the generalized quantum depolarizing semigroup (see \Cref{eq:def-Phi}). }
That is, for every $x\in \Omega$, we have $T_t(h)(x) = \e^{-t}h(x) + (1-\e^{-t}) \E_P[h]$. Then $\{T_t:\, t\geq 0\}$ forms a semigroup that satisfies the following reverse hypercontractivity inequality~\cite{MOS12}:
\begin{align}\label{eq1}
	\|T_t(h)\|_{q}\ge \|h\|_{p},   \qquad \quad \forall p, q, t , \quad 0\leq q<p<1,~~~ t\ge \log\left(  \frac{1-q}{1-p}\right),
\end{align}	
where the norms are defined with respect to the distribution $P$, i.e., $\|h\|_p = \big( \E_P[|h|^p] \big)^{1/p}$.
Now the idea is to use~\eqref{eq5} for $g=T_t^{\otimes n} f$ as follows:
\begin{align}\label{eq:nD-Tt-f}
nD(P\| Q) \geq \E_{P^{\otimes n}}[\log T_t^{\otimes n}f]-\log \E_{Q^{\otimes n}}[T_t^{\otimes n} f].
\end{align}  
Bounding the second term on the right hand side is easy. Letting $\gamma=\left\|  \frac{dP}{dQ} \right\|_\infty$ we have
\begin{align}
	\E_{Q^{\otimes n}}[T_t^{\otimes n}(f)]&=\E_{Q^{\otimes n}}\big[\big(\e^{-t}+(1-\e^{-t})\E_P  \big)^{\otimes n}f  \big]\nonumber\\
	&\le \E_{Q^{\otimes n}}\big[\big(\e^{-t}+\gamma(1-\e^{-t})\E_Q  \big)^{\otimes n}f  \big]\nonumber\\
	&= \left(\e^{-t}+\gamma(1-\e^{-t})\right)^n \E_{Q^{\otimes n}} [f]\nonumber\\
	&= \left(\e^{-t}+\gamma(1-\e^{-t})\right)^n \beta_n(f)\nonumber\\
	&\le \e^{\left(\gamma-1\right)nt} \beta_n(f),\label{eq7}
\end{align}	
where the last inequality follows from $\e^{\gamma t}-1\ge \gamma(\e^t-1)$ for $\gamma\geq 1$.

Now we need to bound the first term in terms of $\alpha_n(f)$. The crucial observation here is that 
\begin{align}\label{eq:norm-0-ln}
\|h\|_0 = \lim_{r\rightarrow 0} \|h\|_r = \e^{\E_{{P}}[\log |h|]}.
\end{align}
It is then natural to use the reverse hypercontractivity inequality~\eqref{eq1} for $q=0$. In fact, using the tensorization property, that~\eqref{eq1} also holds for $T_t^{\otimes n}$, we have
\begin{align}
	\E_{P^{\otimes n}}[\log T_t f]&=\log \|T_t^{\otimes n}(f)\|_{0}\nonumber\\
	&\ge \log \|f\|_{1-\e^{-t}}\nonumber\\
	&\ge\frac{1}{1-\e^{-t}}\log \E_{P^{\otimes n}}[f]\nonumber\\
	&\ge \left(\frac{1}{t}+1\right)\log (1-\alpha_n(f)),\label{eq6}
\end{align}	
where the second line follows from the reverse hypercontractivity inequality, the third line follows from the fact that $T_t^{\otimes n}(f)$ takes values in $[0,1]$, and the last line follows from $\e^{-t}\ge 1-t$. 
Now using~\eqref{eq7} and~\eqref{eq6} in~\eqref{eq:nD-Tt-f}, using $\alpha_n(f)\leq \eps$ and optimizing over the choice of $t> 0$ we arrive at
\begin{align}\label{eq8}
	\beta_n(f)\ge (1-\eps)\e^{-nD(P\|Q) -2\sqrt{n \left\| \frac{dP}{dQ} \right\|_\infty  \log\frac{1}{1-\eps} } }.
\end{align}

In the present work, we show that the above analysis can be carried over to the quantum setting. Let us explain the similarities with the classical case as well as difficulties we face in doing this. Firstly, a variational expression for the quantum relative entropy similar to~\eqref{eq5} is already known~\cite{Petz88}. Secondly, the semigroup~\eqref{eq:Tt-simple} is easily generalized to the generalized depolarizing semigroup in the quantum case. Thirdly, the reverse hypercontractivity inequality~\eqref{eq1} is derived in the quantum case from our theory of quantum reverse hypercontractivity as well as our quantum Stroock-Varopoulos inequality. However we need this inequality in its $n$-fold tensor product form, for which we use our tensorization-type result.  Also, generalizing the computations in~\eqref{eq7} to the quantum case is straightforward. Nevertheless, we face a  problem in the next step; The crucial identity~\eqref{eq:norm-0-ln} no longer holds in the non-commutative case. {Indeed, as far as we know, non-commutative $L_p$-norms do not possess a closed expression in the limit $p\to0$.} To get around this problem, instead of a variational formula similar to~\eqref{eq5}, we use our quantum reverse hypercontractivity inequality together with a variational formula for $p$-norms (obtained from the \emph{reverse H\"older inequality}). Then we derive an inequality of the form~\eqref{eq8} by taking an appropriate limit. 

Section~\ref{sec6} contains our results on applications of reverse hypercontractivity inequalities to strong converse of the quantum hypothesis testing as well as the classical-quantum channel coding problems.

%******************************************************************************
\section{Notations}\label{sec:notation}
For a Hilbert space $\cH$, the algebra of (bounded) linear operators acting on $\cH$ is denoted by $\cB(\cH)$. The adjoint of $X\in \cB(\cH)$ is denoted by $X^\dagger$ and $$|X|:=\sqrt{X^\dagger X}.$$
The subspace of self-adjoint operators is denoted by $\cB_{sa}(\cH) \subset \cB(\cH)$. When $X\in \cB_{sa}(\cH)$ is positive semi-definite (positive definite) we represent it by $X\geq 0$ ($X> 0$). We
let $\cP(\cH)$ be the cone of positive semi-definite operators on $\cH$ and $\cP_{+}(\cH) \subset  \cP(\cH)$ the set of (strictly) positive operators. Further, let $\cD(\cH):=\lbrace\rho\in\cP(\cH)\mid \tr\rho=1\rbrace$ denote the set of density operators (or states) on $\cH$, and $\cD_+(\cH):=\cD(\cH)\cap \cP_+(\cH)$ denote the subset of \emph{faithful} states. We denote the support of an operator $A$ by ${\mathrm{supp}}(A)$. We let $\mathbb{I}\in\cB(\cH)$ be the identity operator on $\cH$, and $\id:\cB(\cH)\mapsto \cB(\cH)$ be the identity superoperator acting on~$\cB(\cH)$.

We sometimes deal with tensor products of Hilbert spaces. In this case, 
in order to keep track of subsystems, it is appropriate to label the Hilbert spaces as $\cH_A, \cH_B$ etc. We also denote $\cH_A\otimes \cH_B$ by $\cH_{AB}$. Then the subscript in $X_{AB}$ indicates that it belongs to $\cB(\cH_{AB})$. We also use  
$\cH^{\otimes n} = \cH_{A_1}\otimes \cdots \otimes \cH_{A_n}$ where $\cH_{A_i}$'s are isomorphic Hilbert spaces. Moreover, for any $S\subseteq \{1, \dots, n\}$ we use the shorthand notations  $A_S{=A^S=\{A_j: \, j\in S \}}$, and $\cH_{A_S}$ for $\bigotimes_{j\in S}\cH_{A_j}$. We also identify $A_{\{1, \dots, n\}}$ with $A^n$.

A superoperator $\Phi:\cB(\cH)\rightarrow \cB(\cH)$ is called \emph{positive} if $\Phi(X)\geq 0$ whenever $X\geq 0$. It is called \emph{completely positive} if $\id\otimes \Phi$ is positive where $\id:\cB(\cH')\rightarrow \cB(\cH')$ is the identity superoperator associated to an arbitrary Hilbert space $\cH'$. 
Observe that a positive superoperator $\Phi$ is hermitian-preserving meaning that $\Phi(X^\dagger) =\Phi(X)^\dagger$.
A superoperator is called \emph{unital} if $\Phi(\II)=\II$, and is called trace-preserving if $\tr \,\Phi(X)=\tr X$ for all $X$. The adjoint of $\Phi$, denoted by $\Phi^*$ is defined with respect to the Hilbert-Schmidt inner product:
\begin{align}\label{eq:adj-HS}
\tr\left(X^\dagger\Phi(Y)\right) = \tr\left(\Phi^*(X)^\dagger Y\right).
\end{align}
Note that the adjoint of a unital map is trace-preserving and vice versa.

%******************************************************************************
\subsection{Non-commutative weighted $L_p$-spaces}
Throughout the paper we fix $\sigma\in \mathcal D_+(\cH)$ to be a positive definite density matrix. We define 
$$\Gamma_\sigma(X):= \sigma^{\frac12}X\sigma^{\frac12}.$$
Then $\cB(\cH)$ is equipped with the inner product
$$\langle X, Y\rangle_\sigma:= \tr\left(X^\dagger \Gamma_\sigma(Y)\right)= \tr\left(\Gamma_\sigma(X^\dagger)Y\right).$$ 
Note that if $X, Y\geq 0$ then $\langle X, Y\rangle_\sigma\geq 0$. 
This inner product induces a norm on~$\cB(\cH)$:
\begin{align}\label{eq:2-norm-0}
\|X\|_{2, \sigma} := \sqrt{\langle X, X\rangle_\sigma}.
\end{align}
This $2$-norm can be generalized for other values of $p$. 
For every $p\in \mathbb R\setminus\{0\}$ we define

\begin{align}\label{normp}
	\|X\|_{p, \sigma}:= \tr\left[\big|\Gamma_\sigma^{\frac1p}(X)\big|^p\right]^{\frac1p} = \tr\left[\big|\sigma^{\frac1{2p}}X\sigma^{\frac1{2p}}\big|^p\right]^{\frac1{p}}\equiv \big\|\Gamma^{\frac{1}{p}}_\sigma(X)\big\|_p  ,
	\end{align}
where 
$$\|X\|_p:=\left( \tr\,|X|^p\right)^{1/p},$$ 
denotes the (generalized) Schatten norm of order $p$. In particular, if $X> 0$ then  $\|X\|_{p, \sigma}^p=\tr\big[\Gamma_\sigma^{1/p}(X)^p\big]$. Note that this definition reduces to~\eqref{eq:2-norm-0} when $p=2$.
The values of $\|X\|_{p, \sigma}$ for $p\in \{0, \pm\infty\}$ are defined in the limits. {Since the function $p\mapsto \|X\|_{p,\sigma}$ is increasing and bounded below by $0$, by the monotone convergence theorem, the limit $p\to0$ exists but does not have a closed expression, as opposed to the classical setting (cf \Cref{eq:norm-0-ln}).}
Observe {also} that $\|X\|_{p, \sigma} = \|X^{\dagger}\|_{p, \sigma}$ for all $X$. Moreover, $\|\cdot\|_{p, \sigma}$ for $1\leq p\leq \infty$ satisfies the triangle inequality (the Minkowski inequality) and is a norm. The dual of this norm is $\|\cdot\|_{\hat p, \sigma}$ where $\hat p$ is the H\"older conjugate of $p$ given by
\begin{align}\label{eq:holder-conj}
\frac{1}{p} +\frac{1}{\hat p}=1\,,
\end{align}
{where $p>1$, and $\hat{p}=+\infty$ for $p=1$.} We indeed for $1\leq p\leq \infty$ and arbitrary $X$ have~\cite{OZ99} 
\begin{align}\label{eq:holder-1}
 \|X\|_{p, \sigma} = \sup_Y \frac{|\langle X, Y\rangle_\sigma|}{\|Y\|_{\hat p, \sigma}}.
\end{align}
Moreover, for $-\infty< p<1$, $p\neq 0$ and \emph{positive definite} $X$ we have
 \begin{align}\label{eq:holder-2}
\|X\|_{p, \sigma}= \inf_{Y>0} \frac{\langle X, Y\rangle_\sigma}{\|Y\|_{\hat p, \sigma}}\,,
 \end{align}
{where again $\hat p$ is defined via~\eqref{eq:holder-conj}.\footnote{In the case $p=0$, we define $\hat{p}=0$ (see e.g., Definition 1.2 of \cite{MOS12})}. This identity is a consequence of the \emph{reverse H\"older inequality}:
\begin{lemma}[Reverse H\"{o}lder inequality]\label{RHolder}
	Let $X\ge0$ and $Y>0$. Then, for any  $p< 1$ with H\"{o}lder conjugate $\hat{p}$ we have
	\begin{align*}
		\langle X,Y\rangle_\sigma \ge \|X\|_{p,\sigma}\|Y\|_{\hat{p},\sigma}.
	\end{align*}
\end{lemma}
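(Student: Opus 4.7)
The plan is a two-step reduction. First, I would convert the $\sigma$-weighted inequality into an ordinary (unweighted) Schatten reverse Hölder inequality by an explicit change of variables. Second, I would reduce the Schatten statement to the classical scalar reverse Hölder inequality via a pinching argument in the eigenbasis of the strictly positive operand.

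For the first step, set $\tilde X := \Gamma_\sigma^{1/p}(X)$ and $\tilde Y := \Gamma_\sigma^{1/\hat p}(Y)$, so that $\tilde X \ge 0$ and $\tilde Y > 0$. Using $X=X^\dagger$, the cyclicity of the trace, and the identity $\tfrac{1}{2p}+\tfrac{1}{2\hat p}=\tfrac12$ (which is just $1/p+1/\hat p=1$), a direct calculation gives
$$\langle X,Y\rangle_\sigma \;=\; \tr\!\left(\sigma^{1/2}X\sigma^{1/2}Y\right) \;=\; \tr(\tilde X\tilde Y),$$
while by definition~\eqref{normp}, $\|X\|_{p,\sigma}=\|\tilde X\|_p$ and $\|Y\|_{\hat p,\sigma}=\|\tilde Y\|_{\hat p}$. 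The claim thus reduces to the unweighted Schatten inequality $\tr(AB)\ge \|A\|_p\|B\|_{\hat p}$ for $A\ge 0$, $B>0$, $p<1$. By the symmetry $(A,p)\leftrightarrow(B,\hat p)$ (and the trivial case $\|A\|_p=0$ when $p<0$ and $A$ is singular), one may assume $0<p<1$, so that $\hat p<0$.

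For the second step, diagonalise $B=\sum_j b_j\ket{e_j}\bra{e_j}$ with $b_j>0$, and let $\Pi$ be the pinching onto this eigenbasis. Since $B$ is diagonal in $\{\ket{e_j}\}$, one has $\tr(AB)=\sum_j\bra{e_j}A\ket{e_j}\,b_j=\tr(\Pi(A)B)$. By Schur's theorem the diagonal entries $\bra{e_j}A\ket{e_j}$ (the eigenvalues of $\Pi(A)$) are majorised by the eigenvalues of $A$, and since $\lambda\mapsto\sum_i\lambda_i^p$ is Schur-concave on $[0,\infty)^d$ when $0<p<1$, this yields $\|A\|_p\le\|\Pi(A)\|_p$. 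Clearly $\|B\|_{\hat p}=\|\Pi(B)\|_{\hat p}$, so it is enough to prove the inequality for simultaneously diagonal $A$ and $B$.

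In the commuting case, with entries $a_j\ge 0$ and $b_j>0$, the inequality becomes the classical reverse Hölder inequality for sequences. One proves it by writing $a_j^p=(a_jb_j)^p\, b_j^{-p}$ and applying the ordinary Hölder inequality with conjugate exponents $\tfrac{1}{p}>1$ and $\tfrac{1}{1-p}>1$:
$$\sum_j a_j^p \;\le\; \Bigl(\sum_j a_jb_j\Bigr)^{\!p}\Bigl(\sum_j b_j^{-p/(1-p)}\Bigr)^{\!1-p};$$
since $-p/(1-p)=\hat p$ and $(1-p)/p=-1/\hat p$, taking the $1/p$-th root rearranges this precisely into $\sum_j a_jb_j\ge \bigl(\sum_j a_j^p\bigr)^{1/p}\bigl(\sum_j b_j^{\hat p}\bigr)^{1/\hat p}$. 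The main obstacle is the pinching step, and it is precisely where the hypothesis $p<1$ is essential: it simultaneously guarantees Schur-concavity of $t\mapsto\sum t_i^p$ (so $\|A\|_p\le\|\Pi(A)\|_p$, reversing the familiar pinching contraction valid for $p\ge 1$) and reverses the direction of the Hölder inequality we wish to prove, so the two reversals line up.
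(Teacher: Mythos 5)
Your proof is correct, and its first step is exactly the paper's proof: the change of variables $\tilde X=\Gamma_\sigma^{1/p}(X)$, $\tilde Y=\Gamma_\sigma^{1/\hat p}(Y)$, cyclicity of the trace, and $\tfrac{1}{2p}+\tfrac{1}{2\hat p}=\tfrac12$ reduce the weighted claim to the unweighted Schatten inequality $\tr(AB)\geq\|A\|_p\|B\|_{\hat p}$. The paper stops there and simply cites that unweighted inequality (equation (32) of~\cite{TBH14}, also Lemma~5 of~\cite{CMT15}), whereas you go on to reprove it from scratch. Your second step is a clean and correct argument: the symmetry $(A,p)\leftrightarrow(B,\hat p)$ together with the degenerate case $\|A\|_p=0$ reduces matters to $0<p<1$; pinching $A$ in the eigenbasis of $B$ preserves $\tr(AB)$ and $\|B\|_{\hat p}$ and, by Schur--Horn majorisation combined with Schur-concavity of $\lambda\mapsto\sum_i\lambda_i^p$ on $[0,\infty)^d$ for $0<p<1$, can only increase $\|A\|_p$; and the commuting case is the classical reverse H\"older for sequences, obtained from forward H\"older with exponents $1/p$ and $1/(1-p)$. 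So you have essentially the same approach as the paper, but made self-contained; the only cost is the extra page of elementary majorisation bookkeeping, and the benefit is that the reader need not chase the reference for the unweighted base case.
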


\begin{proof}
	The proof is a direct generalization of equation (32) of \cite{TBH14} (see also Lemma 5 of \cite{CMT15}): for any $A\ge 0$ and $B>0$, 
	\begin{align*}
		\tr(AB)\ge \|A\|_p\|B\|_{\hat{p}}.
	\end{align*}
	From there, choosing $A:=\Gamma_{\sigma}^{\frac{1}{p}}(X)$ and $B:=\Gamma_\sigma^{\frac{1}{\hat{p}}}(Y)$, 
	\begin{align*}
		\langle X,Y\rangle_{\sigma}=\tr\big(\sigma^{1/p}X\sigma^{1/p}\sigma^{1/\hat{p}}Y\sigma^{1/\hat{p}}\big)=\tr(AB)\ge \|A\|_p\|B\|_{\hat{p}}=\|X\|_{p,\sigma}\|Y\|_{\hat{p},\sigma}.
	\end{align*}
	
\end{proof}	

Another property of $\|\cdot\|_{p, \sigma}$ for $-\infty\leq p<1$ is the \emph{reverse Minkowski inequality}. As mentioned above, when $p\geq 1$, the triangle inequality is satisfied due to the Minkowski inequality. When $p<1$ we have the inequality in the reverse direction:
$$\|X\|_{p, \sigma} + \|Y\|_{p, \sigma}\leq \|X+Y\|_{p, \sigma}.$$ 
Again this inequality in the special case of $\sigma$ being the completely mixed state is proven in~\cite{CMT15} but the generalization to arbitrary $\sigma$ is immediate.

For arbitrary $p, q$ define the \emph{power operator} by
$$I_{q, p}(X) := \Gamma_\sigma^{-\frac1q }\left( \big|\Gamma_\sigma^{\frac 1 p}(X)\big|^{\frac pq}\right).$$
Here are some immediate properties of the power operator. 

\begin{prop}{\rm\cite{OZ99, KT13}} For all $q,r,p\in(-\infty,\infty)\backslash\{0\}$ and $X\in\cB(\cH)$:
\begin{enumerate}
\item[{\rm(i)}] $\|I_{q, p}(X)\|_{q, \sigma}^q =\|X\|_{p, \sigma}^p$.  In particular we have $\|I_{p, p}(X)\|_{p, \sigma} = \|X\|_{p, \sigma}$.
\item[{\rm(ii)}] $I_{q, r}\circ I_{r, p} = I_{q, p}$.
\item[{\rm(iii)}] For $X\geq 0$ we have $I_{p, p}(X)=X$. 
\end{enumerate}
\label{prop:power}
\end{prop}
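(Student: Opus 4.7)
The plan is to verify each of the three properties by direct computation from the definition
$$I_{q,p}(X) = \Gamma_\sigma^{-1/q}\!\left(\bigl|\Gamma_\sigma^{1/p}(X)\bigr|^{p/q}\right),$$
using only two ingredients: that $\Gamma_\sigma$ is invertible (since $\sigma>0$), so $\Gamma_\sigma^{s}$ makes sense for any real $s$; and the functional-calculus identity $(A^{a})^{b}=A^{ab}$ for $A\geq 0$. Throughout, one should read the formulas as being applied to $X$ in a regime where the indicated fractional power is well-defined (e.g.\ $X>0$ when $p/q<0$, as is standard in the $p<1$ setting used earlier in the excerpt for the reverse H\"older inequality).

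I would start with (iii), which is the shortest. If $X\geq 0$, then $\Gamma_\sigma^{1/p}(X) = \sigma^{1/(2p)}X\sigma^{1/(2p)}\geq 0$, so the absolute value is redundant and the exponent $p/p=1$ is trivial; applying $\Gamma_\sigma^{-1/p}$ recovers $X$.

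For (i), the key observation is that applying $\Gamma_\sigma^{1/q}$ to $I_{q,p}(X)$ cancels the outer $\Gamma_\sigma^{-1/q}$ to leave the \emph{positive} operator $\bigl|\Gamma_\sigma^{1/p}(X)\bigr|^{p/q}$. Its absolute value is therefore itself, and raising to the $q$-th power yields $\bigl|\Gamma_\sigma^{1/p}(X)\bigr|^{p}$, whose trace is $\|X\|_{p,\sigma}^{p}$ by the definition \eqref{normp}. Specializing to $p=q$ gives the second assertion of (i). For (ii), I would set $Y:=I_{r,p}(X)$, whence $\Gamma_\sigma^{1/r}(Y)=\bigl|\Gamma_\sigma^{1/p}(X)\bigr|^{p/r}\geq 0$; then $\bigl|\Gamma_\sigma^{1/r}(Y)\bigr|^{r/q}=\bigl|\Gamma_\sigma^{1/p}(X)\bigr|^{p/q}$ by the power law for positive operators, and applying $\Gamma_\sigma^{-1/q}$ reproduces $I_{q,p}(X)$.

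There is essentially no obstacle: all three parts are book-keeping with the functional calculus, once one notices that every intermediate expression of the form $\bigl|\Gamma_\sigma^{1/p}(X)\bigr|^{s}$ is positive and so commutes freely with taking further powers. The only mild care required is to ensure, when exponents $p/q$ or $r/q$ are negative or non-integer, that the corresponding fractional powers are defined on the spectrum of $\Gamma_\sigma^{1/p}(X)$; this is exactly the invertibility assumption implicit in the paper's use of $X>0$ in the regime $p<1$.
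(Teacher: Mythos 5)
Your verification is correct, and all three parts follow by exactly the direct functional-calculus bookkeeping you describe; the paper itself gives no proof, merely citing~\cite{OZ99, KT13}, so your line-by-line check from the definitions is the natural and standard argument. The one point you rightly flag — that negative or non-integer exponents $p/q$, $r/q$ require $\Gamma_\sigma^{1/p}(X)$ to be invertible, hence $X>0$ — matches the convention the paper adopts throughout for the $p<1$ regime.
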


%******************************************************************************
\subsection{Entropy}
For a given $\sigma\in \cD_+(\cH)$ and arbitrary $p\neq 0$ we define the~\emph{entropy} function\footnote{Our entropy function here is different from the one in~\cite{KT13} by a factor of $p$. This modification ensures us that if $X$ and $\sigma$ commute, we get the usual entropy function in the classical case. Moreover, this extra factor makes the entropy function non-negative even for $p<0$.} for $X> 0$ by
$$\Ent_{p, \sigma}(X):= \tr\Big[   
\big(\Gamma_\sigma^{\frac 1p}(X)\big)^p\cdot \log \big(\Gamma_\sigma^{\frac 1p}(X)\big)^p    \Big] -\tr\Big[\big(\Gamma_\sigma^{\frac 1p}(X)\big)^p\cdot \log \sigma\Big]- \|X\|_{p, \sigma}^p\cdot \log \|X\|_{p, \sigma}^p.$$
As usual, the entropy function for $p\in \{0, \pm\infty\}$ is defined in the limit. 

\begin{remark}
	When $p> 0$, in the definition of the entropy we can take $X$ to be positive semi-definite. However, when $p<0$, we need to consider $X$ to be positive definite in order to avoid difficulties. For this reason, in the rest of the paper we state our definitions and results for positive definite $X$, keeping in mind that when $p, q>0 $ they can easily be generalized to positive semi-definite $X$ (say, by taking an appropriate limit).
\end{remark}

The significance of the entropy function comes from its relation to the derivative of the $p$-norm.

\begin{prop}{\rm\cite{OZ99, KT13}}
 For a differentiable operator valued function $p\mapsto X_p$ we have, for any $p\in\RR\backslash\{0\}$:
$$\frac{\dd}{\dd p}\|X_p\|_{p,\sigma} = \frac{1}{p^2}\|X_p\|_{p, \sigma}^{1-p}\cdot \left( \frac{1}{2}\Ent_{p, \sigma}\big(I_{p, p}(X_p)\big) +\frac{1}{2}\Ent_{p, \sigma}\big(I_{p, p}(X_p^{\dagger})\big)  + \gamma   \right).$$
Here $\gamma$ is given by 
$$\gamma=\frac{p^2}{2}\left( \tr\Big[\Gamma_\sigma^{\frac 1p}(Z_p^{\dagger})\cdot \Gamma_\sigma^{\frac 1 p}(X_p)\cdot \big| \Gamma_\sigma^{\frac 1p}(X_p) \big|^{p-2}\Big] +\tr\Big[\Gamma_\sigma^{\frac 1p}(X_p^{\dagger})\cdot \Gamma_\sigma^{\frac 1 p}(Z_p)\cdot \big| \Gamma_\sigma^{\frac 1p}(X_p) \big|^{p-2}\Big]\right),$$
where 
$Z_p := \frac{\dd}{\dd p}X_p$.
\label{prop:norm-derivative}
\end{prop}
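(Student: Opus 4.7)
The plan is to differentiate the identity $\|X_p\|_{p,\sigma}^p = \tr[|A_p|^p]$ where $A_p:=\Gamma_\sigma^{1/p}(X_p)=\sigma^{1/(2p)}X_p\sigma^{1/(2p)}$, and then translate back to a derivative of $\|X_p\|_{p,\sigma}$. Introducing $M_p:=A_p^\dagger A_p$, so that $|A_p|^p=M_p^{p/2}$, reduces the problem to differentiating the trace of a function of a self-adjoint positive operator. A logarithmic-derivative step produces
\begin{align*}
\frac{\dd}{\dd p}\|X_p\|_{p,\sigma} = \frac{1}{p^2}\,\|X_p\|_{p,\sigma}^{1-p}\Bigl(\,p\cdot\tfrac{\dd}{\dd p}\tr[M_p^{p/2}]\;-\;\|X_p\|_{p,\sigma}^p\log\|X_p\|_{p,\sigma}^p\Bigr),
\end{align*}
which already accounts for the $-\|X_p\|_{p,\sigma}^p\log\|X_p\|_{p,\sigma}^p$ piece of the entropy. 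It remains to show that $p\cdot\frac{\dd}{\dd p}\tr[M_p^{p/2}]$ yields the two remaining pieces of the symmetrized entropy plus $\gamma$.

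Next I would apply the product-type identity
\begin{align*}
\tfrac{\dd}{\dd p}\tr[M_p^{p/2}] \;=\; \tfrac{1}{2}\tr\bigl[M_p^{p/2}\log M_p\bigr] \;+\; \tfrac{p}{2}\tr\bigl[M_p^{p/2-1}\dot M_p\bigr],
\end{align*}
which follows from the eigendecomposition of $M_p$ and cyclicity of the trace (so that variations of eigenvectors cancel). The first term, after multiplication by $p$, equals $\tr[|A_p|^p\log|A_p|^p]$; since $A_p$ and $A_p^\dagger$ have identical singular values, this equals $\tr[|A_p^\dagger|^p\log|A_p^\dagger|^p]$ as well, so it may be rewritten as the symmetric average $\tfrac12\bigl(\tr[|A_p|^p\log|A_p|^p]+\tr[|A_p^\dagger|^p\log|A_p^\dagger|^p]\bigr)$, which is the first piece of $\tfrac12\Ent_{p,\sigma}(I_{p,p}(X_p))+\tfrac12\Ent_{p,\sigma}(I_{p,p}(X_p^\dagger))$. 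For the second term, expand $\dot A_p = \Gamma_\sigma^{1/p}(Z_p) - \tfrac{1}{2p^2}(LA_p+A_pL)$ with $L:=\log\sigma$, using $\frac{\dd}{\dd p}\sigma^{1/(2p)}=-\tfrac{1}{2p^2}\sigma^{1/(2p)}\log\sigma$; the part of $\dot M_p=\dot A_p^\dagger A_p+A_p^\dagger\dot A_p$ coming from $\Gamma_\sigma^{1/p}(Z_p)$ assembles, by cyclicity of the trace, into precisely the expression $\gamma$.

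The main obstacle is the contribution of $LA_p+A_pL$ to $\tfrac{p^2}{2}\tr[M_p^{p/2-1}\dot M_p]$. After distributing one encounters two kinds of terms: the harmless $\tr[M_p^{p/2}L]=\tr[|A_p|^pL]$, and the less transparent $\tr[M_p^{p/2-1}A_p^\dagger LA_p]$. To handle the latter I would invoke the intertwining identity $A_p\,\phi(A_p^\dagger A_p)=\phi(A_pA_p^\dagger)\,A_p$, valid for any continuous $\phi$ via the polar decomposition of $A_p$, to obtain $A_pM_p^{p/2-1}A_p^\dagger=(A_pA_p^\dagger)^{p/2}=|A_p^\dagger|^p$; cyclicity then turns the awkward trace into $\tr[L\,|A_p^\dagger|^p]$. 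Combining, the total $LA_p+A_pL$ contribution is $-\tfrac12\tr\bigl[(|A_p|^p+|A_p^\dagger|^p)\log\sigma\bigr]$. Recognizing $\Gamma_\sigma^{1/p}(I_{p,p}(X_p))=|A_p|$ and $\Gamma_\sigma^{1/p}(I_{p,p}(X_p^\dagger))=|A_p^\dagger|$, and using $\|X_p^\dagger\|_{p,\sigma}=\|X_p\|_{p,\sigma}$, this is exactly the symmetric average of the $-\tr[(\Gamma_\sigma^{1/p}(X))^p\log\sigma]$ pieces of the two entropies. Assembling these three contributions yields the claimed formula. The asymmetry of $A_p$ in the non-commutative setting is what forces the symmetrization over $X_p$ and $X_p^\dagger$; in the self-adjoint case $|A_p|=|A_p^\dagger|$ and the two entropies coincide, so one recovers a single-entropy formula as in the classical theory.
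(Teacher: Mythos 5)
Your proof is correct, and since the paper simply cites this proposition to~\cite{OZ99, KT13} without reproducing a proof, there is no in-paper argument to compare against. Your derivation is the standard one: differentiate $\log\tr[M_p^{p/2}]$ with $M_p=A_p^\dagger A_p$ and $A_p=\Gamma_\sigma^{1/p}(X_p)$, split the $p$-derivative of $\tr[M_p^{p/2}]$ into the explicit exponent term $\tfrac12\tr[M_p^{p/2}\log M_p]$ and the chain-rule term $\tfrac p2\tr[M_p^{p/2-1}\dot M_p]$ (the Hellmann--Feynman/Gâteaux formula), then expand $\dot A_p=\Gamma_\sigma^{1/p}(Z_p)-\tfrac{1}{2p^2}(LA_p+A_pL)$ with $L=\log\sigma$. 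The $Z_p$-part assembles by cyclicity into $\gamma$; the intertwining identity $A_p\phi(A_p^\dagger A_p)=\phi(A_pA_p^\dagger)A_p$ correctly converts $\tr[M_p^{p/2-1}A_p^\dagger L A_p]$ into $\tr[|A_p^\dagger|^p L]$, giving $-\tfrac12\tr[(|A_p|^p+|A_p^\dagger|^p)\log\sigma]$; and the equality $\tr[|A_p|^p\log|A_p|^p]=\tr[|A_p^\dagger|^p\log|A_p^\dagger|^p]$ (same nonzero singular values) lets you write the first piece as the symmetric average, so all three blocks of the two entropy functions are reproduced exactly and the leftover is $\gamma$. (The only implicit hypothesis, shared by the proposition itself, is that $A_p$ is invertible so that $|A_p|^{p-2}$ is well defined for $p<2$.)
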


We will be using two special cases of this proposition. First, if $X_p> 0$ for all $p$, we have 
$$\frac{\dd}{\dd p}\|X_p\|_{p,\sigma} = \frac{1}{p^2}\|X_p\|_{p, \sigma}^{1-p}\cdot \left( \Ent_{p, \sigma}(X_p)   + p^2\tr\Big[  \Gamma_\sigma^{\frac 1p}(Z_p) \cdot \Gamma_\sigma^{\frac 1p} (X_p)^{p-1} \Big]   \right).$$
Second, if $X_p=X$ is independent of $p$ we have
\begin{align}\label{eq:norm-derivative-ind}
\frac{\dd}{\dd p}\|X\|_{p,\sigma} = \frac{1}{p^2}\|X\|_{p, \sigma}^{1-p}\cdot \left( \frac{1}{2}\Ent_{p, \sigma}\big(I_{p, p}(X)\big) +\frac{1}{2}\Ent_{p, \sigma}\big(I_{p, p}(X^{\dagger})\big)    \right).
\end{align}

We will also use the following properties of the entropy function that are easy to verify. 

\begin{prop}{\rm \cite{KT13}}
\begin{enumerate}
\item[{\rm (i)}] $\Ent_{p, \sigma}(I_{p, 2}(X)) = \Ent_{q, \sigma}(I_{q, 2}(X))$ for all $p, q\in \RR\backslash\{0\}$ and $X\in\cB(\cH)$. 

\item[{\rm (ii)}] $\Ent_{p, \sigma}(cX) = c^p \Ent_{p, \sigma}(X)$ for all $X> 0$ and constants $c> 0$.

\item[{\rm (iii)}] For any density matrix $\rho$ we have 
$$\Ent_{2, \sigma}\big(\Gamma_\sigma^{-\frac 1 2}(\sqrt \rho)\big) =  D(\rho\|\sigma),$$ 
where $D(\rho\|\sigma) = \tr(\rho\log \rho) - \tr(\rho\log \sigma)$ is Umegaki's relative entropy.
\item[{\rm (iv)}] For any density matrix $\rho$ we have
$$\Ent_{1, \sigma}\big(\Gamma_\sigma^{-1}(\rho)\big) = D(\rho\| \sigma).$$
\end{enumerate}
\label{prop:ent}
\end{prop}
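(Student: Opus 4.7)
The strategy is to prove all four parts by direct unpacking of the definition of $\Ent_{p,\sigma}$. The unifying observation is that in each part, the argument of the entropy is engineered so that the operator $\big(\Gamma_\sigma^{1/p}(\cdot)\big)^p$, which appears repeatedly in the definition, collapses into a $p$-independent object with a clear interpretation. Once that collapse is identified, each identity reduces to rearranging the three terms of $\Ent_{p,\sigma}$.

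For part~(i), I would start by computing, for positive $X$,
\[
\Gamma_\sigma^{\frac{1}{p}}\!\big(I_{p,2}(X)\big) = \Gamma_\sigma^{\frac{1}{p}}\Gamma_\sigma^{-\frac{1}{p}}\!\big(|\Gamma_\sigma^{1/2}(X)|^{2/p}\big) = \big|\Gamma_\sigma^{1/2}(X)\big|^{2/p},
\]
so that $\big(\Gamma_\sigma^{1/p}(I_{p,2}(X))\big)^p = \big|\Gamma_\sigma^{1/2}(X)\big|^2$, which is manifestly independent of $p$. Combined with Proposition~\ref{prop:power}(i), which gives $\|I_{p,2}(X)\|_{p,\sigma}^p = \|X\|_{2,\sigma}^2$, every one of the three terms in $\Ent_{p,\sigma}(I_{p,2}(X))$ is seen to be independent of $p$, yielding the claim.

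For part~(ii), using $\Gamma_\sigma^{1/p}(cX)=c\,\Gamma_\sigma^{1/p}(X)$ and positivity of $X$, the argument $\big(\Gamma_\sigma^{1/p}(cX)\big)^p$ equals $c^p Y$ with $Y:=\big(\Gamma_\sigma^{1/p}(X)\big)^p$, and $\|cX\|_{p,\sigma}^p = c^p \|X\|_{p,\sigma}^p$. Substituting into the definition and using $\tr Y = \|X\|_{p,\sigma}^p$, the two occurrences of $c^p\log c^p\,\|X\|_{p,\sigma}^p$ cancel between the first and the last terms, leaving exactly $c^p \Ent_{p,\sigma}(X)$. Part~(iii) follows by setting $X=\Gamma_\sigma^{-1/2}(\sqrt\rho)$: then $\Gamma_\sigma^{1/2}(X)=\sqrt\rho$, $\big(\Gamma_\sigma^{1/2}(X)\big)^2=\rho$, and $\|X\|_{2,\sigma}^2 = \tr\rho = 1$, so the third term of the entropy vanishes and the first two reproduce $\tr(\rho\log\rho) - \tr(\rho\log\sigma) = D(\rho\|\sigma)$. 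Part~(iv) is the analogous substitution with $X=\Gamma_\sigma^{-1}(\rho)$, for which $\Gamma_\sigma^{1}(X)=\rho$ and $\|X\|_{1,\sigma}=1$.

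There is no real obstacle here; the proof is essentially bookkeeping. The only place one has to be a touch careful is part~(ii), where the normalization term $\|X\|_{p,\sigma}^p\log\|X\|_{p,\sigma}^p$ must be tracked to verify that the logarithmic cross-terms cancel. All four statements are valid as stated for positive definite $X$, in line with the remark following the definition of $\Ent_{p,\sigma}$.
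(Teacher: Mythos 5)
Your proof is correct; all four parts follow by direct substitution into the definition of $\Ent_{p,\sigma}$, and the key computations — the collapse $\big(\Gamma_\sigma^{1/p}(I_{p,2}(X))\big)^p = \big(\Gamma_\sigma^{1/2}(X)\big)^2$ in (i), the cancellation of the $c^p\log(c^p)\,\|X\|_{p,\sigma}^p$ cross-terms in (ii), and the normalizations $\|X\|_{2,\sigma}=\|X\|_{1,\sigma}=1$ in (iii)–(iv) — are all verified correctly. The paper does not supply its own proof (it cites \cite{KT13}), and your direct bookkeeping argument is the standard one.
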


\begin{corollary}\label{corol:ent-positive}
\begin{enumerate}
\item[{\rm (a)}] For all $X>0$ and arbitrary $p\in\RR\backslash\{0\}$ we have $\Ent_{p, \sigma}(X)\geq 0$.
\item[{\rm (b)}] For all $X>0$, the map $p\mapsto \|X\|_{p, \sigma}$ is non-decreasing on $\RR$. 
\item[{\rm (c)}] $X\mapsto \Ent_{1, \sigma}(X)$ is a convex function on positive semi-definite matrices.
\end{enumerate}
\end{corollary}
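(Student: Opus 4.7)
The plan is to prove the three parts in order, each reducing to a well-known property of the Umegaki relative entropy $D(\cdot\|\sigma)$.

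For (a), I would transport the inequality to the case $p=2$ using Proposition~\ref{prop:ent}(i). Since $I_{p,2}\circ I_{2,p}=I_{p,p}$ acts as the identity on positive operators by Proposition~\ref{prop:power}, setting $Z:=I_{2,p}(X)>0$ gives $X=I_{p,2}(Z)$ and hence $\Ent_{p,\sigma}(X)=\Ent_{2,\sigma}(Z)$. Using the homogeneity from Proposition~\ref{prop:ent}(ii), I would then scale $Z$ so that $\rho:=\bigl(\Gamma_\sigma^{1/2}(Z/\|Z\|_{2,\sigma})\bigr)^{2}$ is a density matrix; Proposition~\ref{prop:ent}(iii) identifies the corresponding entropy with $D(\rho\|\sigma)\ge 0$ (Klein's inequality), and rescaling yields $\Ent_{p,\sigma}(X)=\|Z\|_{2,\sigma}^{2}\,D(\rho\|\sigma)\ge 0$. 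Part~(b) then follows immediately from (a) and the derivative formula~\eqref{eq:norm-derivative-ind}: for arbitrary $X$, the argument $I_{p,p}(X)=\Gamma_\sigma^{-1/p}\bigl(|\Gamma_\sigma^{1/p}(X)|\bigr)$ is positive definite (and similarly for $X^{\dagger}$), so both entropy terms on the right-hand side of~\eqref{eq:norm-derivative-ind} are non-negative by (a), while the prefactor $p^{-2}\|X\|_{p,\sigma}^{1-p}$ is non-negative as well. Hence $p\mapsto \|X\|_{p,\sigma}$ is non-decreasing on each of $(-\infty,0)$ and $(0,\infty)$, with the extension to $p\in\{0,\pm\infty\}$ following by continuity from the limiting definitions of the $p$-norm at those points.

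For (c), I would introduce the linear change of variables $A:=\Gamma_\sigma(X)\ge 0$, so that convexity in $X$ is equivalent to convexity in $A$. A direct expansion of the definition yields
\begin{align*}
\Ent_{1,\sigma}(X)\,=\,\tr(A\log A)-\tr(A\log\sigma)-\tr(A)\log\tr(A)\,=\,\tr(A)\,D\!\left(\frac{A}{\tr(A)}\,\Big\|\,\sigma\right),
\end{align*}
which is the noncommutative perspective of the convex function $\rho\mapsto D(\rho\|\sigma)$. Convexity then follows from the standard perspective argument: for $A_0,A_1>0$ and $A_\lambda:=\lambda A_1+(1-\lambda)A_0$, setting $t_i:=\tr(A_i)$, $\rho_i:=A_i/t_i$ and $\mu:=\lambda t_1/\tr(A_\lambda)$ gives $A_\lambda/\tr(A_\lambda)=\mu\rho_1+(1-\mu)\rho_0$, so convexity of $D(\cdot\|\sigma)$ in its first argument produces the desired inequality. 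None of the three steps looks seriously difficult; the only delicate point is to carefully track scaling and normalization when reducing to the density-matrix setting where Proposition~\ref{prop:ent}(iii)--(iv) can be applied directly.
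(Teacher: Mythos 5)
Your proofs of parts (a) and (b) follow essentially the same route as the paper. For (a), the paper reduces to $p=1$ via Proposition~\ref{prop:ent}(i) and then applies parts (ii) and (iv), whereas you reduce to $p=2$ and apply parts (ii) and (iii); both reductions rest on the same set of identities and the same underlying non-negativity of Umegaki relative entropy. For (b), both you and the paper combine part (a) with the derivative formula~\eqref{eq:norm-derivative-ind}; your explicit remark about piecing together the intervals $(-\infty,0)$ and $(0,\infty)$ through the limiting definitions is a small improvement in rigor but not a different argument.

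For part (c), you take a genuinely different route. The paper defines
\begin{equation*}
f(p)=\tfrac{1}{2}\big(\|X\|_{p,\sigma}+\|Y\|_{p,\sigma}\big)-\big\|\tfrac{1}{2}(X+Y)\big\|_{p,\sigma},
\end{equation*}
observes that the Minkowski inequality forces $f(p)\ge 0$ for $p\ge 1$ while $f(1)=0$, and then identifies $f'(1)\ge 0$ with the convexity statement via Proposition~\ref{prop:norm-derivative}. You instead identify $\Ent_{1,\sigma}(\Gamma_\sigma^{-1}(A))=\tr(A)\,D\!\left(A/\tr(A)\,\|\,\sigma\right)$, recognize this as the perspective of $\rho\mapsto D(\rho\|\sigma)$, and invoke convexity of relative entropy in its first argument (equivalently, concavity of von Neumann entropy). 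Both are correct; the paper's proof is internal to the $L_p$-machinery already built up and reuses the derivative lemma, while yours imports a standard convexity fact about relative entropy and is arguably more transparent once one spots the perspective identity. Your calculation of the perspective decomposition ($\mu=\lambda t_1/\tr(A_\lambda)$, so that $A_\lambda/\tr(A_\lambda)=\mu\rho_1+(1-\mu)\rho_0$) is correct and, after multiplying through by $\tr(A_\lambda)$, yields exactly $\lambda\Ent_{1,\sigma}(X_1)+(1-\lambda)\Ent_{1,\sigma}(X_0)$, so the argument closes; just note the trivial boundary case $\tr(A_i)=0$ if you want the statement on all of $\cP(\cH)$.
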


\begin{proof}
(a) By part (i) of the previous proposition it suffices to prove the corollary for $p=1$. Moreover, by part (ii) we may assume that $X$ is of the form $X=\Gamma_\sigma^{-1}(\rho)$ for some density matrix $\rho$. Then by part (iv) we have
$\Ent_{1, \sigma}(X) = D(\rho\| \sigma)\geq 0$.

\noindent
(b) By (a) both $\Ent_{p, \sigma}(I_{p,p}(X))$ and $\Ent_{p, \sigma}(I_{p,p}(X^\dagger))$ are non-negative. Thus using~\eqref{eq:norm-derivative-ind} the derivative of $p\mapsto \|X\|_{p, \sigma}$ is non-negative, and this function is non-decreasing.

\noindent 
(c) {This is a direct consequence of the joint convexity of $(\rho,\sigma)\mapsto D(\rho\|\sigma)$ {(see e.g., \cite{wolftour})}.}
%(c) Given $X, Y\geq 0$ define 
%$$f(p) = \frac{1}{2}\big( \|X\|_{p, \sigma} + \|Y\|_{p, \sigma}\big)-\Big\|\frac 12(X+Y)\Big\|_{p, \sigma}.$$
%Using Proposition~\ref{prop:norm-derivative} we have
%$$f'(1) =  \frac 12\big(\Ent_{1, \sigma}(X)+\Ent_{1, \sigma}(Y)\big) - \Ent_{1, \sigma}\big(\frac 12(X+Y)\big).$$
%On the other hand by the Minkowski inequality $f(p)\geq 0$ for all $p\geq 1$. Moreover, $f(1)=0$. Therefore, $f'(1)\geq 0$ which gives the desired result. 

\end{proof}

%******************************************************************************
\subsection{Quantum Markov semigroups}
A quantum Markov semigroup (QMS) is the basic model for the evolution of an open quantum system in the Markovian regime. Such quantum Markov semigroup
 (in the Heisenberg picture) is a set $\{\Phi_t:\, t\geq 0\}$ of completely positive unital superoperators $\Phi_t: \cB(\cH)\rightarrow \cB(\cH)$ of the form 
$$\Phi_t = \e^{-t\cL},$$
where $\cL: \cB(\cH)\rightarrow \cB(\cH)$ is a superoperator called the Lindblad generator of the QMS. The general form of such a Lindblad generator is characterized in~\cite{Lind, GKS76}.
We note that $\Phi_0=\id$ and $\Phi_{t+s}=\Phi_s\circ\Phi_t$. Moreover, for any $X\in\cB(\cH)$ we have
$$\frac{\dd}{\dd t}\Phi_t(X) = -\cL\circ \Phi_t(X) = -\Phi_t\circ \cL(X).$$
In particular, since $\Phi_t$ is assumed to be unital, we have
$$\cL(\II)=0.$$

The dual of $\cL$ generates the associated QMS in the Schr\"odinger picture:
$\Phi_t^* = \e^{-t\cL^*}$ where $\cL^*$ is the adjoint of $\cL$ with respect to the Hilbert-Schmidt inner product defined in~\eqref{eq:adj-HS}. Since $\cL$ is not full-rank, there exists some non-zero $\sigma$ in the kernel of $\cL^*$ as well. Then $\sigma$ is an invariant of the semigroup $\{\Phi_t^*: t\geq 0\}$, i.e., $\Phi_t^*(\sigma) = \sigma$ for all $t\geq 0$. 
Throughout the paper we assume that such a $\sigma$ is \emph{unique} (up to scaling) and \emph{full-rank}. Then it can be proven that $\sigma$ is a density matrix.\footnote{By Brouwer's fixed-point theorem, $\Phi_1^*$, has a fixed point in $\cD(\cH)$ because it maps this compact convex set to itself. On the other hand, since $\Phi_t^*= (\Phi_1^*)^t$, any fixed point of $\Phi_1^*$ is an invariant of the whole semigroup. Thus $\{\Phi_t^*:\, t\geq 0\}$ always has an invariant state in $\cD(\cH)$.} Thus by the above uniqueness and full-rankness assumptions, $\{\Phi_t^*:\, t\geq 0\}$ admits a unique invariant state $\sigma$ in $\cD_+(\cH)$. We call such a QMS \emph{primitive}. 
Observe that for a primitive QMS the identity operator $\II$ is the unique (up to scaling) element in the kernel of $\cL$.

We say that the QMS is $\sigma$-\emph{reversible} or satisfies the \emph{detailed balanced} condition with respect to some $\sigma\in \cD_+(\cH)$ if
$$\Gamma_\sigma\circ \cL\circ \Gamma_\sigma^{-1}= \cL^*.$$
From this equation and $\cL(\II)=0$ it is clear that 
$$\cL^*(\sigma)=0,$$
and that $\sigma$ is a fixed point of $\Phi_t^*$. Therefore, if the QMS is primitive and $\sigma$-reversible, then $\sigma$ would be the unique invariant state of $\{\Phi_t^*:\, t\geq 0\}$.

We will frequently use the following immediate consequence of reversibility.

\begin{lemma}\label{lem:reversible}
 $\cL$ is $\sigma$-reversible if and only if both $\cL$ and $\Phi_t$ are self-adjoint with respect to the inner product $\langle \cdot, \cdot\rangle_\sigma$, which means that for all $X, Y\in\cB(\cH)$ we have
$$\langle X, \cL(Y)\rangle_\sigma = \langle \cL(X), Y\rangle_\sigma, \qquad \langle X, \Phi_t(Y)\rangle_\sigma = \langle \Phi_t(X), Y\rangle_\sigma. $$
\end{lemma}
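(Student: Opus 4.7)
The proof is essentially a direct translation between two equivalent formulations of the same operator identity, so the main work is a careful unpacking of definitions; there is no real obstacle, but I would organize the plan as follows.

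First, I would rewrite self-adjointness of $\cL$ with respect to $\langle \cdot,\cdot \rangle_\sigma$ as an identity between superoperators. Expanding the left-hand side using the definition of the inner product gives $\langle X, \cL(Y)\rangle_\sigma = \tr\!\big(X^\dagger\, (\Gamma_\sigma\circ \cL)(Y)\big)$. For the right-hand side, since $\cL$ is hermitian-preserving (being completely positive), $\cL(X)^\dagger = \cL(X^\dagger)$, and one can rewrite
\[
\langle \cL(X), Y\rangle_\sigma = \tr\!\big(\cL(X)^\dagger\, \Gamma_\sigma(Y)\big) = \tr\!\big(X^\dagger\, (\cL^*\circ \Gamma_\sigma)(Y)\big),
\]
using the defining relation~\eqref{eq:adj-HS} of the adjoint $\cL^*$ (applied to the conjugate form $\tr(\Phi(B)^\dagger A) = \tr(B^\dagger \Phi^*(A))$). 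Since this must hold for all $X,Y$, self-adjointness of $\cL$ in $\langle\cdot,\cdot\rangle_\sigma$ is equivalent to the operator identity $\Gamma_\sigma\circ \cL = \cL^*\circ \Gamma_\sigma$, which, upon composing with $\Gamma_\sigma^{-1}$ on the right, is exactly the detailed balance condition $\Gamma_\sigma\circ \cL\circ \Gamma_\sigma^{-1} = \cL^*$. This settles the equivalence for $\cL$ in both directions.

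Next I would pass from $\cL$ to $\Phi_t$. The cleanest route is to note that if $\cL$ is self-adjoint with respect to $\langle\cdot,\cdot\rangle_\sigma$, then so is every polynomial in $\cL$ and, by a standard limiting argument applied to the power series $\Phi_t = \e^{-t\cL} = \sum_{k\ge 0}\frac{(-t)^k}{k!}\cL^k$, so is $\Phi_t$. Alternatively, one can iterate the operator identity: from $\Gamma_\sigma\circ \cL = \cL^*\circ \Gamma_\sigma$ one deduces by induction that $\Gamma_\sigma\circ \cL^k = (\cL^*)^k\circ \Gamma_\sigma$ for all $k\ge 0$, hence $\Gamma_\sigma\circ \Phi_t = \Phi_t^*\circ \Gamma_\sigma$, which by the same computation as for $\cL$ is equivalent to $\langle X,\Phi_t(Y)\rangle_\sigma = \langle \Phi_t(X),Y\rangle_\sigma$ for all $X,Y$. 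Conversely, if $\Phi_t$ is self-adjoint in $\langle\cdot,\cdot\rangle_\sigma$ for all $t\ge 0$, differentiating the identity $\langle X,\Phi_t(Y)\rangle_\sigma = \langle \Phi_t(X),Y\rangle_\sigma$ at $t=0$ recovers self-adjointness of $\cL$, and thus the detailed balance condition.

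The only subtlety worth flagging is the complex conjugation step when moving $\cL$ across the inner product (since $\langle\cdot,\cdot\rangle_\sigma$ is sesquilinear); this is handled cleanly by using that both $\cL$ and $\Gamma_\sigma$ are hermitian-preserving, so that $\cL(X)^\dagger = \cL(X^\dagger)$ and $\Gamma_\sigma(Y)^\dagger = \Gamma_\sigma(Y^\dagger)$, making the transpositions under $\tr$ straightforward. No part of the argument is genuinely hard, and the lemma follows in a few lines.
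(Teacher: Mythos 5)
The paper states this lemma without proof, treating it as an immediate consequence of the definitions; your fleshed-out argument is correct and is exactly the computation the authors have in mind. One small misstatement to fix: you justify $\cL(X)^\dagger = \cL(X^\dagger)$ by saying $\cL$ is ``hermitian-preserving (being completely positive),'' but a Lindblad generator is not completely positive --- it is hermitian-preserving because $\Phi_t=\e^{-t\cL}$ is CP (hence hermitian-preserving) and $\cL = -\tfrac{\dd}{\dd t}\Phi_t\big|_{t=0}$. In fact, as you set things up, hermitian preservation is not even needed for the key step $\tr\big(\cL(X)^\dagger\,\Gamma_\sigma(Y)\big) = \tr\big(X^\dagger\,\cL^*(\Gamma_\sigma(Y))\big)$: that is just the conjugate form of \eqref{eq:adj-HS}, obtained by taking complex conjugates and using $\overline{\tr(A)}=\tr(A^\dagger)$, and holds for any linear superoperator. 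With that cosmetic correction, the rest (passing to $\Phi_t$ by the power series $\Phi_t=\sum_k\tfrac{(-t)^k}{k!}\cL^k$ and back by differentiating at $t=0$) is precisely right.
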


A primitive QMS with the unique invariant state $\sigma\in \cD_+(\cH)$ is called $p$-\emph{contractive} if it is a contraction under the $p$-norm, that is, for all $t\geq 0$ and $X> 0$ we have
$$\|\Phi_t(X)\|_{p,\sigma}\leq \|X\|_{p, \sigma}, \quad \text{if }\quad  p\geq 1.$$
It is called \textit{reverse $p$-contractive} if for all $t\ge 0$ and $X>0$
$$\|\Phi_t(X)\|_{p,\sigma}\geq \|X\|_{p, \sigma}, \quad \text{if}\quad p< 1.$$
We say that the QMS is contractive if it is $p$-contractive for all $p\ge 1$ and reverse $p$-contractive for $p<1$.

Two remarks are in line. Firstly, as mentioned before, when $p>0$ in the above definition we may safely take $X\geq 0$ (instead of $X>0$). For uniformity of presentation we prefer to take $X>0$ in order to jointly consider the cases $p>0$ and $p\leq 0$ in the definitions. Of course in the former case by taking an appropriate limit, a contractivity inequality for $X\geq 0$ can be derived once we have one for $X>0$. Secondly, in the above definition we restrict to positive definite (or positive semidefinite) $X$ since here $\Phi_t$ is a completely positive map, and the superoperator norm of completely positive maps (at least for $p\geq 1$) is optimized over positive semidefinite operators (see e.g.~\cite{DJKR16} and reference therein). {The proof of the following proposition is postponed to \Cref{app:contraction}}.

\begin{prop}\label{prop:contraction}
\begin{enumerate}
\item[{\rm (i)}] Any primitive QMS is (reverse) $p$-contractive for $p\in (-\infty, -1]\cup [1/2, +\infty)$.
\item[{\rm (ii)}] Any primitive QMS whose unique invariant state is $\sigma=\II/d$, the completely mixed state, is (reverse) $p$-contractive for all $p$.  
\end{enumerate}
\end{prop}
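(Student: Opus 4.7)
The plan is to verify contractivity at a few boundary values of $p$ and then fill in the rest by interpolation and H\"older duality. For $p=\infty$ the weighted norm coincides with the operator norm, and Russo--Dye (for positive unital maps) gives $\|\Phi_t(X)\|_\infty\le\|X\|_\infty$. For $p=1$, I rewrite $\|X\|_{1,\sigma}=\|\Gamma_\sigma X\|_1$ and use the $\sigma$-reversibility identity $\Gamma_\sigma\circ\Phi_t=\Phi_t^*\circ\Gamma_\sigma$ together with the fact that $\Phi_t^*$ is completely positive and trace-preserving (dual to unitality of $\Phi_t$) to conclude $\|\Phi_t(X)\|_{1,\sigma}\le\|X\|_{1,\sigma}$. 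Non-commutative Riesz--Thorin interpolation between these two endpoints then delivers $p$-contractivity for every $p\in[1,\infty]$.

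For the reverse regime, the key case is $p=-1$. For $X>0$ we have $\|X\|_{-1,\sigma}=[\tr(\sigma X^{-1})]^{-1}$, and the Choi--Davis operator Jensen inequality applied to the operator convex function $x\mapsto x^{-1}$ and to the unital completely positive map $\Phi_t$ gives $\Phi_t(X)^{-1}\le\Phi_t(X^{-1})$. Combining this with $\Phi_t^*(\sigma)=\sigma$ (a direct consequence of $\sigma$-reversibility, via $\cL^*(\sigma)=0$) yields $\tr[\sigma\,\Phi_t(X)^{-1}]\le\tr[\sigma X^{-1}]$, i.e., $\|\Phi_t(X)\|_{-1,\sigma}\ge\|X\|_{-1,\sigma}$. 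Since $-1$ and $1/2$ are H\"older conjugates, the variational formula~\eqref{eq:holder-2} combined with self-adjointness of $\Phi_t$ under $\langle\cdot,\cdot\rangle_\sigma$ (Lemma~\ref{lem:reversible}) transfers $-1$-contractivity to $1/2$-contractivity. To cover the full ranges $(-\infty,-1]$ and $[1/2,1)$, I would also handle $p=-\infty$ directly (here $\|X\|_{-\infty,\sigma}=\lambda_{\min}(X)$, and $\Phi_t(X)\ge\lambda_{\min}(X)\,\II$ by unitality and positivity, whence $\lambda_{\min}(\Phi_t(X))\ge\lambda_{\min}(X)$) and then interpolate, using H\"older duality to pair the two ranges.

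For part~(ii), in the unital case $\sigma=\II/d$ the weighted norm reduces, up to a constant, to the Schatten $p$-norm and $\Phi_t$ is doubly stochastic. The range $p\in[1,\infty]$ is handled as in part~(i). For $p\in(0,1]$, operator concavity of $x\mapsto x^p$ with operator Jensen and trace-preservation of $\Phi_t$ yields $\tr[\Phi_t(X)^p]\ge\tr[X^p]$, which after taking $p$-th roots is exactly the reverse contractivity. For $p\in[-1,0)$, the analogous argument with operator convexity of $x\mapsto x^p$ gives the same conclusion (the trace inequality reverses but flips back when one raises to the $1/p$-th power). Finally, for $p\le -1$, the identity $\|X\|_p=1/\|X^{-1}\|_{-p}$ reduces reverse $p$-contractivity to the already-established Schatten $(-p)$-contractivity in the range $-p\in[1,\infty]$.

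The main obstacle is the interpolation step in the reverse-contractivity regime of part~(i): the non-commutative Riesz--Thorin theorem is classically stated for $p\ge 1$, so for $p\in(-\infty,-1]\cup[1/2,1)$ one must argue more delicately, either by repeatedly applying the reverse H\"older variational formula~\eqref{eq:holder-2} paired with the duality trick above, or by a direct monotonicity argument for $t\mapsto\|\Phi_t(X)\|_{p,\sigma}^p$ using Proposition~\ref{prop:norm-derivative} to express its derivative in terms of an entropy-type Dirichlet form whose non-negativity is guaranteed by $\cL\ge 0$ on $L_2(\sigma)$ (itself a consequence of Lemma~\ref{lem:reversible}).
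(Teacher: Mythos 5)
Your proof establishes contractivity at the boundary values $p\in\{-\infty,-1,1/2,1,\infty\}$ correctly, and your $p\in[1,\infty]$ range is handled as in the paper (Riesz--Thorin, following~\cite{DB14,Beigi13}). However, the step you flag as ``the main obstacle'' is a genuine gap, and neither of your two suggested remedies closes it cleanly. Interpolation is not available for the exponent ranges $(-\infty,-1)$ and $(1/2,1)$. Your first alternative (``repeatedly applying the reverse H\"older variational formula paired with the duality trick'') is too vague to evaluate, and H\"older duality only maps a single exponent to its conjugate, so it will never densely fill an open interval. Your second alternative is circular: the derivative of $t\mapsto\|\Phi_t(X)\|_{p,\sigma}^p$ is governed by $\cE_{p,\cL}$, not $\cE_{2,\cL}$, and the non-negativity $\cE_{p,\cL}\ge 0$ for $p\notin\{1,2,\infty,-\infty\}$ is not a consequence of $\cL\ge 0$ on $L_2(\sigma)$. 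Passing from $\cE_{2,\cL}\ge0$ to $\cE_{p,\cL}\ge 0$ requires the Stroock--Varopoulos inequality (Theorem~\ref{thm:QSVineq}), which needs \emph{strong} $\sigma$-reversibility --- a strictly stronger hypothesis than the mere $\sigma$-reversibility assumed in Proposition~\ref{prop:contraction}. (Indeed, Proposition~\ref{prop:dirichlet-positive} deduces $\cE_{p,\cL}\ge 0$ \emph{from} $p$-contractivity, so running that implication backwards begs the question.)

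The paper needs no interpolation at all in the reverse regime, and the fix to your argument is small: your $p=-1$ step already contains the key idea, you just applied it with the $q=1$ norm only. For any $p=-q\le-1$, use the identity $\|\Phi_t(X)\|_{p,\sigma}=\|\Phi_t(X)^{-1}\|_{q,\sigma}^{-1}$, the operator-Jensen bound $\Phi_t(X)^{-1}\le\Phi_t(X^{-1})$ (which you already invoke), monotonicity of the weighted $q$-norm on positive operators, and the already-established $q$-contractivity for $q\ge1$; this gives $\|\Phi_t(X)^{-1}\|_{q,\sigma}\le\|\Phi_t(X^{-1})\|_{q,\sigma}\le\|X^{-1}\|_{q,\sigma}$, hence $p$-contractivity for the whole ray $p\le-1$. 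Then for every $p\in[1/2,1)$ the H\"older conjugate $\hat p$ lies in $(-\infty,-1]$, so the same duality argument you wrote for the single pair $(1/2,-1)$ — the variational formula~\eqref{eq:holder-2} together with self-adjointness from Lemma~\ref{lem:reversible} — covers the full interval $[1/2,1)$ at once. Your treatment of part~(ii) via operator concavity/convexity of $x\mapsto x^p$ and trace-preservation in the unital case is fine and matches the paper's cited route through~\cite{CMT15}.
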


The reader familiar with the notion of \emph{sandwiched $p$-R\'enyi divergence}~\cite{MDSFT13, WWY14} would notice that $p$-contractivity is related to~\cite{Beigi13} the data processing inequality of sandwiched $p$-R\'enyi divergences, which is known to hold~\cite{FL13, Beigi13, MDSFT13} for $p\geq 1/2$. 
In Appendix~\ref{app:contraction} we give a proof of part (i) for the range $p\in (-\infty, -1]\cup [1/2, 1)$ based on new ideas which may be of independent interest. Moreover, later in Corollary~\ref{cor:NN-Dirichlet}, under a stronger assumption than primitivity we will prove (reverse) $p$-contractivity for all $p$.

An important example of classical semigroups is generated by the map $f\mapsto f-\E f$, where the expectation is with respect to some fixed distribution.   
This generator is sometimes called the \emph{simple} generator~\cite{MOS12}.
The quantum analog of simple generators is
$$\cL(X):=X - \tr(\sigma X) I,$$
for some positive definite density matrix $\sigma$. 
Observe that $\cL$ is primitive, and $\cL^*(X) =X-\tr(X)\sigma$ satisfies the detailed balanced condition with respect to $\sigma$. The quantum Markov semigroup associated to this Lindblad generator is 
\begin{align}\label{eq:def-Phi}
\Phi_t(X)=\e^{-t} X + (1-\e^{-t}) \tr(\sigma X) \II.
\end{align}
In the special case where $\sigma$ is the completely mixed state, $\Phi_t$ and $\Phi_t^*$ coincide and become depolarizing channels. Indeed,~\eqref{eq:def-Phi} is a \emph{generalized depolarizing channel} in the Heisenberg picture.

Having two Lindblad generators $\cL$ and $\cK$ associated to two semigroups $\{\Phi_t:\, t\geq 0\}$ and $\{\Psi_t:\, t\geq 0\}$, respectively, we may consider a new Lindblad generator 
$\cL\otimes \cI+\cI\otimes \cK$. This Lindblad generator generates the semigroup $\{\Phi_t\otimes \Psi_t:\, t\geq 0\}$. Moreover, letting 
\begin{align}\label{eq:def-hat-L}
\widehat \cL_i:= \cI^{\otimes (i-1)}\otimes \cL\otimes \cI^{\otimes (n-i)},
\end{align} 
we have
$$\Phi_t^{\otimes n} = \e^{-t\sum_{i=1}^n \widehat\cL_i}.$$
Note that, if $\cL$ is primitive and reversible with respect to $\sigma$, then $\sum_{i=1}^n \widehat \cL_i$ is also primitive and reversible with respect to $\sigma^{\otimes n}$.

\subsection{Dirichlet form}
We now define the \emph{Dirichlet form}\footnote{Again, our definition of the Dirichlet form is different from that of~\cite{KT13} by a factor of $ p/2$ and a negative sign.} associated to a QMS with generator $\cL$ by
$$\cE_{p, \cL}(X) = \frac{p\hat p}{4}\langle I_{\hat p, p}(X), \cL(X)\rangle_\sigma,$$
where $\hat p$ is the H\"older conjugate of $p$.
Verification of the following properties of the Dirichlet form is easy.

\begin{prop}\label{prop:dirichlet}
\begin{enumerate}
\item[{\rm (i)}] $\cE_{\hat p, \cL}(I_{\hat p, 2}(X)) =\cE_{p, \cL}(I_{p, 2}(X))$ for all $p\in \RR\backslash\{0\} $ and $X\in\cB(\cH)$.
\item [{\rm (ii)}] $\cE_{ p, \cL}(cX)=c^p\cE_{p, \cL}(X)$ for $X\geq 0$ and constant $c\geq 0$.
\item[{\rm (iii)}] $\cE_{2, \cL} (X) = \langle X, \cL(X)\rangle_\sigma$ for all $X> 0$. 
\item[{\rm (iv)}] $\cE_{1, \cL}(X) = \frac{1}{4} \tr\left[\Gamma_\sigma\big(\cL(X)\big)\cdot\big(\log \Gamma_\sigma(X) - \log \sigma \big) \right].$
\end{enumerate}
\end{prop}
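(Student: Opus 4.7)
The plan is to handle the four parts in increasing order of difficulty: (iii), (ii), (i), then (iv). Parts (iii) and (ii) are essentially unpacking definitions. For (iii), set $p=\hat p=2$; then $p\hat p/4=1$ and Proposition \ref{prop:power}(iii) gives $I_{2,2}(X)=X$ for $X>0$, so the formula collapses to $\langle X,\cL(X)\rangle_\sigma$. For (ii), a direct computation from the definition of the power operator gives $I_{\hat p,p}(cX)=c^{p/\hat p}I_{\hat p,p}(X)$, which combined with linearity of $\cL$ produces a scaling factor $c^{1+p/\hat p}=c^p$ via the Hölder relation $\tfrac{1}{p}+\tfrac{1}{\hat p}=1$.

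For (i), set $Y:=I_{\hat p,2}(X)$ and $Z:=I_{p,2}(X)$. The composition identity $I_{q,r}\circ I_{r,p}=I_{q,p}$ (Proposition \ref{prop:power}(ii)) together with $\hat{\hat p}=p$ gives $I_{\hat p,p}(Z)=Y$ and $I_{p,\hat p}(Y)=Z$, so
$$\cE_{p,\cL}(Z)=\tfrac{p\hat p}{4}\langle Y,\cL(Z)\rangle_\sigma,\qquad \cE_{\hat p,\cL}(Y)=\tfrac{p\hat p}{4}\langle Z,\cL(Y)\rangle_\sigma.$$
Both $Y$ and $Z$ are self-adjoint, being conjugations of positive operators by self-adjoint powers of $\sigma$, and $\cL$ is hermitian-preserving, so $\cL(Y),\cL(Z)$ are also self-adjoint. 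The inner product $\langle\cdot,\cdot\rangle_\sigma$ is therefore symmetric on these arguments (by trace cyclicity). Combining this symmetry with the $\sigma$-reversibility of $\cL$ (Lemma \ref{lem:reversible}) yields $\langle Y,\cL(Z)\rangle_\sigma=\langle \cL(Y),Z\rangle_\sigma=\langle Z,\cL(Y)\rangle_\sigma$, which is the desired equality.

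Part (iv) is the substantive step. Using $\hat p=p/(p-1)$ I would rewrite
$$\cE_{p,\cL}(X)=\frac{p^2}{4(p-1)}\langle I_{\hat p,p}(X),\cL(X)\rangle_\sigma,$$
and then take the limit $p\to 1$ via L'Hôpital. At $p=1$ the factor $I_{\infty,1}(X)=\II$ (for $X>0$), and $\langle \II,\cL(X)\rangle_\sigma=\tr(\sigma\cL(X))=\tr(\cL^*(\sigma)X)=0$ by reversibility (which implies $\cL^*(\sigma)=0$), so both numerator and denominator vanish and L'Hôpital yields
$$\cE_{1,\cL}(X)=\frac{1}{4}\Bigl\langle\tfrac{d}{dp}I_{\hat p,p}(X)\big|_{p=1},\,\cL(X)\Bigr\rangle_\sigma.$$
The key calculation is then a first-order expansion of
$$I_{\hat p,p}(X)=\sigma^{-1/(2\hat p)}\bigl[\sigma^{1/(2p)}X\sigma^{1/(2p)}\bigr]^{p/\hat p}\sigma^{-1/(2\hat p)}$$
around $p=1$. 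Setting $\epsilon=p-1$, one has $p/\hat p=\epsilon$ and $1/(2\hat p)=\epsilon/2+O(\epsilon^2)$. Expanding the middle factor as $A^\epsilon=\II+\epsilon\log A+O(\epsilon^2)$ (with $A\to\Gamma_\sigma(X)$) and the outer conjugation as $\sigma^{-\epsilon/2+O(\epsilon^2)}=\II-\tfrac{\epsilon}{2}\log\sigma+O(\epsilon^2)$ gives
$$I_{\hat p,p}(X)=\II+(p-1)\bigl(\log\Gamma_\sigma(X)-\log\sigma\bigr)+O\bigl((p-1)^2\bigr).$$
Substituting the derivative $\log\Gamma_\sigma(X)-\log\sigma$ into the L'Hôpital expression and unpacking $\langle A,B\rangle_\sigma=\tr(A^\dagger\Gamma_\sigma(B))$ (using self-adjointness of $\log\Gamma_\sigma(X)-\log\sigma$ and cyclicity of trace) yields exactly the formula in (iv).

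The main obstacle is the Taylor expansion in (iv): both the outer conjugation exponent $1/\hat p$ and the inner power $p/\hat p$ vanish at $p=1$, so the linear-order terms in $(p-1)$ must be tracked carefully. The cancellation making the zeroth-order term equal to $\II$ is exactly what forces the numerator to vanish at $p=1$ and permits the use of L'Hôpital. Once the expansion is in hand, everything else is routine.
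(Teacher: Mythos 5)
The paper itself gives no proof of this proposition (it merely says verification is easy), so the question is only whether your argument is sound. It is. Parts (iii) and (ii) are correctly handled by unpacking the definitions and using $1+p/\hat p = p$; part (i) correctly combines the composition identity for the power operator, the self-adjointness of $Y=I_{\hat p,2}(X)$, $Z=I_{p,2}(X)$, and $\cL(Y),\cL(Z)$, and the $\sigma$-reversibility of $\cL$ to get symmetry of the quadratic form. For part (iv), your L'H\^opital computation is correct and is the natural way to treat $p=1$: after writing $p\hat p = p^2/(p-1)$ you rightly observe that $I_{\infty,1}(X)=\II$ and $\langle\II,\cL(X)\rangle_\sigma=\tr(\cL^*(\sigma)X)=0$, and the first-order expansion $I_{\hat p,p}(X)=\II+(p-1)\bigl(\log\Gamma_\sigma(X)-\log\sigma\bigr)+O\bigl((p-1)^2\bigr)$ is right, including the cancellation of the zeroth-order terms from the inner power and the outer conjugation. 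One detail worth spelling out in a final write-up: when expanding $A(p)^{\epsilon}$ with $A(p)=\Gamma_\sigma^{1/p}(X)$, the $p$-dependence of the base $A$ contributes only at order $\epsilon^2$ because it is multiplied by $\epsilon$, so replacing $A(p)$ by $A(1)=\Gamma_\sigma(X)$ in the linear term is justified; you gesture at this with "with $A\to\Gamma_\sigma(X)$" but it deserves an explicit sentence. Apart from that, the proof is complete and correct.
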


The non-negativity of the Dirichlet form is not clear from its definition. Here we prove the non-negativity assuming that the QMS is $p$-contractive. By Proposition~\ref{prop:contraction} we then conclude the non-negativity of $\cE_{p, \cL}(X)$ for $p\notin(-1, 1/2)$.  Later on, based on an stronger assumption than $\sigma$-reversibility, we will prove $\cE_{p, \cL}(X)\geq 0$ for all values of $p$ and $X>0$. 

\begin{prop}\label{prop:dirichlet-positive}
Suppose that $\cL$ generates a QMS that is primitive and $\sigma\in \cD_+(\cH)$ is its unique invariant state. { Let $p\in\RR\ne \{0\}$.} If the QMS is (reverse) $p$-contractive, then $\cE_{p, \cL}(X)\geq 0$ for all $X> 0$. 
\end{prop}

\begin{proof}
Define 
$$g(t) :=\hat p\big\|  \Phi_t(X) \big\|^{p}_{p, \sigma}-\hat p\|X\|^{p}_{p, \sigma} .$$
By assumption of (reverse) $p$-contractivity, for all $t\geq 0$ we have $g(t)\leq 0$. We note that $g(0)=0$. Therefore, $g'(0)\leq 0$. We compute
\begin{align*}
g'(0) & = \frac{\dd}{\dd t}\, \hat p\,\big\|  \Phi_t(X) \big\|^{p}_{p, \sigma}\Big|_{t=0}\\
& = \frac{\dd}{\dd t} \,\hat p\,\tr\Big(  \Gamma_{\sigma}^{\frac 1p}\circ\Phi_t(X)^p\Big)\Big|_{t=0}\\
& = -p\hat p \,\tr\Big( \Gamma_\sigma^{\frac 1p}\circ \cL(X) \cdot \Gamma_\sigma^{\frac 1p} (X)^{p-1}     \Big)\\
& = -p\hat p \,\tr\Big( \cL(X) \cdot \Gamma_\sigma^{\frac 1p}\big(\Gamma_\sigma^{\frac 1p} (X)^{p-1}\big)     \Big)\\
& = -p\hat p \langle I_{\hat p, p}(X), \cL(X)\rangle_\sigma.
\end{align*}
This gives $\cE_{p, \cL}(X)\geq 0$.
\end{proof}

\subsection{Hypercontractivity and logarithmic-Sobolev inequalities}

We showed in Proposition~\ref{prop:contraction} that $\Phi_t$ belonging to a $\sigma$-reversible QMS is contractive, at least for certain values of $p$. That is, 
$\|\Phi_t(X)\|_{p, \sigma}$ is bounded (from above or below depending on whether $p\geq 1$ or $p<1$) by $\|X\|_{p, \sigma}$. On the other hand, By part (b) of Corollary~\ref{corol:ent-positive} bounding $\|\Phi_t(X)\|_{p, \sigma}$ by $\|X\|_{q, \sigma}$ when $1\leq q<p$ or $p<q<1$ is a stronger inequality than contractivity. Such inequalities are called \emph{hypercontractivity inequalities} or \emph{reverse hypercontractivity inequalities} depending on whether $1\leq q<p$ or $p<q<1$ respectively. These inequalities have found a wide range of applications in the literature. 

It is well-known that quantum hypercontractivity inequalities stem from quantum \emph{logarithmic-Sobolev} (log-Sobolev) inequalities. They are essentially equivalent objects, so proving log-Sobolev inequalities gives hypercontractivity ones. The theory of reverse hypercontractivity inequalities have been generalized to the non-commutative case for unital semigroups in~\cite{CMT15}. Here we generalize the theory for general QMS.

Given a primitive Lindblad generator $\cL$ that is reversible with respect to a positive definite density matrix $\sigma$ and $p\in \RR\backslash \{0\}$, a $p$-\emph{log-Sobolev} inequality is an inequality of the form
$$\beta\,\Ent_{p, \sigma}(X)\leq \cE_{p, \cL}(X), \qquad \forall X> 0.$$
The best constant $\beta$ satisfying the above inequality is called the $p$-\emph{log-Sobolev constant} and is denoted by $\alpha_p(\cL)$. That is,
$$\alpha_p(\cL) : = \inf \frac{\cE_{p, \cL}(X)}{\Ent_{p, \sigma}(X)},$$
where the infimum is taken over $X> 0$ with $\Ent_{p, \sigma}(X)\neq 0$.

By the following proposition we can restrict ourselves to log-Sobolev constants for values of $p\in [0, 2]$. 

\begin{prop}\label{prop:LS-const}
$\alpha_p(\cL)=\alpha_{\hat p}(\cL)$ for all Lindblad generators $\cL$.
\end{prop}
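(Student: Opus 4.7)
The plan is to exhibit the two log-Sobolev constants as infima of the same functional after an appropriate change of variables. The key observation is that Proposition~\ref{prop:ent}(i) and Proposition~\ref{prop:dirichlet}(i) already state precisely the ``reparametrization invariances'' $\Ent_{p,\sigma}(I_{p,2}(X)) = \Ent_{\hat p,\sigma}(I_{\hat p,2}(X))$ and $\cE_{p,\cL}(I_{p,2}(X)) = \cE_{\hat p,\cL}(I_{\hat p,2}(X))$. If I can argue that the substitution $Z = I_{p,2}(X)$ does not alter the range of optimization in the definition of $\alpha_p(\cL)$, then both $\alpha_p(\cL)$ and $\alpha_{\hat p}(\cL)$ become the same infimum, indexed by a free variable $X>0$.

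The first step is therefore to check that for any $r \in \RR\setminus\{0\}$, the map $X \mapsto I_{r,2}(X)$ is a bijection of $\cP_+(\cH)$ onto itself. Since $X>0$ gives $\Gamma_\sigma^{1/2}(X)>0$, the absolute value in the definition of $I_{r,2}$ drops out and $I_{r,2}(X) = \Gamma_\sigma^{-1/r}\bigl((\Gamma_\sigma^{1/2}(X))^{2/r}\bigr)$ is manifestly positive definite. Its inverse is $I_{2,r}$ by part (ii) of Proposition~\ref{prop:power} restricted to positive operators (where part (iii) ensures $I_{r,r}$ acts as the identity). The second step is to substitute $Z = I_{p,2}(X)$ in the definition
$$\alpha_p(\cL) = \inf_{Z>0,\ \Ent_{p,\sigma}(Z)\neq 0} \frac{\cE_{p,\cL}(Z)}{\Ent_{p,\sigma}(Z)}$$
and apply the two invariance identities to get
$$\alpha_p(\cL) = \inf_{X>0,\ \Ent_{\hat p,\sigma}(I_{\hat p,2}(X))\neq 0} \frac{\cE_{\hat p,\cL}(I_{\hat p,2}(X))}{\Ent_{\hat p,\sigma}(I_{\hat p,2}(X))}.$$
Reparametrizing once more via $W = I_{\hat p, 2}(X)$, which by Step~1 is another bijection of $\cP_+(\cH)$, rewrites the right-hand side as $\alpha_{\hat p}(\cL)$.

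There is no real obstacle here; the proof is purely bookkeeping in terms of the identities of Propositions~\ref{prop:power}, \ref{prop:ent}, and~\ref{prop:dirichlet}. The only minor care needed is that the bijection preserves the locus $\{X:\Ent\neq 0\}$ over which the infimum is taken, which is immediate since $\Ent$ is a function of $X$ and the substitution is invertible. The boundary cases $p \in \{0, \pm \infty\}$ (with $\hat p$ defined by the limit form of~\eqref{eq:holder-conj}) are handled by continuity, since all the objects involved are defined as limits in those cases.
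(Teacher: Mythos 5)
Your proof is correct and coincides with the paper's own one-line argument: both reduce the claim to the invariance identities $\Ent_{p,\sigma}(I_{p,2}(\cdot)) = \Ent_{\hat p,\sigma}(I_{\hat p,2}(\cdot))$ and $\cE_{p,\cL}(I_{p,2}(\cdot)) = \cE_{\hat p,\cL}(I_{\hat p,2}(\cdot))$ from Propositions~\ref{prop:ent}(i) and~\ref{prop:dirichlet}(i), via the substitution $X \leftrightarrow I_{p,2}(Y)$. You have merely made explicit the bijectivity of $I_{r,2}$ on $\cP_+(\cH)$, which the paper leaves implicit.
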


\begin{proof}
Identifying $X$ with $I_{p, 2}(Y)$, for some arbitrary $Y> 0$, this is an immediate consequence of part (i) of Proposition~\ref{prop:ent} and part (i) of Proposition~\ref{prop:dirichlet}. 
\end{proof}

We can now state how log-Sobolev inequalities are related to hypercontractivity and reverse hypercontractivity inequalities. The first part of the following theorem is already known~\cite{OZ99, KT13}. 

\begin{theorem}\label{thm:gen-hyper}
Let $\cL$ be a primitive Lindblad generator that is reversible with respect to {a} positive definite density matrix $\sigma$. 
Then the following holds:
\begin{itemize}
\item {\rm (Hypercontractivity)} Suppose that $\beta_2 = \inf_{p\in [1, 2]} \alpha_p(\cL) >0$. Then for $1\leq q\leq p$ and 
\begin{align}\label{eq:def-t-1}
t\geq \frac{1}{4\beta_2}\log\frac{p-1}{q-1},
\end{align}
we have 
$\| \Phi_t(X)\|_{p, \sigma}\leq \|X\|_{q, \sigma}$ for all $X> 0$
\item {\rm (Reverse hypercontractivity)} Suppose that $\beta_1 = \inf_{p\in {(}0, 1]} \alpha_p(\cL) >0$. Then for $p\leq q<1$ and 
\begin{align}\label{eq:def-t-2}
t\geq \frac{1}{4\beta_1}\log\frac{p-1}{q-1},
\end{align}
we have 
$\| \Phi_t(X)\|_{p, \sigma}\geq \|X\|_{q, \sigma}$ for all $X> 0$, {where \Cref{eq:def-t-2} is understood in the limit whenever $p=0$ or $q=0$.}
\end{itemize} 
\end{theorem}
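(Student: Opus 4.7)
The plan is the classical Gross ODE method adapted to our quantum $p$-norms. Fix $X > 0$ and parametrize a smooth curve $t = t(p)$ with $t(q)=0$, setting $X_p := \Phi_{t(p)}(X)$. Since $\cL$ is primitive and $\Phi_t$ is completely positive unital, $X_p$ stays positive definite, so $I_{p,p}(X_p)=X_p$ by Proposition~\ref{prop:power}(iii) and the second special case of Proposition~\ref{prop:norm-derivative} applies. With $Z_p := \frac{d}{dp}X_p = -\dot{t}(p)\,\cL(X_p)$, this yields
$$\frac{d}{dp}\|X_p\|_{p,\sigma} = \frac{\|X_p\|_{p,\sigma}^{1-p}}{p^2}\Bigl(\Ent_{p,\sigma}(X_p) + p^2\,\tr\bigl[\Gamma_\sigma^{1/p}(Z_p)\,\Gamma_\sigma^{1/p}(X_p)^{p-1}\bigr]\Bigr).$$

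The next step is to identify the trace with the Dirichlet form. Unpacking $\cE_{p,\cL}(X_p) = \frac{p\hat p}{4}\langle I_{\hat p,p}(X_p),\cL(X_p)\rangle_\sigma$ and using $1/p + 1/\hat p = 1$, a short manipulation gives $\tr\bigl[\Gamma_\sigma^{1/p}(X_p)^{p-1}\,\Gamma_\sigma^{1/p}(\cL(X_p))\bigr] = \tfrac{4}{p\hat p}\,\cE_{p,\cL}(X_p)$, hence $p^2\,\tr[\Gamma_\sigma^{1/p}(Z_p)\,\Gamma_\sigma^{1/p}(X_p)^{p-1}] = -4(p-1)\dot{t}(p)\,\cE_{p,\cL}(X_p)$. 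Setting $F(p) := \|X_p\|_{p,\sigma}$, this becomes
$$F'(p) = \frac{F(p)^{1-p}}{p^2}\Bigl(\Ent_{p,\sigma}(X_p) - 4(p-1)\dot{t}(p)\,\cE_{p,\cL}(X_p)\Bigr).$$
I would then apply the log-Sobolev inequality $\Ent_{p,\sigma}(X_p) \le \alpha_p(\cL)^{-1}\cE_{p,\cL}(X_p)$, noting that $\cE_{p,\cL}(X_p)\ge 0$ follows directly from $\alpha_p(\cL)>0$ and Corollary~\ref{corol:ent-positive}(a), with no need to invoke $p$-contractivity in the awkward range $p\in(-1,1/2)$. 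Choosing $\dot{t}(p) = 1/(4\beta(p-1))$ with $\beta$ any uniform lower bound for $\alpha_p(\cL)$ then makes $F'(p)\le 0$.

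For the hypercontractivity case, Proposition~\ref{prop:LS-const} propagates $\beta_2 = \inf_{p\in[1,2]}\alpha_p(\cL)$ to all $p\in[1,\infty)$ via H\"older duality; integrating $\dot{t}(p) = 1/(4\beta_2(p-1))$ from $q$ to $p$ yields exactly~\eqref{eq:def-t-1}, and the monotonicity $F(p)\le F(q) = \|X\|_{q,\sigma}$ is the claimed bound. The reverse case is entirely parallel: $\beta_1$ propagates from $[0,1]$ to $(-\infty,1]$, the same choice $\dot{t}(p) = 1/(4\beta_1(p-1))$ (now negative since $p<1$) cancels the bracket, integrating downward from $q$ gives~\eqref{eq:def-t-2}, and the resulting monotonicity $F(p) \ge F(q) = \|X\|_{q,\sigma}$ is the claim. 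To extend from the exact threshold $t=t(p)$ to arbitrary $t\ge t(p)$, I would pick $q'\in(q,1)$ with $t = \frac{1}{4\beta_1}\log\frac{p-1}{q'-1}$ and invoke Corollary~\ref{corol:ent-positive}(b) via $\|X\|_{q',\sigma}\ge\|X\|_{q,\sigma}$ (and analogously for the forward case by selecting a smaller $q'$).

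The main technical step I expect to dwell on is the algebraic identification of the $p$-derivative trace with $\cE_{p,\cL}$; the rest is integration of a first-order ODE. A minor point requiring care is that $X_p$ must be positive definite to invoke Proposition~\ref{prop:norm-derivative}, which is ensured by primitivity of $\cL$, with a standard $X\to X+\eps\II$ perturbation handling the case of positive semidefinite $X$.
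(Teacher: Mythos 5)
Your proof is correct and follows the same Gross--Rothaus ODE strategy the paper uses: fix $q$, set $X_p=\Phi_{t(p)}(X)$, differentiate $\|X_p\|_{p,\sigma}$ via Proposition~\ref{prop:norm-derivative}, identify the trace term with the Dirichlet form, and choose $\dot t(p)=1/(4\beta(p-1))$ so that the log-Sobolev inequality $\Ent_{p,\sigma}\le\alpha_p(\cL)^{-1}\cE_{p,\cL}$ (propagated to all relevant $p$ via Proposition~\ref{prop:LS-const}) forces the sign of the derivative. The only places you go beyond what the paper writes down are welcome tidying-up: you spell out the extension from $t=t(p)$ to all $t\ge t(p)$ by shifting $q$ to a nearby $q'$ and invoking monotonicity of $p\mapsto\|X\|_{p,\sigma}$ (the paper dismisses this with ``it suffices''), and you make explicit that $\cE_{p,\cL}(X_p)\ge 0$ is forced by $\alpha_p(\cL)>0$ together with $\Ent_{p,\sigma}(X_p)\ge 0$, which the paper uses implicitly.
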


The proof strategy of this theorem is quite standard. Here we present a proof for the sake of completeness.

\begin{proof}

It suffices to prove the theorem when 
$t= \frac{1}{4\beta} \log \frac{p-1}{q-1}$ for $\beta$ being either $\beta_2$ or $\beta_1$ depending on whether we prove the hypercontractivity part or the reverse hypercontractivity part. 
Thus,  fix $q$ and define
$$t(p):= \frac{1}{4\beta} \log \frac{p-1}{q-1}.$$
Define 
$$f(p):=\|\Phi_{t(p)}(X)\|_{p, \sigma}  -\|X\|_{q, \sigma} = \|X_p\|_{p, \sigma} - \|X\|_{q, \sigma},$$
where $X_p:= \Phi_{t(p)}(X)> 0$. To continue the proof we compute the derivative of $f(p)$  using Proposition~\ref{prop:norm-derivative}.
\begin{align*}
f'(p) & = \frac{\dd}{\dd p} \|X_p\|_{p, \sigma} = \frac{1}{p^2}\|X_p\|_{p, \sigma}^{1-p}\cdot\left(\Ent_{p, \sigma}(X_p) + p^2 \tr\Big[ \Gamma_\sigma^{\frac 1p}(Z_p)\cdot \Gamma_\sigma^{\frac 1p}(X_p)^{p-1}   \Big]\right),
\end{align*}
where 
$$Z_p= \frac{\dd}{\dd p} X_p = -t'(p)\cL(X_p)= -\frac{1}{4\beta (p-1)} \cL(X_p).$$
Therefore, 
$$f'(p)  = \frac{1}{p^2}\|X_p\|_{p, \sigma}^{1-p}\cdot\Big(\Ent_{p, \sigma}(X_p) - \frac{1}{\beta} \cE_{p, \cL}(X_p)   \Big).$$

Now suppose that $q\geq 1$ and $\beta\leq \alpha_p(\cL)$ for all $p\in [1, 2]$. Then for $p\geq q$ we have
$$\Ent_{p, \sigma}(X_p)\leq \frac{1}{\alpha_p(\cL)}\cE_{p, \cL}(X_p)\leq \frac{1}{\beta}\cE_{p, \cL}(X_p).$$
As a result, $f'(p)\leq 0$ for all $p\geq q$. Since $f(q)=0$ we conclude that $f(p)\leq 0$ for all $p\geq q$. This gives the hypercontractivity part of the theorem. 

For the reverse hypercontractivity part, assume that $q< 1$ and $\beta\leq \alpha_p(\cL)$ for all $p\in [0, 1]$. Then for $p\leq q$ we have
$$\Ent_{p, \sigma}(X_p)\leq \frac{1}{\alpha_p(\cL)}\cE_{p, \cL}(X_p)\leq \frac{1}{\beta}\cE_{p, \cL}(X_p),$$
where the second inequality holds since $p<1$, so either $p$ or its H\"older conjugate belongs to $[0,1]$. Therefore, $f'(p)\leq 0$ for all $p\leq q< 1$, and since $f(q)=0$, $f(p)\geq 0$ for all $p<q$.

\end{proof}

%******************************************************************************
\section{Quantum Stroock-Varopoulos inequality}

In the previous section we developed the basic tools required to understand quantum hypercontractivity and reverse hypercontractivity inequalities and log-Sobolev inequalities. By Theorem~\ref{thm:gen-hyper} to obtain hypercontractivity and reverse hypercontractivity inequalities we need to find bounds on log-Sobolev constants in ranges $p\in [1, 2]$ or $p\in [0, 1]$. Now the question is how such bounds can be found. 

In the classical (commutative) case, the most relevant $p$-log-Sobolev constants are $\alpha_2(\cL)$ and $\alpha_1(\cL)$. Indeed, $p\mapsto \alpha_p(\cL)$ is a non-increasing function on $p\in [0, 2]$, so in Theorem~\ref{thm:gen-hyper} the parameters  $\beta_1$ and $\beta_2$ can be replaced with $\alpha_1(\cL)$ and $\alpha_2(\cL)$ respectively. This result is proven via 
comparison of the Dirichlet forms, an inequality that is sometimes called the Stroock-Varopoulos inequality. 

In this section we prove a quantum generalization of the Stroock-Varopoulos inequality, and conclude {in Theorem~\ref{thm:gen-hyper} that, for strongly reversible semigroups,} we can take $\beta_p=\alpha_p(\cL)$ for $p=1, 2$. We should point out that a quantum Stroock-Varopoulos inequality  in the special case of $\sigma$ being the completely mixed state is proven in~\cite{CMT15}. Also, a {special case of the Stroock-Varopoulos} inequality (called \emph{strong $L_p$-regularity}) for certain Lindblad generators is proven { in~\cite{OZ99,KT13}}. A strong $L_p$-regularity is also proven in~\cite{BarEID17} which we generalize to a  quantum Stroock-Varopoulos inequality.

The assumption of $\sigma$-reversibility is not enough for us for proving the quantum Stroock-Varopoulos inequality. We indeed need $\cL$ to be self-adjoint with respect to an inner product different from $\langle \cdot, \cdot \rangle_\sigma$ defined above (see Lemma~\ref{lem:reversible}). In the following we first define this new inner product, state some of its properties and then go to our quantum Stroock-Varopoulos inequality.

\subsection{The GNS inner product}

In what follows we use the GNS inner product $\langle \cdot, \cdot\rangle_{1, \sigma}$ on $\cB(\cH)$ that is defined by \cite{CM16}:
\begin{align}\label{eq:inner-product-1-sigma}
\langle X, Y\rangle_{1, \sigma} := \tr(\sigma X^\dagger Y).
\end{align}
We note that this inner product coincides with $\langle X, Y\rangle_\sigma = \tr(\sigma^{1/2}X^\dagger \sigma^{1/2}Y)$ when, e.g., $X$ and $\sigma$ commute. But in general $\langle \cdot, \cdot\rangle_{1, \sigma}$ is different from $\langle \cdot, \cdot\rangle_\sigma$. 

The following lemma was first proven in~\cite{CM16}. We will give a proof here for the sake of completeness. 

\begin{lemma}\label{lem:modular}
Let $\cL$ be a Lindblad generator that is self-adjoint with respect to the inner product
$\langle \cdot, \cdot\rangle_{1, \sigma}$ defined above. Then the followings hold.
\begin{itemize}
\item[{\rm (i)}] $\cL$ commutes with the superoperator $\Delta_{\sigma}:X \mapsto \sigma X\sigma^{-1}.$
\item[{\rm (ii)}] $\cL$ is self-adjoint with respect to the inner product $\langle \cdot, \cdot \rangle_{\sigma}$.
\end{itemize}
\end{lemma}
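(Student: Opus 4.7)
My plan is to translate the two inner products into statements about the Hilbert--Schmidt adjoint $\cL^*$, then connect them by exploiting two additional structural facts: $\cL$ is hermitian-preserving (since it generates a completely positive semigroup) and $\sigma$ is positive definite. First I would rewrite self-adjointness with respect to $\langle\cdot,\cdot\rangle_{1,\sigma}$ via cyclicity: writing $\sigma X^\dagger = (X\sigma)^\dagger$ gives
$$\langle X,\cL(Y)\rangle_{1,\sigma}=\tr\bigl((X\sigma)^\dagger \cL(Y)\bigr)=\tr\bigl(\cL^*(X\sigma)^\dagger Y\bigr),$$
so the hypothesis becomes $\cL^*(X\sigma)=\cL(X)\sigma$ for every $X$, i.e.\ $\cL^*\circ R_\sigma=R_\sigma\circ\cL$ where $R_\sigma(X)=X\sigma$. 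Applying $\dagger$ to both sides of this identity and using that $\cL$ and $\cL^*$ are hermitian-preserving (so $\cL^*(Z)^\dagger=\cL^*(Z^\dagger)$, etc.), I will get the companion relation $\cL^*(\sigma X)=\sigma\cL(X)$, i.e.\ $\cL^*\circ L_\sigma=L_\sigma\circ\cL$ where $L_\sigma(X)=\sigma X$.

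For part~(i), since $\sigma$ is invertible I can rewrite the two identities as $\cL^*=L_\sigma\circ\cL\circ L_{\sigma^{-1}}$ and $\cL^*=R_\sigma\circ\cL\circ R_{\sigma^{-1}}$. Equating them and using that left and right multiplication commute, I obtain
$$\cL\circ(L_{\sigma^{-1}}\circ R_\sigma)=(L_{\sigma^{-1}}\circ R_\sigma)\circ\cL.$$
Since $L_{\sigma^{-1}}\circ R_\sigma(X)=\sigma^{-1}X\sigma=\Delta_\sigma^{-1}(X)$, this gives $\cL\circ\Delta_\sigma^{-1}=\Delta_\sigma^{-1}\circ\cL$, equivalently $\cL\circ\Delta_\sigma=\Delta_\sigma\circ\cL$, which is~(i).

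For part~(ii), I would use that $\Delta_\sigma$, viewed as a superoperator on $\cB(\cH)$ equipped with the Hilbert--Schmidt inner product, is similar (via $X\mapsto\sigma^{1/2}X\sigma^{1/2}$) to a positive operator, so it admits a well-defined square root $\Delta_\sigma^{1/2}:X\mapsto\sigma^{1/2}X\sigma^{-1/2}$ obtained by functional calculus. Because $\cL$ commutes with $\Delta_\sigma$ by~(i), it also commutes with $\Delta_\sigma^{1/2}$ and with $\Delta_\sigma^{-1/2}$. The key algebraic trick is the factorization
$$\sigma^{1/2}X^\dagger\sigma^{1/2}=\sigma\cdot\bigl(\sigma^{-1/2}X^\dagger\sigma^{1/2}\bigr)=\sigma\cdot\Delta_\sigma^{-1/2}(X^\dagger),$$
which lets me rewrite
$$\langle X,\cL(Y)\rangle_\sigma=\tr\bigl(\sigma\,\Delta_\sigma^{-1/2}(X^\dagger)\,\cL(Y)\bigr)=\langle\Delta_\sigma^{1/2}(X),\cL(Y)\rangle_{1,\sigma},$$
after checking that $\bigl(\Delta_\sigma^{-1/2}(X^\dagger)\bigr)^\dagger=\Delta_\sigma^{1/2}(X)$. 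Now applying the hypothesized self-adjointness in $\langle\cdot,\cdot\rangle_{1,\sigma}$, commuting $\cL$ past $\Delta_\sigma^{1/2}$ on the left, and reversing the factorization, I will recover $\langle\cL(X),Y\rangle_\sigma$, yielding~(ii).

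The only nontrivial step is the interplay between the left/right multiplications and the hermitian-preservation, which forces $\cL$ to commute with the modular operator $\Delta_\sigma$; once that is in hand, the KMS self-adjointness in~(ii) is essentially a change-of-variables reduction to the GNS self-adjointness assumed in the hypothesis.
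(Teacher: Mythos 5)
Your argument is correct, and it is essentially the paper's proof repackaged: the paper establishes (i) by a direct trace-chasing computation using cyclicity, hermitian-preservation, and the self-adjointness hypothesis, whereas you extract from those same ingredients the two operator identities $\cL^*\circ R_\sigma=R_\sigma\circ\cL$ and $\cL^*\circ L_\sigma=L_\sigma\circ\cL$ and then sandwich them. For (ii), the paper invokes the equivalent identity $\langle X,Y\rangle_\sigma=\langle Y^\dagger,\Delta_\sigma^{1/2}(X^\dagger)\rangle_{1,\sigma}$ and the commutation from (i), exactly as you do with your variant $\langle X,Y\rangle_\sigma=\langle\Delta_\sigma^{1/2}(X),Y\rangle_{1,\sigma}$.
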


Based on part (ii) of this lemma (see also Lemma~\ref{lem:reversible}) we say that a Lindblad generator $\cL$ is \emph{strongly $\sigma$-reversible} if it is self-adjoint with respect to the inner product $\langle \cdot, \cdot\rangle_{1, \sigma}$.

\begin{proof}
(i) Using the fact the $\cL(Y)^{\dagger} = \cL(Y^{\dagger})$, for all $X, Y$ we have
\begin{align*}
\langle X, \Delta_\sigma\circ \cL(Y)\rangle_{1, \sigma} & =  \tr(\sigma X^{\dagger}\sigma \cL(Y)\sigma^{-1}) \\
& = \tr( X^{\dagger}\sigma \cL(Y)) \\
& = \langle \cL(Y)^{\dagger}, X^{\dagger}\rangle_{1, \sigma}\\
&= \langle \cL(Y^\dagger), X^{\dagger}\rangle_{1, \sigma}\\
&= \langle Y^\dagger, \cL(X^{\dagger})\rangle_{1, \sigma}\\
& = \tr(\sigma Y \cL(X)^{\dagger})\\
&= \tr(\Delta_{\sigma}(Y) \sigma \cL(X)^\dagger)\\
& = \langle \cL(X), \Delta_\sigma(Y)\rangle_{1, \sigma}\\
& = \langle X, \cL\circ\Delta_\sigma(Y)\rangle_{1, \sigma}.
\end{align*}
This gives $\Delta_\sigma\circ \cL = \cL\circ\Delta_\sigma$.

(ii) Follows easily from (i) and the fact that
$$\langle X, Y\rangle_{\sigma} = \langle Y^{\dagger}, \Delta_{\sigma}^{1/2}(X^{\dagger})\rangle_{1, \sigma}.$$
\end{proof}

The following lemma is indeed a consequence of Theorem~3.1 of~\cite{CM16}. Here we prefer to present a direct proof.

\begin{lemma}\label{lem:choi}
Let $\cL$ be a strongly $\sigma$-reversible Lindblad generator. Then for every $t\geq 0$ there are operators $R_k\in \cB(\cH)$ and $\omega_k> 0$ such that $\Delta_\sigma (R_k) = {\omega_k} R_k$,
\begin{align}\label{eq:phi-Kraus}
\Phi_t(X) = \sum_k R_k XR_k^\dagger,
\end{align}
and $\sum_k R_k R_k^{\dagger}=I$.
\end{lemma}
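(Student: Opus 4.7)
The plan is to leverage Lemma~\ref{lem:modular}(i), which says that $\Phi_t=e^{-t\cL}$ commutes with the modular operator $\Delta_\sigma$, and combine this with the Choi--Jamio\l{}kowski correspondence to produce a Kraus decomposition whose operators are eigenvectors of $\Delta_\sigma$. Diagonalize $\sigma=\sum_i\lambda_i|e_i\rangle\langle e_i|$ and set $|\Omega\rangle=\sum_i|e_i\rangle\otimes|e_i\rangle$. Any Kraus decomposition $\Phi_t(X)=\sum_j K_j X K_j^\dagger$ produces the Choi matrix $C=(\id\otimes\Phi_t)(|\Omega\rangle\langle\Omega|)=\sum_j (I\otimes K_j)|\Omega\rangle\langle\Omega|(I\otimes K_j^\dagger)$. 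The ambiguity in a Kraus representation is exactly the unitary freedom in choosing an orthogonal decomposition of $C$, so it suffices to exhibit a spectral decomposition of $C$ whose eigenvectors correspond to Kraus operators obeying $\sigma R_k=\omega_k R_k\sigma$.

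The next step is to translate the commutation $\Phi_t\circ\Delta_\sigma=\Delta_\sigma\circ\Phi_t$ into a symmetry of $C$. Applying this identity to $X=|e_i\rangle\langle e_j|$ shows that $\Phi_t(|e_i\rangle\langle e_j|)$ is an eigenvector of $\Delta_\sigma$ with eigenvalue $\lambda_i/\lambda_j$. Since $C=\sum_{ij}|e_i\rangle\langle e_j|\otimes \Phi_t(|e_i\rangle\langle e_j|)$, the eigenvalues $\lambda_i/\lambda_j$ from the first tensor factor and $\lambda_j/\lambda_i$ from the second cancel, yielding $(\sigma^{-1}\otimes\sigma)\,C=C\,(\sigma^{-1}\otimes\sigma)$; that is, $C$ commutes with the positive definite operator $\sigma^{-1}\otimes\sigma$.

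Since $C\ge 0$ commutes with the Hermitian operator $\sigma^{-1}\otimes\sigma$, I can choose a spectral decomposition $C=\sum_k|v_k\rangle\langle v_k|$ in which each $|v_k\rangle$ is simultaneously an eigenvector of $\sigma^{-1}\otimes\sigma$, say $(\sigma^{-1}\otimes\sigma)|v_k\rangle=\omega_k|v_k\rangle$ with $\omega_k>0$. Writing $|v_k\rangle=(I\otimes R_k)|\Omega\rangle$ defines operators $R_k$ for which $\Phi_t(X)=\sum_k R_k X R_k^\dagger$. Applying the identity $(A\otimes I)|\Omega\rangle=(I\otimes A^T)|\Omega\rangle$ (which in the eigenbasis of $\sigma$ gives $(\sigma^{-1}\otimes I)|\Omega\rangle=(I\otimes\sigma^{-1})|\Omega\rangle$) to the eigenvalue condition converts it to $\sigma R_k\sigma^{-1}=\omega_k R_k$, i.e., $\sigma R_k=\omega_k R_k\sigma$, and $\omega_k>0$ is automatic from positivity of $\sigma^{-1}\otimes\sigma$. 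The relation $\sum_k R_k R_k^\dagger=I$ follows at once from $\Phi_t(\II)=\II$. The principal obstacle is keeping the vectorization conventions consistent across the Choi matrix formalism; once these are fixed, every step is a routine linear algebra computation.
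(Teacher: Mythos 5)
Your proof is correct and follows essentially the same route as the paper's: you pass to the Choi matrix, use Lemma~\ref{lem:modular}(i) to show it commutes with a suitable tensor power of $\sigma$, simultaneously diagonalize, and read off the Kraus operators with the required commutation relation, with unitality giving $\sum_k R_kR_k^\dagger=\II$. The only differences are cosmetic bookkeeping choices: you work directly in the eigenbasis of $\sigma$ (so $\sigma^T=\sigma$ and you avoid the transpose), you place $\Phi_t$ on the second tensor factor rather than the first, and you absorb the Choi eigenvalues into the vectors $|v_k\rangle$ rather than separating a normalized $V_k$ and setting $R_k=\sqrt{\lambda_k}V_k$ as the paper does.
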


\begin{proof}
By Lemma~\ref{lem:modular} the Lindblad generator $\cL$ and then $\Phi_t=\e^{-t\cL}$ commute with~$\Delta_\sigma$, i.e., 
\begin{align}\label{eq:phi-delta-commute}
\Phi_t\circ \Delta_\sigma = \Delta_\sigma\circ \Phi_t.
\end{align}
Fix an orthonormal basis $\{\ket i\}_{i=1}^d$ for the underlying Hilbert space $\cH=\cH_A$ and define
$$\ket{\Upsilon} := \sum_{i=1}^d \ket i_A\ket i_B \in \cH_{AB},$$
where $\cH_B$ is isomorphic to $\cH_A$.
It is not hard to verify that for any matrix $M$ we have 
\begin{align}\label{eq:M-transpose}
(M_A\otimes I_B)\ket{\Upsilon} = \II_A\otimes M_B^T\ket{\Upsilon},
\end{align}
where the transpose is with respect to the basis $\{\ket i\}_{i=1}^d$.

The Choi-Jamiolkowski representation of $\Phi_t$ is 
$$J_{AB} : = (\Phi_t \otimes \cI_B)(\ket{\Upsilon}\bra{\Upsilon}).$$
Then using~\eqref{eq:M-transpose} it is not hard to verify that~\eqref{eq:phi-delta-commute} translates to
$$(\sigma_A^{-1}\otimes \sigma_B^T) J_{AB} = J_{AB}( \sigma_A^{-1}\otimes \sigma_B^T).$$ 
That is, $J_{AB}$ and $\sigma_A^{-1}\otimes \sigma_B^{T}$ commute. On the other hand, $J_{AB}$ is positive semidefinite since it is the  Choi-Jamiolkowski representation of a completely positive map. Therefore, $J_{AB}$ and $\sigma_A^{-1}\otimes \sigma_B^{T}$ can be simultaneously diagonalized in an orthonormal basis, i.e., there exists an orthonormal basis $\{\ket{v_k}\}_{k=1}^{d^2}$ of $\cH_{AB}$ such that 
\begin{align}
J_{AB}\ket{v_k}&= \lambda_k \ket{v_k}\label{eq:eigen-J}\\
\sigma_A^{-1}\otimes \sigma_B^{T}\ket{v_k}&= \omega_k^{-1}\ket{v_k},\label{eq:eigen-sigma}
\end{align}
where $\lambda_k\ge 0,\, \omega_k> 0$. Define the operator $V_k$ by
$$(V_k\otimes I_B)\ket{\Upsilon} = \ket{v_k}.$$ 
Then again using~\eqref{eq:M-transpose}, equation~\eqref{eq:eigen-sigma} translates to 
\begin{align*}%\label{eq:s-v-k}
\sigma^{-1}V_k \sigma = \omega_k^{-1} V_k.
\end{align*}
Moreover, equation~\eqref{eq:eigen-J} means that 
$$(\Phi_t\otimes \cI_B)(\ket{\Upsilon}\bra{\Upsilon})=J_{AB}=\sum_k \lambda_k \ket{v_k}\bra{v_k} = \sum_k \lambda_k (V_k\otimes I_B) \ket{\Upsilon}\bra{\Upsilon} (V_k^{\dagger}\otimes I_B), $$  
which gives
$$\Phi_t(X) :=\sum_k \lambda_k V_kXV_k^{\dagger}.$$
Then letting $R_k:= \sqrt{\lambda_k} V_k$ we have $\sigma R_k= \omega_kR_k \sigma$ and~\eqref{eq:phi-Kraus} holds. The other equation comes from $\Phi_t(\II)=\II$.
\end{proof}

\subsection{Comparison of the Dirichlet forms}

We can now state the main result of this section. 

\begin{theorem}[Quantum Stroock-Varopoulos inequality]\label{thm:QSVineq}
Let $\cL$ be a strongly $\sigma$-reversible Lindblad generator, which means that it is self-adjoint with respect to the inner product
$\langle \cdot, \cdot\rangle_{1, \sigma}$ defined in~\eqref{eq:inner-product-1-sigma}.
Then for all $X> 0$ we have
$$\cE_{p, \cL}\big(I_{p, 2}(X)\big)\geq \cE_{q, \cL}\big(I_{q, 2}(X)\big), \qquad 0{<} p\leq q\leq 2.$$
\end{theorem}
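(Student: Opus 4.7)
The plan is to reduce the operator inequality to a scalar monotonicity statement by exploiting the eigenoperator structure of the Lindbladian that strong $\sigma$-reversibility provides. First, I would use Lemma~\ref{lem:choi} together with Lemma~\ref{lem:modular} to put $\cL$ in a GKSL-type form whose Lindblad operators $L_k$ are simultaneous eigenoperators of the modular automorphism $\Delta_\sigma$, i.e.\ $\sigma L_k = \omega_k L_k \sigma$ for some positive scalars $\omega_k$. This is precisely the role of the strong reversibility assumption: plain $\sigma$-reversibility would not guarantee that $\cL$ commutes with $\Delta_\sigma$, which is what makes a simultaneous spectral decomposition available.

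Given such a representation, I would substitute into $\cE_{p, \cL}(I_{p, 2}(X))$ and expand via the spectral decomposition $\Gamma_\sigma^{1/2}(X)=\sum_i \lambda_i |e_i\rangle\langle e_i|$. The operator $I_{p, 2}(X)$ is an explicit functional-calculus expression in $\Gamma_\sigma^{1/2}(X)$, and after cycling the trace and pushing the factors of $\sigma$ through each $L_k$ using $\sigma L_k = \omega_k L_k \sigma$, each Lindblad term in the Dirichlet form becomes diagonal in the $\{|e_i\rangle\}$ basis. The upshot is an expression of the shape
$$\cE_{p, \cL}(I_{p, 2}(X)) = \sum_{k, i, j} |\langle e_i, L_k e_j\rangle|^2 \cdot K_p(\lambda_i, \lambda_j; \omega_k),$$
where $K_p(a, b; \omega)$ is an explicit scalar kernel depending only on three positive reals and on $p$; all dependence on $X$ and on the Kraus operators is absorbed into the non-negative weights $|\langle e_i, L_k e_j\rangle|^2$.

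The theorem then reduces to the scalar monotonicity claim
$$K_p(a, b; \omega) \geq K_q(a, b; \omega), \qquad \forall\, a, b, \omega > 0,\ 0 \leq p \leq q \leq 2.$$
I would prove this by differentiating in $p$ and reducing to an elementary convexity inequality for scalar power functions; a helpful step is to rewrite $K_p$ as an integral over an auxiliary parameter (a Frullani-type representation), which makes the $p$-monotonicity visible term-by-term. The main obstacle is exactly this scalar step: the $\omega$-weighting coming from the modular eigenvalues breaks the symmetry that makes the classical Stroock--Varopoulos argument essentially one line, and one has to split into cases $p \in [1, 2]$ and $p \in [0, 1]$ (the latter having non-positive H\"older conjugate). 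The symmetry $\cE_{p, \cL}(I_{p, 2}(Y)) = \cE_{\hat p, \cL}(I_{\hat p, 2}(Y))$ from Proposition~\ref{prop:dirichlet}(i) provides a useful consistency check at the endpoints. Once the scalar inequality is established, multiplying by the non-negative weights $|\langle e_i, L_k e_j\rangle|^2$ and summing yields the theorem.
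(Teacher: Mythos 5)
Your plan matches the paper's second proof (given in Appendix~\ref{app:qSV}, following~\cite{BarEID17}) quite closely. Both write $\langle X, \cL(Y)\rangle_\sigma = \sum_j \langle\partial_j X, \partial_j Y\rangle_\sigma$ with derivations $\partial_j(X) = [V_j, X]$ satisfying $\sigma V_j = \omega_j V_j \sigma$ (which strong $\sigma$-reversibility supplies via Lemma~\ref{lem:modular} and Lemma~\ref{lem:choi}), push the modular factors through the $V_j$'s, and reduce to a scalar kernel inequality in the eigenbasis of $\Gamma_\sigma^{1/2}(X)$. For contrast, the paper's first (main-text) proof is a genuinely different route: it sets $h_t(s) := \langle I_{2/(2-s), 2}(X), \Phi_t\circ I_{2/s, 2}(X)\rangle_\sigma$, exploits the symmetry $h_t(2-s) = h_t(s)$ and the Kraus decomposition of Lemma~\ref{lem:choi} to express $h_t$ as a sum of exponentials in $s$ with nonnegative coefficients (hence all even derivatives at $s=1$ are nonnegative), then divides by $(s-1)^2-1$ and sends $t\to 0^+$ to extract the Dirichlet-form monotonicity without ever isolating a scalar inequality.

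One misjudgment in your plan worth flagging: your concern that the ``$\omega$-weighting \ldots breaks the symmetry that makes the classical Stroock--Varopoulos argument essentially one line'' does not materialize. Setting $Y_j := \omega_j^{-1/4}\Gamma_\sigma^{1/2}(X)$ and $Z_j := \omega_j^{1/4}\Gamma_\sigma^{1/2}(X)$, the scalar kernel ends up evaluated at pairs $(\omega_j^{1/4}\mu_\ell,\ \omega_j^{-1/4}\mu_{\ell'})$ with $\mu_\ell$ the eigenvalues of $\Gamma_\sigma^{1/2}(X)$, so the modular eigenvalue is absorbed into the kernel arguments. The two-variable inequality you then need is exactly the classical one
\[
q\hat q\,(x^{1/\hat q} - y^{1/\hat q})(x^{1/q} - y^{1/q}) \ \leq\ p\hat p\,(x^{1/\hat p} - y^{1/\hat p})(x^{1/p} - y^{1/p}),\qquad 0\leq p\leq q\leq 2,\ x,y>0,
\]
which the paper simply cites from Theorem~2.1 and Lemma~2.4 of~\cite{MOS12}. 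There is no residual $\omega$-dependent scalar step, and no case split between $p\in[1,2]$ and $p\in[0,1]$ is required.
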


{
\begin{remark}
	As mentioned above, special cases of \Cref{thm:QSVineq} were already investigated in the literature. In the case that $\sigma$ is the maximally mixed state, this was done in \cite{CMT15}. The inequality was also recently extended to the GNS-symmetric setting for the range of parameters $p\ge 1$ and $q=2$ in \cite{BarEID17}.
	\end{remark}
}
We have two proofs for this theorem. The first one, that we present here, is based on ideas in~\cite{OZ99, KT13}. The second one, that is moved to Appendix~\ref{app:qSV}, is based on ideas in~\cite{BarEID17}. We present both the proofs in this paper since they are different in nature and whose ideas can be useful elsewhere.

\begin{proof}[First proof of Theorem~\ref{thm:QSVineq}]
For any $t\geq 0$ define the function $h_t:{[}0,\infty)\to\RR$ by
$$h_t(s):= \big\langle I_{2/(2-s), 2}(X) , \Phi_t \circ I_{2/s, 2}(X)\big\rangle_\sigma$$
{for $s\in(0,\infty)\backslash\{2\}$, and $h_t(0)=h_t(2)=\tr(\Gamma_\sigma^{1/2}(X)^2)$.} Since by part (ii) of Lemma~\ref{lem:modular}, $\Phi_t=\e^{-t\cL}$ is self-adjoint with respect to the inner product $\langle \cdot, \cdot\rangle_\sigma$, we have $h_t(2-s)=h_t(s)$ and $h_t$ is symmetric about $s=1$. Moreover, exploring the definition of $h_t(s)$ we find that $s\mapsto h_t(s)$ is analytic with a convergent Taylor series at $s=1$. 
Then, by the symmetry around $s=1$, all the the odd-order derivatives of $h_t$ at $s=1$ vanish, and we have
\begin{align}\label{eq:hts}
h_t(s) = h_t(1) + \sum_{j=1}^\infty \frac{c_j}{(2j)!}  (s-1)^{2j},
\end{align}
where 
\begin{align}\label{cj}
c_j = \frac{\dd^{2j}}{\dd s^{2j}} h_t(s)\Big|_{s=1}\,.
\end{align}
{Note that the above series expansion is convergence by analyticity of $s\mapsto h_t(s)$.}We claim that all the even-order derivatives of $h_t$ at $s=1$ are non-negative, i.e., $c_j\geq 0$. We use Lemma~\ref{lem:choi} to verify this. Let $R_k$'s be operators such that
\begin{align}\label{eq:R-k-commute}
\sigma R_k\sigma^{-1} = \omega_kR_k,
\end{align}
with $\omega_k> 0$ and~\eqref{eq:phi-Kraus} holds. Then letting $Y:= \Gamma_\sigma^{1/2}(X)$ and using~\eqref{eq:R-k-commute} we compute
\begin{align*}
h_t(s) & = \tr\Big[ \Gamma_\sigma^{\frac s 2}(Y^{2-s})\cdot \Phi_t\big(\Gamma_\sigma^{-\frac s 2}(Y^s)\big)   \Big] \\
& = \sum_k \tr\Big[ Y^{2-s} \sigma^{\frac s4} R_k \sigma^{-\frac s4} Y^s \sigma^{-\frac s4} R_k^{\dagger} \sigma^{\frac s4}       \Big]\\
& =\sum_k \omega_k^{\frac{s}{2}}\tr\Big[ Y^{2-s} R_k  Y^s  R_k^{\dagger}        \Big].
\end{align*} 
Now diagonalizing $Y$ in its eigenbasis: $Y=\sum_\ell \mu_\ell\ket { \ell}\bra { \ell}$, we find that
$$h_t(s) = \sum_{k, \ell, \ell'} \mu_\ell^2 \, \big|  \bra{\ell} R_k\ket{\ell'} \big|^2\Big(\frac{\sqrt{\omega_k}\,\mu_{\ell'}}{\mu_\ell}\Big)^s .$$
Therefore, $h_t(s)$ is a sum of exponential functions with positive coefficients. 
From this expression it is clear that $c_j$'s {as defined in \Cref{cj}} are all non-negative.

{For $s\in(0,\infty)\backslash\{2\}$, }let us define 
$$g_t(s):= \frac{h_t(s)-h_t(0)}{(s-1)^2-1} = \sum_{j=1}^\infty \frac{c_j}{(2j)!}\left(\sum_{i=0}^{j-1} (s-1)^{2i} \right),$$
{and extend the function $g_t$ by continuity on $[0,\infty)$, since $h_t$ is differentiable at $0$ and at $2$.} From this expression it is clear that $g_t(s)$ is non-decreasing on $[1, +\infty)$.
Therefore, $\lim_{t\rightarrow 0^+} g_t(s)/t$ is non-decreasing on $[1, +\infty)$. On the other hand, we have $h_t(0) =\tr(Y^2) = h_0(s)$. We thus can compute
\begin{align*}
\lim_{t\rightarrow 0^+} \frac{g_t(s)}{t} & =\frac{1}{(s-1)^2-1} \lim_{t\rightarrow 0^+}  \frac{h_t(s)-h_t(0)}{t} \\
& =\frac{1}{(s-1)^2-1} \lim_{t\rightarrow 0^+}  \frac{h_t(s)-h_0(s)}{t} \\
& =\frac{1}{(s-1)^2-1} \frac{\partial}{\partial t} h_t(s)\Big|_{t=0} \\
& =-\frac{1}{(s-1)^2-1}  \big\langle I_{2/(2-s), 2}(X) , \cL \circ I_{2/s, 2}(X)\big\rangle_\sigma.
\end{align*}
Therefore 
$$s\mapsto -\frac{1}{(s-1)^2-1}  \big\langle I_{2/(2-s), 2}(X) , \cL \circ I_{2/s, 2}(X)\big\rangle_\sigma,$$
is non-decreasing on $[1, +\infty)$. Now the desired result follows once we identify $2/s$ with $p$ (and $2/(2-s)$ with $\hat p$, its H\"older conjugate).
\end{proof}

%%%
Here are some important consequences of the above theorem.

\begin{corollary}\label{cor:NN-Dirichlet}
Let $\cL$ be a strongly $\sigma$-reversible Lindblad generator. Then the followings hold:
\begin{itemize}
\item[{\rm (i)}] For all $p\in \mathbb R\backslash \{0\}$ and $X> 0$ we have
$$\cE_{p, \cL}(X)\geq 0.$$
\item[{\rm (ii)}] The associated QMS is $p$-contractive for all $p$. 
\end{itemize}
\end{corollary}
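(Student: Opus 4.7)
The strategy is to reduce part (i) to the base case $p=2$ via the quantum Stroock-Varopoulos inequality (Theorem~\ref{thm:QSVineq}), and then to deduce part (ii) from (i) by a derivative calculation in the spirit of Proposition~\ref{prop:dirichlet-positive}.

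For part (i), I would first observe that the base case holds: $\cE_{2,\cL}(X)=\langle X,\cL(X)\rangle_\sigma\geq 0$ for every $X>0$. Indeed, by Lemma~\ref{lem:modular}(ii) strong $\sigma$-reversibility implies $\sigma$-reversibility, so Proposition~\ref{prop:contraction}(i) supplies $2$-contractivity and Proposition~\ref{prop:dirichlet-positive} then delivers the claimed non-negativity. Next, for $p\in[0,2]$ and arbitrary $Y>0$, put $X:=I_{2,p}(Y)$, which is strictly positive because $\sigma^{1/(2p)}Y\sigma^{1/(2p)}>0$. By Proposition~\ref{prop:power} we have $I_{p,2}(X)=I_{p,p}(Y)=Y$, and Theorem~\ref{thm:QSVineq} with $q=2$ immediately yields
\begin{align*}
\cE_{p,\cL}(Y)=\cE_{p,\cL}\bigl(I_{p,2}(X)\bigr)\geq \cE_{2,\cL}(X)\geq 0.
\end{align*}
For $p\notin[0,2]$, a short check on $\hat p=p/(p-1)$ shows $\hat p\in(1,2)$ when $p>2$ and $\hat p\in(0,1)$ when $p<0$, so $\hat p$ always lies in the already-handled range. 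With the same $X=I_{2,p}(Y)>0$, the symmetry $\cE_{p,\cL}(I_{p,2}(X))=\cE_{\hat p,\cL}(I_{\hat p,2}(X))$ of Proposition~\ref{prop:dirichlet}(i) then reduces the claim at $p$ to the already-established claim at $\hat p$.

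For part (ii), I would extend the derivative computation in the proof of Proposition~\ref{prop:dirichlet-positive} from $t=0$ to arbitrary $t\geq 0$. Using $\frac{\dd}{\dd t}\Phi_t=-\cL\circ\Phi_t$ and the semigroup property gives
\begin{align*}
\frac{\dd}{\dd t}\bigl(\hat p\,\|\Phi_t(X)\|_{p,\sigma}^p\bigr)=-4\,\cE_{p,\cL}\bigl(\Phi_t(X)\bigr).
\end{align*}
Because $\Phi_t$ is a unital positive map, $X\geq\lambda_{\min}(X)\II$ implies $\Phi_t(X)\geq\lambda_{\min}(X)\II>0$, so $\Phi_t(X)$ remains strictly positive throughout the evolution; part (i) then forces $\cE_{p,\cL}(\Phi_t(X))\geq 0$, and hence $t\mapsto\hat p\,\|\Phi_t(X)\|_{p,\sigma}^p$ is non-increasing for every $p$.

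To finish, I would convert this monotonicity into the required $p$-contractivity by a sign case analysis on $(p,\hat p)$: if $p>1$ then $p,\hat p>0$, giving $\|\Phi_t(X)\|_{p,\sigma}$ non-increasing; if $0<p<1$ then $\hat p<0<p$, giving $\|\Phi_t(X)\|_{p,\sigma}$ non-decreasing; and if $p<0$ then $\hat p>0>p$, again giving $\|\Phi_t(X)\|_{p,\sigma}$ non-decreasing. The boundary values $p\in\{0,1,\pm\infty\}$ follow by continuity of $p\mapsto\|X\|_{p,\sigma}$. The analytic heart of the argument is entirely packaged inside Theorem~\ref{thm:QSVineq}, so the main obstacle is really just the careful case-by-case bookkeeping of signs in the last step, together with the preliminary check that $\Phi_t$ preserves strict positivity.
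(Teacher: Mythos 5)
Your proof is correct and follows essentially the same route as the paper: part (i) reduces to $\cE_{2,\cL}\geq 0$ via Theorem~\ref{thm:QSVineq} together with the $p\leftrightarrow\hat p$ symmetry of Proposition~\ref{prop:dirichlet}(i), and part (ii) differentiates $t\mapsto\hat p\|\Phi_t(X)\|_{p,\sigma}^p$ and applies (i) along the flow. You simply make explicit some steps the paper compresses, namely the surjectivity of $X\mapsto I_{p,2}(X)$ onto $\cP_+(\cH)$, the preservation of strict positivity by $\Phi_t$, and the case-by-case sign analysis converting the monotonicity of $\hat p\|\Phi_t(X)\|_{p,\sigma}^p$ into the stated $p$-contractivity.
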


\begin{remark}
As mentioned before, {the fact that $p$-Dirichlet forms are positive for $p\in (-\infty,- 1]\cup [+1/2,\infty)$ is a simple consequence of contraction of non-commutative weighted $L_p$-norms (or equivalently of the data processing inequality for sandwiched $p$-R\'enyi divergences), which follows by invariance of the state $\sigma$ and interpolation of these spaces (see \cite{OZ99}). The case $p\in (-1,+1/2)$ is much more subtle, since it is known that the data processing inequality does not hold in general in this parameter range, as opposed to the classical case. More precisely, $p$-contractivity of $\Phi_t$ implies that the sandwiched $p$-R\'enyi divergence is monotone under $\Phi_t$ \cite{FL13, Beigi13, MDSFT13}. Therefore, when $\Phi_t$ comes from a QMS satisfying the above strong reversibility condition, sandwiched $p$-R\'enyi divergences are monotone under $\Phi_t$ not only for $p\geq 1/2$ but for \emph{all} values of $p$. }
\end{remark}

\begin{proof}
(i) By Theorem~\ref{thm:QSVineq} (and part (i) of Proposition~\ref{prop:dirichlet}) for every $p\ne 0$ we have
$$\cE_{p, \cL}(I_{p, 2}(X))\geq \cE_{2, \cL}(X).$$ 
{Indeed, for $p\in(0,2]$, the inequality holds by \Cref{thm:QSVineq}, and for $p\notin[0,2]$, further use \Cref{prop:dirichlet}(i) to conclude}. On the other hand, {since we have self-adjointness of the semigroup with respect to $\langle.,.\rangle_\sigma$, its generator has positive spectrum, so that we have $\cE_{2, \cL}(X)\geq 0$}. Therefore, $\cE_{p, \cL}(I_{p, 2}(X))\geq 0$. 

\medskip
\noindent (ii) Define $g(t)$ as in the proof of Proposition~\ref{prop:dirichlet-positive}. By part (i) we have $g'(t)\leq 0$ for all $t\geq 0$ and $g(0)=0$. Therefore, $g(t)\geq 0$ for all $t\geq 0$. This gives $p$-contractivity.  
\end{proof}

The following corollary is an immediate consequence of the quantum Stroock-Varopoulos inequality as well as part (i) of  Proposition~\ref{prop:ent}

\begin{corollary}\label{eq:mon-LS-constant}
Let $\cL$ be a strongly $\sigma$-reversible Lindblad generator. Then 
$p\mapsto \alpha_p(\cL)$ is non-increasing on $[0, 2]$, where $\alpha_0(\cL)$ is defined as the limit $p\to 0$.
\end{corollary}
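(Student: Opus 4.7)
The plan is to unfold the definition of $\alpha_p(\cL)$ as an infimum and then substitute a change of variables that makes both the Stroock--Varopoulos comparison (Theorem~\ref{thm:QSVineq}) and the entropy invariance (part (i) of Proposition~\ref{prop:ent}) apply simultaneously. Specifically, I would fix $0\leq p\leq q\leq 2$ and rewrite
\[
\alpha_p(\cL) \;=\; \inf_{Y> 0,\,\Ent_{p,\sigma}(Y)\neq 0}\frac{\cE_{p,\cL}(Y)}{\Ent_{p,\sigma}(Y)}
\]
by parameterizing $Y = I_{p,2}(X)$ with $X> 0$. Since $I_{p,2}$ is a bijection on the positive cone (its inverse being $I_{2,p}$, by part (ii) of Proposition~\ref{prop:power}), this substitution does not change the infimum.

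Next, I would apply the two ingredients to the resulting expression. In the numerator, Theorem~\ref{thm:QSVineq} gives $\cE_{p,\cL}(I_{p,2}(X))\geq \cE_{q,\cL}(I_{q,2}(X))$ because $p\leq q$ and both lie in $[0,2]$. In the denominator, part (i) of Proposition~\ref{prop:ent} gives the exact equality $\Ent_{p,\sigma}(I_{p,2}(X)) = \Ent_{q,\sigma}(I_{q,2}(X))$, so the denominator is unchanged (in particular the condition of being nonzero is preserved under the substitution). Combining these yields
\[
\frac{\cE_{p,\cL}(I_{p,2}(X))}{\Ent_{p,\sigma}(I_{p,2}(X))} \;\geq\; \frac{\cE_{q,\cL}(I_{q,2}(X))}{\Ent_{q,\sigma}(I_{q,2}(X))},
\]
and taking the infimum over $X> 0$ on both sides gives $\alpha_p(\cL)\geq \alpha_q(\cL)$, which is exactly the claimed monotonicity.

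There is essentially no obstacle here, since both hard ingredients have already been established; the only point that requires a moment of care is verifying that the $X\mapsto I_{p,2}(X)$ substitution is compatible with the constraint $\Ent_{p,\sigma}\neq 0$ in the infimum, and this is precisely guaranteed by the entropy invariance from Proposition~\ref{prop:ent}(i). For this reason I would keep the write-up very short, stating the substitution, invoking Theorem~\ref{thm:QSVineq} and Proposition~\ref{prop:ent}(i) explicitly, and concluding in a single display.
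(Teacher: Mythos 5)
Your proof is correct and fills in exactly the argument the paper leaves implicit (the paper simply states the corollary is an ``immediate consequence'' of Theorem~\ref{thm:QSVineq} and Proposition~\ref{prop:ent}(i)). The substitution $Y = I_{p,2}(X)$, its bijectivity via $I_{2,p}$, and the combination of the Dirichlet-form monotonicity with the entropy invariance are precisely the intended steps.
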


Now we can state an improvement over Theorem~\ref{thm:gen-hyper}.

\begin{corollary}\label{cor:gen-hyper-1}
Let $\cL$ be a strongly $\sigma$-reversible Lindblad generator. Then the following holds:
\begin{itemize}
\item {\rm (Hypercontractivity)} For $1\leq q\leq p$ and 
\begin{align}\label{eq:def-t-1'}
t\geq \frac{1}{4\alpha_2(\cL)}\log\frac{p-1}{q-1},
\end{align}
we have 
$\| \Phi_t(X)\|_{p, \sigma}\leq \|X\|_{q, \sigma}$ for all $X\geq 0$
\item {\rm (Reverse hypercontractivity)} For $p\leq q<1$ and 
\begin{align}\label{eq:def-t-2'}
t\geq \frac{1}{4\alpha_1(\cL)}\log\frac{p-1}{q-1},
\end{align}
we have 
$\| \Phi_t(X)\|_{p, \sigma}\geq \|X\|_{q, \sigma}$ for all $X> 0$.
\end{itemize} 
\end{corollary}

\begin{remark}
	{\Cref{eq:def-t-1'} was already known to be implied by the strong $L_p$-regularity defined by \cite{OZ99}. This condition, which is a special case of the Stroock-Varopoulos inequality, was recently shown in \cite{BarEID17}. }
	\end{remark}
Before ending this section, we state a result that will play an important role in \Cref{sec6}.
\begin{lemma}\label{RHolderHC}
	Let $\{\Phi_t:\, t\ge 0\}$ be a  a primitive QMS that is strongly $\sigma$-reversible. Let $X,Y>0$ and $-\infty\le q, p< 1$. Then, for any $t\ge 0$ such that $(1-p)(1-q)\ge\e^{-4\alpha_1(\cL) t}$ we have
	\begin{align*}
		\langle X,\Phi_t(Y)\rangle_\sigma \ge \|X\|_{p,\sigma}\|Y\|_{q,\sigma}
	\end{align*}
	
\end{lemma}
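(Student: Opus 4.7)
The plan is to derive this lemma from the reverse Hölder inequality (Lemma~\ref{RHolder}) combined with the reverse hypercontractivity established in Corollary~\ref{cor:gen-hyper-1}. First I would apply reverse Hölder to $X$ and $\Phi_t(Y)$, which is positive definite whenever $Y$ is: since $\Phi_t$ is unital and positive, $Y\ge cI$ for some $c>0$ gives $\Phi_t(Y)\ge c\Phi_t(\II)=cI>0$. This yields
$$\langle X,\Phi_t(Y)\rangle_\sigma\ge \|X\|_{p,\sigma}\,\|\Phi_t(Y)\|_{\hat p,\sigma},$$
and reduces the whole lemma to showing $\|\Phi_t(Y)\|_{\hat p,\sigma}\ge \|Y\|_{q,\sigma}$.

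Second, I would rewrite the hypothesis $(1-p)(1-q)\ge \e^{-4\alpha_1(\cL) t}$ as a hypercontractivity time threshold. Using $\hat p=p/(p-1)$, so $\hat p-1=1/(p-1)$, a one-line rearrangement gives $\frac{\hat p-1}{q-1}=\frac{1}{(1-p)(1-q)}$, and the hypothesis becomes
$$t\ge \frac{1}{4\alpha_1(\cL)}\log\frac{\hat p-1}{q-1},$$
which is precisely the threshold appearing in Corollary~\ref{cor:gen-hyper-1} for the pair $(\hat p,q)$. When $\hat p\le q<1$ the reverse hypercontractivity part of that corollary then delivers $\|\Phi_t(Y)\|_{\hat p,\sigma}\ge \|Y\|_{q,\sigma}$ directly.

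Finally, I would handle the case $\hat p>q$ (which occurs exactly when $(1-p)(1-q)>1$) separately, since Corollary~\ref{cor:gen-hyper-1} as stated assumes $\hat p\le q$. Here the required bound is purely static: since $\hat p<1$, the $\hat p$-contractivity of the semigroup (Corollary~\ref{cor:NN-Dirichlet}(ii)) gives $\|\Phi_t(Y)\|_{\hat p,\sigma}\ge \|Y\|_{\hat p,\sigma}$, and monotonicity of $r\mapsto \|Y\|_{r,\sigma}$ on $[q,\hat p]$ (Corollary~\ref{corol:ent-positive}(b)) gives $\|Y\|_{\hat p,\sigma}\ge \|Y\|_{q,\sigma}$. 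I do not anticipate a real obstacle; the only delicate point is this bookkeeping case split, needed because the hypothesis on $t$ does not enforce the ordering $\hat p\le q$ across the full parameter range $p,q\in(-\infty,1)$, while Corollary~\ref{cor:gen-hyper-1} requires it.
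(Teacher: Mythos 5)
Your proof is correct and follows exactly the route the paper has in mind: apply the reverse H\"older inequality (Lemma~\ref{RHolder}) to $X$ and $\Phi_t(Y)$, then invoke the reverse hypercontractivity of Corollary~\ref{cor:gen-hyper-1} with exponents $(\hat p, q)$, after checking that the time constraint $(1-p)(1-q)\ge \e^{-4\alpha_1(\cL)t}$ is exactly $t\ge \tfrac{1}{4\alpha_1(\cL)}\log\tfrac{\hat p-1}{q-1}$. The paper's one-line proof is silent about the case $(1-p)(1-q)>1$, where $\hat p>q$ and Corollary~\ref{cor:gen-hyper-1} as stated does not apply; your explicit fallback via $\hat p$-contractivity (Corollary~\ref{cor:NN-Dirichlet}(ii)) plus monotonicity of $r\mapsto\|Y\|_{r,\sigma}$ (Corollary~\ref{corol:ent-positive}(b)) correctly closes that gap, so your write-up is if anything more careful than the paper's.
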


\begin{proof}
	The result follows by a direct application of \Cref{RHolder} together with the reverse hypercontractivity inequality in \Cref{cor:gen-hyper-1}.
	
\end{proof}

%**************************************************

\section{Tensorization}

Our goal in this section is to prove hypercontractivity (or reverse hypercontractivity) inequalities of the form $\|\Phi_t^{\otimes n}(X)\|_{p, \sigma^{\otimes n}} \leq \|X\|_{q, \sigma^{\otimes n}}$ (or $\|\Phi_t^{\otimes n}(X)\|_{p, \sigma^{\otimes n}} \geq  \|X\|_{q, \sigma^{\otimes n}}$) for certain ranges of $t, p, q$ that are \emph{independent of $n$}. Indeed, so far we have a theory of using log-Sobolev inequalities to prove such inequalities when $n=1$, but in some applications, e.g., those we present later in this paper, we need such inequalities for arbitrary $n$.  
We need some notations to state the problem more precisely. 

For a Lindblad generator $\cL$ we define
\begin{align}\label{eq:hat-L-i}
\widehat \cL_i:= \cI^{\otimes (i-1)}\otimes \cL\otimes \cI^{\otimes (n-i)},
\end{align}
as an operator acting on $\cB(\cH^{\otimes n})$. We also 
let
\begin{align}\label{eq:K-n}
 \cK_n:= \sum_{i=1}^n \widehat \cL_i.
\end{align}
Observe that if $\cL$ is (strongly) $\sigma$-reversible, then $\cK_n$ is (strongly) reversible with respect to $\sigma^{\otimes n}$. Moreover, $\widehat \cL_i$'s commute with each other and 
$$\e^{-t\cK_n} = \Phi_t^{\otimes n}.$$
That is, $\cK_n$ is a (strongly) $\sigma^{\otimes n}$-reversible Lindblad generator which generates the quantum Markov semigroup $\big\{\Phi_t^{\otimes n}:\, t\geq 0\big\}$. Now we can ask how the (reverse) hypercontractivity inequalities associated to $\Phi_t$ are related to those for $\Phi_t^{\otimes n}$. Equivalently, what is the relation between the log-Sobolev constants $\alpha_p(\cL)$ to $\alpha_p(\cK_n)$? In the commutative (classical) case the answer is easy;  $\alpha_p(\cK_n)$ equals $\alpha_p(\cL)$ for all $n$, and having a (reverse) hypercontractivity inequality for $\Phi_t$ immediately gives one for $\Phi_t^{\otimes n}$.  This is because in the classical case operator norms are multiplicative, or because the entropy function satisfies a certain subadditivity property (see e.g.,~\cite{MOS12}).
The aforementioned  property that, in the classical case, $\alpha_p(\cK_n)$ is independent of $n$, is usually called the \emph{tensorization} property. 

Tensorization property of log-Sobolev constants of quantum Lindblad generators, unlike its classical counterpart, is highly non-trivial. Thus proving (reverse) hypercontractivity inequalities that are independent of $n$ is a difficult problem in the non-commutative case. There are some attempts in this direction. Montanaro and Osborne in~\cite{MO10} 
proved such hypercontractivity inequalities for the qubit depolarizing channel (see also~\cite{KT13}). King~\cite{King14} generalized this result for all unital qubit QMS. 
Cubitt \emph{et al.}~developed the theory of quantum reverse hypercontractivity inequalities in the unital case in~\cite{CMT15} and proved some tensorization-type results. 
Also, Cubitt \emph{et al.}~\cite{TPK14} developed some techniques for proving bounds on log-Sobolev constants $\alpha_p(\cK_n)$ that are independent of $n$. Beigi and King~\cite{BK16} took the path of developing the theory of log-Sobolev inequalities not for the usual $q\to p$ norm, but for the completely bounded norm. The point is that completely bounded norms are automatically multiplicative~\cite{DJKR16}, so there is no problem of tensorization for the associated log-Sobolev constants. However, the existence of a complete version of the LSI constant was disproved in \cite{bardet2018hypercontractivity}.

In this section we prove two tensorization-type results, one for $1$-log-Sobolev constants which will be used for reverse hypercontractivity inequalities, and the other for $2$-log-Sobolev constants which would be useful for hypercontractivity inequalities. 

\begin{theorem}\label{thm:tensorization-alpha-1}
Let $\sigma_1, \dots, \sigma_n$ be arbitrary positive definite density matrices. Let $\cL_i(X) = X-\tr(\sigma_i X) \II$ be the simple generator associated to the state $\sigma_i$. Let
$$\widehat \cL_i:= \cI^{\otimes (i-1)}\otimes \cL_i\otimes \cI^{\otimes (n-i)},
$$
and define $\cK_n$ by~\eqref{eq:K-n}. Then we have
$\alpha_1(\cK_n) \geq \frac 14$, independently of $n$.
\end{theorem}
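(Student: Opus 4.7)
The plan is to reduce $\alpha_1(\cK_n)\ge\tfrac14$ to a closed-form evaluation of each single-site Dirichlet form, followed by the quantum Han inequality. Writing $\sigma^{(n)}:=\sigma_1\otimes\cdots\otimes\sigma_n$, both $\Ent_{1,\sigma^{(n)}}$ and $\cE_{1,\cK_n}$ are positively homogeneous of degree one (the first by Proposition~\ref{prop:ent}(ii); the second because the offset term arising when rescaling in the formula of Proposition~\ref{prop:dirichlet}(iv) is proportional to $\tr[\sigma^{(n)}\cK_n(X)]=\tr[\cK_n^*(\sigma^{(n)})X]=0$). Hence the target inequality $\tfrac14\Ent_{1,\sigma^{(n)}}(X)\le\cE_{1,\cK_n}(X)$ is scale-invariant, and it suffices to verify $\tfrac14 D(\rho\|\sigma^{(n)})\le\cE_{1,\cK_n}(\Gamma_{\sigma^{(n)}}^{-1}(\rho))$ on the slice $X=\Gamma_{\sigma^{(n)}}^{-1}(\rho)$ with $\rho\in\cD_+(\cH^{\otimes n})$, after rewriting the entropy side via Proposition~\ref{prop:ent}(iv).

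The central computation produces a clean formula for each local Dirichlet contribution. Since $\widehat\cL_i(Y)=Y-\II_i\otimes\tr_i[(\sigma_i\otimes\II_{\hat i})Y]$, a direct calculation---pulling $\sigma_i^{\pm 1/2}$ through the partial trace by cyclicity and then conjugating back by $(\sigma^{(n)})^{1/2}$---gives the ``replacer'' identity
\begin{equation*}
\Gamma_{\sigma^{(n)}}\!\big(\widehat\cL_i(\Gamma_{\sigma^{(n)}}^{-1}(\rho))\big)\;=\;\rho-\sigma_i\otimes\rho_{\hat i},\qquad \rho_{\hat i}:=\tr_i(\rho).
\end{equation*}
Plugging this into Proposition~\ref{prop:dirichlet}(iv), writing $\log\rho-\log\sigma^{(n)}=[\log\rho-\log(\sigma_i\otimes\rho_{\hat i})]+\II_i\otimes(\log\rho_{\hat i}-\log\sigma_{\hat i})$, and using $\tr_i(\rho-\sigma_i\otimes\rho_{\hat i})=0$ to annihilate the second piece, one arrives at the single-site identity
\begin{equation*}
4\,\cE_{1,\widehat\cL_i}\big(\Gamma_{\sigma^{(n)}}^{-1}(\rho)\big)\;=\;D(\rho\,\|\,\sigma_i\otimes\rho_{\hat i})+D(\sigma_i\otimes\rho_{\hat i}\,\|\,\rho).
\end{equation*}

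Discarding the non-negative second summand, invoking the chain rule $D(\rho\|\sigma_i\otimes\rho_{\hat i})=D(\rho\|\sigma^{(n)})-D(\rho_{\hat i}\|\sigma_{\hat i})$, summing over $i$, and using $D(\,\cdot\,\|\sigma^{(n)})=-S(\,\cdot\,)-\sum_k\tr[(\,\cdot\,)_k\log\sigma_k]$ for both the full state and its marginals rearranges the total into
\begin{equation*}
4\,\cE_{1,\cK_n}\big(\Gamma_{\sigma^{(n)}}^{-1}(\rho)\big)\;\ge\; D(\rho\|\sigma^{(n)})+\sum_{i=1}^n S(\rho_{\hat i})-(n-1)S(\rho).
\end{equation*}
The last two terms are non-negative by the quantum Han inequality $\sum_i S(\rho_{\hat i})\ge(n-1)S(\rho)$, a standard consequence of strong subadditivity, giving $4\cE_{1,\cK_n}(\Gamma_{\sigma^{(n)}}^{-1}(\rho))\ge D(\rho\|\sigma^{(n)})$ as required. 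The main technical hurdle is the middle paragraph: both verifying the replacer identity and carrying out the expansion cleanly enough to see the cross term vanish rely essentially on the \emph{simple} structure of each $\cL_i$, and it is precisely this reduction to ordinary relative entropies that allows the quantum Han inequality to close the argument with the $n$-independent constant $\tfrac14$, uniformly in the choice of $\sigma_1,\dots,\sigma_n$.
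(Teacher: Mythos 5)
Your proposal is correct and follows essentially the same route as the paper's proof: both start from the homogeneity reduction to the slice $X=\Gamma_{\sigma^{(n)}}^{-1}(\rho)$, both rely on the same replacer identity $\Gamma_{\sigma^{(n)}}\circ\widehat\cL_i\circ\Gamma_{\sigma^{(n)}}^{-1}(\rho)=\rho-\sigma_i\otimes\rho_{\hat i}$, both discard the non-negative term $D(\sigma_i\otimes\rho_{\hat i}\,\|\,\rho)$, and both close with the same use of strong subadditivity. The only difference is a cosmetic reorganization of the algebra: you notice that the cross term $\tr[(\rho-\sigma_i\otimes\rho_{\hat i})\,\II_i\otimes(\log\rho_{\hat i}-\log\sigma_{\hat i})]$ vanishes by partial trace, giving the cleaner local identity $4\,\cE_{1,\widehat\cL_i}=D(\rho\|\sigma_i\otimes\rho_{\hat i})+D(\sigma_i\otimes\rho_{\hat i}\|\rho)$, and then invoke the quantum Han inequality $\sum_i S(\rho_{\hat i})\ge(n-1)S(\rho)$; the paper instead expands $\tr[(\rho-\tau_i)(\log\rho-\log\sigma^{(n)})]=D(\rho\|\sigma^{(n)})+D(\tau_i\|\rho)-D(\tau_i\|\sigma^{(n)})$ and reduces, after rewriting in von Neumann entropies, to $S(\rho)\ge\sum_i H(A_i|A^{\sim i})_\rho$ via the chain rule and data processing. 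These two inequalities are equivalent, and both are the standard consequence of strong subadditivity.
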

\begin{remark}
	{Observe that \Cref{thm:tensorization-alpha-1} does not show the tensorization of $\alpha_1$ for the depolarizing semigroup, but only proves a positive lower bound independent of $n$. Hence, the tensorization of $\alpha_1$ is still an open problem.}
	\end{remark}

Letting $\sigma_i$'s to be equal in the above theorem, we obtain the promised tensorization-type result for the $1$-log-Sobolev constant.\footnote{Note that this result was independently obtained recently in \cite{capel2018quantum} by introducing the notion of a conditional log-Sobolev constant and finding a uniform lower bound on the latter. Moreover, a special case of the above theorem corresponding to $\sigma$ being the completely mixed state was proved in~\cite{MFW16}.
}

\begin{proof}
We need to show that for all $X_{A^n}\in \cP_+(\cH_{A^n})$ we have
\begin{align*}
\frac{1}{4} \Ent_{1, \sigma_{A^n}}(X_{A^n})\leq \cE_{1, \cK_n}(X_{A^n}),
\end{align*}
where $\sigma_{A_i} =\sigma_i$ and 
$$\sigma_{A^n} = \sigma_1\otimes \cdots \otimes \sigma_n.$$
Using parts (ii) of Proposition~\ref{prop:ent} and Proposition~\ref{prop:dirichlet}, without loss of generality we can assume that $X_{A^n}= \Gamma_{\sigma_{A^n}}^{-1}(\rho_{A^n})$ where $\rho_{A^n}\in \cD_+(\cH_{A^n})$ is a density matrix.  Then, using parts (iv) of Proposition~\ref{prop:ent} and Proposition~\ref{prop:dirichlet}, we need to show that
\begin{align}\label{eq:dd1}
D(\rho_{A^n}\|\sigma_{A^n})\leq \sum_{i=1}^n \tr\Big[ \Gamma_{\sigma_{A^n}}\circ \widehat \cL_i\circ \Gamma_{\sigma_{A^n}}^{-1} (\rho_{A^n})\cdot \big( \log \rho_{A^{n}} - \log(\sigma_{A^n})     \big)\Big].
\end{align}
Observe that 
\begin{align*}
\Gamma_{\sigma_{A^n}}\circ \widehat \cL_i\circ \Gamma_{\sigma_{A^n}}^{-1} & =  \cI^{\otimes (i-1)}\otimes \big(\Gamma_{\sigma_i}\circ\cL\circ\Gamma_{\sigma_i}^{-1}\big)\otimes \cI^{\otimes (n-i)}
 =  \cI^{\otimes (i-1)}\otimes \cL^*_i\otimes \cI^{\otimes (n-i)},
\end{align*}
with $\cL^*_i (Y) = Y- \tr(Y)\sigma_i$. Therefore, 
$$\Gamma_{\sigma_{A^n}}\circ \widehat \cL_i\circ \Gamma_{\sigma_{A^n}}^{-1} (\rho_{A^n}) = \rho_{A^n} - \rho_{A^{\sim i}}\otimes \sigma_{A_i},$$
where $A^{\sim i} = (A_1, \dots, A_{i-1}, A_{i+1}, \dots, A_n)$ and $\rho_{A^{\sim i}} = \tr_{A_i}(\rho_{A^n})$ is the partial trace of $\rho_{A^n}$ with respect to the $i$-th subsystem. 
Therefore,~\eqref{eq:dd1} is equivalent to 
\begin{align*}
D(\rho_{A^n}\|\sigma_{A^n}) &\leq  \sum_{i=1}^n\tr\Big[ \big(\rho_{A^n} - \rho_{A^{\sim i}}\otimes \sigma_{A_i} \big)\cdot \big( \log \rho_{A^{ n}} - \log(\sigma_{A^n})     \big)\Big]\\
&= \sum_{i=1}^n\Big[ D(\rho_{A^n}\|\sigma_{A^n}) + D(\rho_{A^{\sim i}}\otimes \sigma_{A_i}\| \rho_{A^n}) - D(\rho_{A^{\sim i}}\otimes \sigma_{A_i}\| \sigma_{A^n})\Big].
\end{align*}
Now since $D(\rho_{A^{\sim i}}\otimes \sigma_{A_i}\| \rho_{A^n})\geq 0$, it suffices to show that
\begin{align}
D(\rho_{A^n}\|\sigma_{A^n}) &\leq  \sum_{i=1}^n\Big[ D(\rho_{A^n}\|\sigma_{A^n})  - D(\rho_{A^{\sim i}}\otimes \sigma_{A_i}\| \sigma_{A^n})\Big].
\label{eq:dd2}
\end{align}
We note that $D(\xi_B\| \tau_B) = -H(B)_\xi -\tr(\xi\log \tau)$ where $H(B)_\xi = -\tr(\xi\log \xi)$ is the von Neumann entropy. Moreover, $\log (\xi\otimes \tau) = \log \xi\otimes I + I\otimes \log \tau$. Therefore,~\eqref{eq:dd2} is equivalent to  
\begin{align*}
-H(A^n)_{\rho} - \sum_{i=1}^n \tr(\rho_{A_i}\log \sigma_i) &\leq \sum_{i=1}^n\Big[  -H(A^n)_\rho -\sum_{j=1}^n \tr(\rho_{A_j}\log \sigma_j)+ H(A^{\sim i})_\rho + \sum_{j\neq i} \tr(\rho_{A_j}\log \sigma_j) \Big]\\
& =\sum_{i=1}^n\big[ - H(A^n)_\rho - \tr(\rho_{A_i}\log \sigma_i)+ H(A^{\sim i})_\rho  \big]\\
 & =\sum_{i=1}^n\big[ - H(A_i| A^{\sim i})_\rho - \tr(\rho_{A_i}\log \sigma_i)  \big].
\end{align*}
This is equivalent to 
\begin{align*}
H(A^n)_{\rho} &\geq\sum_{i=1}^n  H(A_i| A^{\sim i})_\rho, 
\end{align*}
which is an immediate consequence of the data processing inequality (i.e., $H(B|C)_\xi\geq H(B|CD)_\xi$) once we use the chain rule
$$H(A^n)_{\rho} = H(A_1)_{\rho}+\sum_{i=2}^n H(A_i| A_1, \dots, A_{i-1})_{\rho}.$$
This conclude the proof.

\end{proof}
\begin{remark}
	A similar proof was recently and independently obtained  in \cite{capel2018quantum}. {Moreover, the proof uses similar ideas to the proof of the tensorization property of the variant of $\alpha_2$ for the completely bounded norm in \cite{BK16}.}
\end{remark}

We can now use Corollary~\ref{cor:gen-hyper-1} and the fact that the simple generator is strongly reversible to conclude the following. 

\begin{corollary}\label{cor:reverse-HC-depolarizing}
Let $\sigma_1, \dots, \sigma_n$ be arbitrary positive definite density matrices. Let $\cL_i(X) = X-\tr(\sigma_i X) \II$ be the simple generator associated to the generalized depolarizing channel $\Phi_{t, i}(X)=\e^{-t} X + (1-\e^{-t}) \tr(\sigma_i X) \II$. 
Define $\sigma^{(n)} = \sigma_1\otimes \cdots \otimes \sigma_n$ and $\Phi_t^{(n)} = \Phi_{t, 1}\otimes \cdots \otimes \Phi_{t, n}$.
Then for $p\leq q<1$ and  $t\geq \log\frac{p-1}{q-1}$
 we have 
$$\big\| \Phi_t^{(n)}(X)\big\|_{p, \sigma^{(n)}}\geq \|X\|_{q, \sigma^{(n)}}, \qquad \forall n\geq 1,$$ 
where $X\in \cP_+(\cH^{\otimes n})$ is arbitrary. 
\end{corollary}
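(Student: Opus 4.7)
The plan is to combine the tensorization bound in Theorem~\ref{thm:tensorization-alpha-1} with the reverse hypercontractivity statement in Corollary~\ref{cor:gen-hyper-1}. Concretely, the product semigroup $\{\Phi_t^{(n)}:\, t\ge 0\}$ is generated by $\cK_n = \sum_{i=1}^n \widehat{\cL}_i$ acting on $\cB(\cH^{\otimes n})$, and by construction $\Phi_t^{(n)} = \e^{-t\cK_n}$ while $\sigma^{(n)} = \sigma_1\otimes \cdots \otimes \sigma_n$ is a (unique, up to scaling) fixed point in the Schr\"odinger picture. Hence $\cK_n$ is a primitive $\sigma^{(n)}$-reversible Lindblad generator, and the reverse hypercontractivity inequality to be proven is exactly of the form covered by Corollary~\ref{cor:gen-hyper-1} applied to $\cK_n$.

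To invoke Corollary~\ref{cor:gen-hyper-1}, I need (a) strong $\sigma^{(n)}$-reversibility of $\cK_n$ and (b) a lower bound on $\alpha_1(\cK_n)$. For (a), it suffices to verify strong $\sigma_i$-reversibility of each $\cL_i$, since the property is clearly preserved under the tensorization $\widehat{\cL}_i = \cI^{\otimes(i-1)}\otimes\cL_i\otimes\cI^{\otimes(n-i)}$ and under sums of commuting generators with compatible stationary states. For the simple generator $\cL(X)=X-\tr(\sigma X)\II$, a direct computation gives
\begin{align*}
\langle X,\cL(Y)\rangle_{1,\sigma}
&= \tr(\sigma X^\dagger Y)-\tr(\sigma Y)\,\tr(\sigma X^\dagger)\\
&= \tr(\sigma X^\dagger Y)-\overline{\tr(\sigma X)}\,\tr(\sigma Y)
= \langle \cL(X),Y\rangle_{1,\sigma},
\end{align*}
using that $\sigma$ is self-adjoint. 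Thus each $\cL_i$ is strongly $\sigma_i$-reversible, and $\cK_n$ is strongly $\sigma^{(n)}$-reversible.

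For (b), Theorem~\ref{thm:tensorization-alpha-1} is stated in precisely the right form: with the same $\cL_i$'s and the same $\cK_n$, one has $\alpha_1(\cK_n)\ge 1/4$, independent of $n$. Plugging this into the reverse-hypercontractivity part of Corollary~\ref{cor:gen-hyper-1} yields that, for $p\le q<1$ and
\[
t\ \ge\ \frac{1}{4\alpha_1(\cK_n)}\log\frac{p-1}{q-1}\ \ge\ \log\frac{p-1}{q-1},
\]
we have $\|\Phi_t^{(n)}(X)\|_{p,\sigma^{(n)}}\ge \|X\|_{q,\sigma^{(n)}}$ for every $X>0$ on $\cH^{\otimes n}$. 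Since the bound on $t$ is uniform in $n$, this establishes the corollary.

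There is essentially no new obstacle here: the work has been done in Theorem~\ref{thm:tensorization-alpha-1} (the $n$-independent $1$-log-Sobolev bound) and Corollary~\ref{cor:gen-hyper-1} (which relies on the quantum Stroock--Varopoulos inequality via strong reversibility). The only subtlety worth flagging is that the reverse-hypercontractivity part of Corollary~\ref{cor:gen-hyper-1} is stated for $X>0$, matching the hypothesis $X\in\cP_+(\cH^{\otimes n})$; extending to positive semidefinite $X$ (if desired) would be a routine limiting argument but is not required by the statement.
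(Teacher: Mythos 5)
Your proof is correct and follows essentially the same route as the paper: verify strong $\sigma_i$-reversibility of the simple generators (so that strong $\sigma^{(n)}$-reversibility of $\cK_n$ holds), invoke Theorem~\ref{thm:tensorization-alpha-1} for the uniform bound $\alpha_1(\cK_n)\ge 1/4$, and then apply the reverse-hypercontractivity part of Corollary~\ref{cor:gen-hyper-1} to $\cK_n$. One small slip: your displayed chain should read $t\ge \log\frac{p-1}{q-1}\ge \frac{1}{4\alpha_1(\cK_n)}\log\frac{p-1}{q-1}$ rather than the other way around, since $\alpha_1(\cK_n)\ge 1/4$ gives $\frac{1}{4\alpha_1(\cK_n)}\le 1$ and $\log\frac{p-1}{q-1}\ge 0$; the conclusion you draw from it is, however, the correct one.
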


We now state the second tensorization result which is about the $2$-log-Sobolev constant.

\begin{theorem}\label{thm:tensorization-alpha-2}
Let $\dim \cH=2$ and $\cL(X) = X-\tr(\sigma X) \II$ for some positive definite density matrix $\sigma\in \cD_+(\cH)$. Then we have 
$$\alpha_2(\cK_n)= \alpha_2(\cL), \qquad \forall n,$$
where $\cK_n$ is defined in~\eqref{eq:K-n}.
\end{theorem}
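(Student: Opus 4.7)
The bound $\alpha_2(\cK_n)\le \alpha_2(\cL)$ is immediate: restrict the log-Sobolev inequality to operators of the form $X_{A_1}\otimes \II^{\otimes (n-1)}$, for which both the entropy and the Dirichlet form reduce to their single-site values. The content of the theorem is the reverse direction $\alpha_2(\cK_n)\ge \alpha_2(\cL)$, which I would establish by induction on $n$, with the base case $n=1$ tautological.

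For the induction step, exploit the hypothesis $\dim\cH = 2$ to write any $X>0$ in $\cB(\cH^{\otimes n})$ as a $2\times 2$ block matrix over $\cB(\cH^{\otimes(n-1)})$,
$$X = \begin{pmatrix} A & B \\ B^\dagger & C \end{pmatrix},$$
with the blocking taken in the eigenbasis $\{\ket 0,\ket 1\}$ of $\sigma = s_0\ket 0\bra 0 + s_1\ket 1\bra 1$ on the first site. Split the Lindblad generator as $\cK_n = \widehat\cL_1 + \cK_{n-1}'$, where $\cK_{n-1}'$ acts trivially on the first tensor factor and as $\cK_{n-1}$ on the remaining $n-1$ sites. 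Because $\widehat\cL_1$ and $\cK_{n-1}'$ commute and are each self-adjoint with respect to $\langle\cdot,\cdot\rangle_{\sigma^{\otimes n}}$, the Dirichlet form splits additively, and a direct block-wise expansion gives $\cE_{2,\cK_{n-1}'}(X) = s_0\,\cE_{2,\cK_{n-1}}(A) + s_1\,\cE_{2,\cK_{n-1}}(C) + \mathrm{(non\text{-}negative\ cross\ terms\ in\ }B,B^\dagger)$, while $\cE_{2,\widehat\cL_1}(X)$ assembles into a quadratic functional of $A,B,C$ whose form reflects the structure of the simple generator $\cL(X)=X-\tr(\sigma X)\II$.

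The heart of the argument is Lemma~\ref{lem:entropy-inequality}, which I read as an \emph{entropic} analogue of King's $p$-norm inequality for $2\times 2$ block matrices \cite{King03}. King's inequality compares the weighted Schatten $p$-norm of such a block matrix to the $p$-norms of its diagonal blocks, and becomes an equality at $p=2$ on block-diagonal $X$; by differentiating in $p$ at $p=2$ and using Proposition~\ref{prop:norm-derivative} to convert $p$-derivatives into entropies, one should obtain an inequality of the shape
$$\Ent_{2,\sigma^{\otimes n}}(X) \;\le\; s_0\,\Ent_{2,\sigma^{\otimes(n-1)}}(A) + s_1\,\Ent_{2,\sigma^{\otimes(n-1)}}(C) + \frac{1}{\alpha_2(\cL)}\,\cE_{2,\widehat\cL_1}(X).$$
Once this is in hand the induction step is routine: apply the $(n-1)$-site hypothesis $\Ent_{2,\sigma^{\otimes (n-1)}}(\cdot)\le \alpha_2(\cL)^{-1}\cE_{2,\cK_{n-1}}(\cdot)$ to both $A$ and $C$, substitute into the lemma, and use the block decomposition of $\cE_{2,\cK_{n-1}'}(X)$ together with non-negativity of the cross terms to collapse everything into $\alpha_2(\cL)^{-1}\cE_{2,\cK_n}(X)$, which is the desired log-Sobolev inequality at level $n$.

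The main obstacle is extracting the correct infinitesimal form of King's inequality with the sharp constant $\alpha_2(\cL)^{-1}$ in front of the single-site Dirichlet term. Since King's inequality is an equality at $p=2$ on block-diagonal matrices, the content at $p=2$ lies entirely in the off-diagonal direction, and one must show that the first-order deviation in $p$ produces exactly $\cE_{2,\widehat\cL_1}(X)$ scaled by $\alpha_2(\cL)^{-1}$, not something looser. Tracking that the constants match requires careful bookkeeping via Proposition~\ref{prop:norm-derivative} and is precisely the point where the qubit hypothesis $\dim\cH=2$ is essential, as King's inequality is intrinsically tied to two-by-two block structure.
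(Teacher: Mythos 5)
Your high-level plan — induction on $n$, $2\times 2$ block decomposition of $X$ in the eigenbasis of $\sigma$ on the first site, and extracting an infinitesimal (entropic) version of King's $p$-norm inequality at $p=2$ via Proposition~\ref{prop:norm-derivative} — matches the paper's strategy, and your reduction of the problem to the two-site case (one qubit tensored with anything) is exactly what Theorem~\ref{thm:tensorization-alpha-2-stronger} does. However, the key entropic inequality you conjecture is not the one that actually comes out of differentiating King's inequality, and this changes the logic of the induction step in an essential way.

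What Lemma~\ref{lem:entropy-inequality} actually gives (writing the diagonal blocks as $A,B$ and the off-diagonal block as $C$, with $\theta = s_0$) is
\begin{align*}
\Ent_{2,\sigma\otimes\rho}(X) \;\le\; \Ent_{2,\sigma}(M) \;+\; \theta\,\Ent_{2,\rho}(A) \;+\; (1-\theta)\,\Ent_{2,\rho}(B) \;+\; \sqrt{\theta(1-\theta)}\bigl(\Ent_{2,\rho}(I_{2,2}(C)) + \Ent_{2,\rho}(I_{2,2}(C^\dagger))\bigr),
\end{align*}
where $M$ is the $2\times 2$ matrix of $\|\cdot\|_{2,\rho}$-norms of the blocks. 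The two extra pieces beyond what you wrote are (i) an entropy of $M$, which is a \emph{qubit} quantity and is indeed absorbed by the single-site log-Sobolev inequality plus the Cauchy--Schwarz-type bound $\cE_{2,\cL}(M)\le\cE_{2,\widehat\cL_1}(X)$, and (ii) entropies of the off-diagonal block (via $I_{2,2}(C)$ and $I_{2,2}(C^\dagger)$), which are \emph{$(n-1)$-site} quantities and can only be controlled by the $(n-1)$-site log-Sobolev inequality, i.e.\ the induction hypothesis. Your conjectured inequality tries to absorb both into $\alpha_2(\cL)^{-1}\cE_{2,\widehat\cL_1}(X)$, but the cross term in $\cE_{2,\widehat\cL_1}(X)$ is just $2\sqrt{\theta(1-\theta)}\|C\|_{2,\rho}^2$ — a single number — whereas $\Ent_{2,\rho}(I_{2,2}(C))$ carries a logarithmic factor that grows with the dimension of $\cH'$, so there is no uniform constant making your inequality true.

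This has a concrete downstream consequence for your induction step: you plan to discard the $\sqrt{\theta(1-\theta)}\bigl(\langle B,\cK_{n-1}(B)\rangle + \langle B^\dagger,\cK_{n-1}(B^\dagger)\rangle\bigr)$ cross terms in the block expansion of $\cE_{2,\cK_{n-1}'}(X)$ because they are non-negative. But the paper needs precisely these cross terms to pay for the off-diagonal entropies. After applying the $(n-1)$-site LSI to $I_{2,2}(C)$ and $I_{2,2}(C^\dagger)$, what you get is $\cE_{2,\cK}(I_{2,2}(C)) + \cE_{2,\cK}(I_{2,2}(C^\dagger))$, and turning this into the actual cross terms $\langle C,\cK(C)\rangle_\rho + \langle C^\dagger,\cK(C^\dagger)\rangle_\rho$ requires a separate non-trivial inequality — the paper's Lemma~\ref{lem:2-positive}, proved by exhibiting a positive $4\times 4$ block operator, complete positivity of $\Phi_t$ and $\Gamma_\rho^{-1/2}$, and differentiating at $t=0$. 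This lemma is absent from your proposal, and without it, as well as without the correct form of the entropic King inequality, the inductive step does not close.
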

Our main tool to prove this theorem is the following entropic inequality that is of independent interest and can be useful elsewhere. 

\begin{lemma}\label{lem:entropy-inequality}
Let $\cH$ and $\cH'$ be Hilbert spaces with $\dim \cH=2$. Let $X\in \cP(\cH\otimes \cH')$ be a positive semidefinite matrix with the block form
\begin{align}\label{eq:X-block}
X=\begin{pmatrix}
A & C\\
C^{\dagger} & B
\end{pmatrix},
\end{align}
where $A, B, C\in \cB(\cH')$. For a density matrix $\rho\in \cD_+(\cH')$, the matrix $M$ defined as
\begin{align}\label{eq:M-block}
M= \begin{pmatrix}
\|A\|_{2, \rho} & \|C\|_{2, \rho}\\
\|C^{\dagger}\|_{2, \rho} & \|B\|_{2, \rho}
\end{pmatrix}
\end{align}
is positive semidefinite. Moreover, let $\sigma\in \cD_+(\cH)$ be a density matrix of the form 
\begin{align}\label{eq:sigma-diag}
\sigma = \begin{pmatrix}
\theta & 0\\
0 & 1-\theta
\end{pmatrix},
\end{align}
where $\theta\in (0,1)$. Then we have 
\begin{align}
\Ent_{2, \sigma\otimes \rho}(X)\leq &~ \Ent_{2, \sigma}(M) + \theta \Ent_{2, \rho}(A) +(1-\theta)\Ent_{2, \rho}(B)\nonumber\\
& ~+ \sqrt{\theta(1-\theta)}\,\Ent_{2, \rho}(I_{2, 2}(C)) + \sqrt{\theta(1-\theta)}\,\Ent_{2, \rho}(I_{2, 2}(C^\dagger))\,,
\end{align}
{where the map $I_{2,2}$ is defined with respect to the state $\rho$.}
\end{lemma}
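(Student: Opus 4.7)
The plan is to view the claimed inequality as the infinitesimal form, at $p=2$, of a one-parameter King-type block matrix inequality. For $p\geq 2$ let $M_p\in\cB(\cH)$ be the $2\times 2$ matrix obtained from $M$ by replacing each $\|\cdot\|_{2,\rho}$ entry with the corresponding $\|\cdot\|_{p,\rho}$; the strategy is to first check that $M\geq 0$, then establish
\begin{equation*}
\|X\|_{p,\sigma\otimes\rho}\;\leq\;\|M_p\|_{p,\sigma}, \qquad p\geq 2,
\end{equation*}
with equality at $p=2$, and finally differentiate at $p=2$ to extract the lemma via Proposition~\ref{prop:norm-derivative}.

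Positivity of $M$ reduces, via nonnegativity of its entries and $\|C\|_{2,\rho}=\|C^\dagger\|_{2,\rho}$, to showing $\|C\|_{2,\rho}^2\leq \|A\|_{2,\rho}\|B\|_{2,\rho}$. Conjugating $X$ by $\II_\cH\otimes\rho^{1/4}$ preserves positivity, and the resulting PSD block matrix has blocks $\tilde A=\rho^{1/4}A\rho^{1/4}$, $\tilde B=\rho^{1/4}B\rho^{1/4}$, $\tilde C=\rho^{1/4}C\rho^{1/4}$ with $\|\tilde A\|_\HS=\|A\|_{2,\rho}$ and analogous identities for $B,C$. Factoring $\tilde C=\tilde A^{1/2}K\tilde B^{1/2}$ for a contraction $K$ (the standard form for the off-diagonal of a PSD block matrix) and applying Hilbert--Schmidt Cauchy--Schwarz to $\tr(\tilde A K\tilde B K^\dagger)$ gives the desired estimate.

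For the $p$-parameter inequality, I would rewrite $\|X\|_{p,\sigma\otimes\rho}=\|\Gamma_{\sigma\otimes\rho}^{1/p}(X)\|_p$. Because $\sigma=\mathrm{diag}(\theta,1-\theta)$, the operator $\Gamma_{\sigma\otimes\rho}^{1/p}(X)$ is a PSD block matrix with blocks $\theta^{1/p}\Gamma_\rho^{1/p}(A)$, $(\theta(1-\theta))^{1/(2p)}\Gamma_\rho^{1/p}(C)$, $(1-\theta)^{1/p}\Gamma_\rho^{1/p}(B)$. King's inequality from~\cite{King03}, which upper-bounds the Schatten $p$-norm of a PSD block matrix by the $p$-norm of the $2\times 2$ matrix of $p$-norms of its blocks, then yields $\|\Gamma_{\sigma\otimes\rho}^{1/p}(X)\|_p\leq \|\Gamma_\sigma^{1/p}(M_p)\|_p=\|M_p\|_{p,\sigma}$. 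At $p=2$, direct expansion of $\tr\bigl(\Gamma_{\sigma\otimes\rho}^{1/2}(X)^2\bigr)$ and of $\tr\bigl(\Gamma_\sigma^{1/2}(M)^2\bigr)$ each produce $\theta\|A\|_{2,\rho}^2+2\sqrt{\theta(1-\theta)}\|C\|_{2,\rho}^2+(1-\theta)\|B\|_{2,\rho}^2$, confirming the equality case.

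Finally, differentiate at $p=2$. On the left-hand side, \eqref{eq:norm-derivative-ind} (using $X=X^\dagger\geq 0$ so $I_{2,2}(X)=X$) yields $\tfrac{d}{dp}\|X\|_{p,\sigma\otimes\rho}|_{p=2}=\tfrac{1}{4\|X\|_{2,\sigma\otimes\rho}}\Ent_{2,\sigma\otimes\rho}(X)$. On the right-hand side, Proposition~\ref{prop:norm-derivative} with $X_p=M_p$ and $Z_p=\tfrac{d}{dp}M_p$, evaluated at $p=2$ where $M$ is PSD Hermitian and $|\Gamma_\sigma^{1/2}(M)|^{0}=\II$, gives $\tfrac{d}{dp}\|M_p\|_{p,\sigma}|_{p=2}=\tfrac{1}{4\|M\|_{2,\sigma}}\bigl(\Ent_{2,\sigma}(M)+4\langle Z_2,M\rangle_\sigma\bigr)$. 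Applying \eqref{eq:norm-derivative-ind} to each entry of $M_p$ separately computes $Z_2$, with diagonal entries $\tfrac{1}{4\|A\|_{2,\rho}}\Ent_{2,\rho}(A)$ and $\tfrac{1}{4\|B\|_{2,\rho}}\Ent_{2,\rho}(B)$, and off-diagonal entry $\tfrac{1}{8\|C\|_{2,\rho}}\bigl(\Ent_{2,\rho}(I_{2,2}(C))+\Ent_{2,\rho}(I_{2,2}(C^\dagger))\bigr)$. A short calculation of $\langle Z_2,M\rangle_\sigma=\tr(Z_2\,\sigma^{1/2}M\sigma^{1/2})$ for diagonal $\sigma$ produces precisely $\tfrac{1}{4}$ times $\theta\Ent_{2,\rho}(A)+(1-\theta)\Ent_{2,\rho}(B)+\sqrt{\theta(1-\theta)}\bigl[\Ent_{2,\rho}(I_{2,2}(C))+\Ent_{2,\rho}(I_{2,2}(C^\dagger))\bigr]$, and multiplying the derivative inequality by $4\|X\|_{2,\sigma\otimes\rho}=4\|M\|_{2,\sigma}$ closes the proof. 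The delicate step is the application of King's inequality to the reweighted block matrix: one must verify that the weights $\theta^{1/p},(1-\theta)^{1/p},(\theta(1-\theta))^{1/(2p)}$ integrate cleanly so that the resulting bounding matrix of blockwise $p$-norms coincides exactly with $\Gamma_\sigma^{1/p}(M_p)$, with no stray prefactors.
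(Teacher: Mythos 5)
Your proposal follows exactly the same route as the paper: define $M_p$, prove positivity via the contraction factorization of the off-diagonal block of a PSD block matrix plus a H\"older/Cauchy--Schwarz estimate, apply King's inequality~\cite{King03} to the $\Gamma_{\sigma\otimes\rho}^{1/p}$-reweighted block matrix to get $\|X\|_{p,\sigma\otimes\rho}\leq\|M_p\|_{p,\sigma}$, note equality at $p=2$, and differentiate at $p=2$ via Proposition~\ref{prop:norm-derivative}. The computations of $Z_2$ and of the pairing $\langle Z_2,M\rangle_\sigma$ match the paper's, so the argument is correct and essentially identical to the one given.
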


\begin{proof}
For any $p\geq 2$ define 
$$M_p := \begin{pmatrix}
\|A\|_{p, \rho} & \|C\|_{p, \rho}\\
\|C^{\dagger}\|_{p, \rho} & \|B\|_{p, \rho}
\end{pmatrix},$$
so that $M_2=M$. Since $X\geq 0$, both $A$ and $B$ are positive semidefinite. Moreover, we have 
$$\Gamma_{\II\otimes \rho}^{\frac 1p}(X)=  \begin{pmatrix}
\Gamma_\rho^{\frac 1p}(A) & \Gamma_\rho^{\frac 1p}(C)\\
\Gamma_\rho^{\frac 1p}(C^{\dagger}) & \Gamma_\rho^{\frac 1p}(B)
\end{pmatrix}\geq 0.$$
As a result, according to~Theorem IX.5.9 of \cite{Bhatia15} there exists a \emph{contraction} $R\in \cB(\cH')$ such that $\Gamma_\rho^{\frac 1p}(C) = \big(\Gamma_\rho^{\frac 1p}(A)\big)^{\frac 12} R \big(\Gamma_\rho^{\frac 1p}(B)\big)^{\frac 12}$. Therefore, by H\"older's inequality we have
\begin{align*}
\big\|\Gamma_\rho^{\frac 1p}(C)\big\|_{p} & =\big\|\big(\Gamma_\rho^{\frac 1p}(A)\big)^{\frac 12} R \big(\Gamma_\rho^{\frac 1p}(B)\big)^{\frac 12}\big\|_{p} \\
& \leq \big\|\big(\Gamma_\rho^{\frac 1p}(A)\big)^{\frac 12}\big\|_{ 2p} \cdot \|R\|_{\infty}\cdot\big\|\big(\Gamma_\rho^{\frac 1p}(B)\big)^{\frac 12}\big\|_{ 2p}\\
& \leq \big\|\big(\Gamma_\rho^{\frac 1p}(A)\big)^{\frac 12}\big\|_{ 2p} \cdot \big\|\big(\Gamma_\rho^{\frac 1p}(B)\big)^{\frac 12}\big\|_{ 2p}\\
& = \big\|\Gamma_\rho^{\frac 1p}(A)\big\|^{\frac 12}_{ p} \cdot \big\|\Gamma_\rho^{\frac 1p}(B)\big\|^{\frac12}_{ p}.
\end{align*}
Then using $\|Y\|_{p, \rho} = \|\Gamma_\rho^{1/p}(Y)\|_{p}$, we find that 
$$\|C\|_{p, \rho} \leq \|A\|_{p, \rho}^{\frac 12}\cdot\|B\|_{p, \rho}^{\frac 12},$$
and hence $M_p\geq 0$. In particular, $M_2=M\geq 0$ and $\Ent_{2, \rho}(M)$ is well-defined. 

Define 
$\psi(p):= \|M_p\|_{p, \sigma} - \|X\|_{p, \sigma\otimes \rho}$. It is shown by King~\cite{King03} that $\psi(p)\geq 0$ for all $p\geq 2$. Indeed, this inequality is proven in~\cite{King03} in the special case where $\sigma$ and $\rho$ are the identity operators on the relevant spaces. Nevertheless, we have 
$$\|X\|_{p, \sigma \otimes \rho} = \left\|
\begin{pmatrix}
\theta^{\frac 1p}\Gamma_{\rho}^{\frac 1p}(A) & \big(\theta(1-\theta)\big)^{\frac 1{2p}}\Gamma_{\rho}^{\frac 1p}(C)\\
\big(\theta(1-\theta)\big)^{\frac 1{2p}}\Gamma_{\rho}^{\frac 1p}(C^{\dagger}) & (1-\theta)^{\frac 1p}\Gamma_{\rho}^{\frac 1p}(B)
\end{pmatrix}\right\|_{p},$$
and
$$\|M_p\|_{p, \sigma} = \left\|
\begin{pmatrix}
\theta^{\frac 1p}\|\big\|\Gamma_{\rho}^{\frac 1p}(A)\big\|_{p} & \big(\theta(1-\theta)\big)^{\frac 1{2p}}\big\|\Gamma_{\rho}^{\frac 1p}(C)\big\|_{p}\\
\big(\theta(1-\theta)\big)^{\frac 1{2p}}\big\|\Gamma_{\rho}^{\frac 1p}(C^{\dagger})\big\|_{p} & (1-\theta)^{\frac 1p}\big\|\Gamma_{\rho}^{\frac 1p}(B)\big\|_{p}
\end{pmatrix}\right\|_{p},$$
Thus, King's result holds for arbitrary $\rho$ and diagonal $\sigma$ as well, and we have $\psi(p)\geq 0$ for all $p\geq 2$. On the other hand,  a straightforward computation verifies that $\psi(2)=0$. This means that 
$\psi'(2)\geq 0$, i.e.,
$$\frac{\dd}{\dd p} \big(\|M_p\|_{p, \sigma} - \|X\|_{p, \sigma\otimes \rho}\big) \bigg|_{p=2}\geq 0.$$
The derivatives can be computed using Proposition~\ref{prop:norm-derivative}.  We have 
\begin{align}\label{eq:der-X}
\frac{\dd}{\dd p}\|X\|_{p,\sigma\otimes \rho} \bigg|_{p=2}= \frac{1}{4}\|X\|_{2, \sigma\otimes \rho}^{-1}\cdot  \Ent_{2, \sigma\otimes \rho}(X),
\end{align}
and
$$\frac{\dd}{\dd p}\|M_p\|_{p,\sigma} \bigg|_{p=2}= \frac{1}{4}\|M\|_{2, \sigma}^{-1}\cdot \Big( \Ent_{2, \sigma}(M) + 4\tr\big[  \Gamma_\sigma^{\frac 12}(M'_2) \cdot \Gamma_\sigma^{\frac 12} (M) \big] \Big),$$
where
$$M'_2=\frac{\dd}{\dd p}M_p\bigg|_{p=2}=\frac{1}{4}\begin{pmatrix}
\|A\|_{2, \rho}^{-1}\cdot  \Ent_{2, \rho}(A) & w\\
w & \|B\|_{2, \rho}^{-1}\cdot  \Ent_{2, \rho}(B)
\end{pmatrix},$$
and $w =\|C\|_{2, \rho}^{-1}\cdot \left( \frac{1}{2}\Ent_{2, \rho}\big(I_{2, 2}(C)\big) +\frac{1}{2}\Ent_{2, \rho}\big(I_{2, 2}(C^{\dagger})\big)\right)$. We conclude that 
\begin{align*}
\frac{\dd}{\dd p}\|M_p\|_{p,\sigma} \bigg|_{p=2}= \frac{1}{4}\|M\|_{2, \sigma}^{-1}\cdot \Big(& \Ent_{2, \sigma}(M) + \theta \Ent_{2, \rho}(A) +(1-\theta) \Ent_{2, \rho}(B) \\ 
&\, +\sqrt{\theta(1-\theta)}\Ent_{2, \rho}\big(I_{2, 2}(C)\big)+\sqrt{\theta(1-\theta)}\Ent_{2, \rho}\big(I_{2, 2}(C^\dagger)\big) \Big).
\end{align*}
Comparing to~\eqref{eq:der-X} and using $\|M\|_{2, \sigma}=\|X\|_{2, \sigma\otimes \rho}$ the desired inequality follows.

\end{proof}

We need yet another lemma to prove Theorem~\ref{thm:tensorization-alpha-2}.

\begin{lemma}\label{lem:2-positive}
For any Lindblad generator $\cK$ that is $\rho$-reversible for some positive definite density matrix $\rho$ we have
$$\cE_{2, \cK}\big(I_{2, 2}(C)\big) +\cE_{2, \cK}\big(I_{2, 2}(C^\dagger)\big) \leq \langle C, \cK(C)\rangle_\rho+\langle C^\dagger, \cK(C^\dagger)\rangle_\rho.$$
for any $C$.
\end{lemma}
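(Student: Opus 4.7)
My plan has two stages. First, I would reduce the inequality to a statement about a single self-adjoint operator by embedding $C$ as the off-diagonal block of a $2\times 2$ self-adjoint matrix. Second, I would prove the reduced self-adjoint inequality using the Jordan decomposition combined with positivity of the semigroup $\Phi_t = \e^{-t\cK}$.

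For the block embedding, consider the self-adjoint operator
$$\widetilde C := \begin{pmatrix} 0 & C \\ C^\dagger & 0 \end{pmatrix}\in \cB(\mathbb{C}^2\otimes \cH),$$
together with the amplified reference state $\widetilde\rho := \frac{1}{2}\II_2\otimes \rho$ and the lifted Lindblad generator $\widetilde \cK := \cI_{\mathbb{C}^2}\otimes \cK$. A direct calculation shows that $\widetilde\cK$ is $\widetilde\rho$-reversible. Computing the modulus of the block-off-diagonal operator $\Gamma_{\widetilde\rho}^{1/2}(\widetilde C)$ in block form yields
$$I_{2,2}(\widetilde C) = \begin{pmatrix} I_{2,2}(C^\dagger) & 0 \\ 0 & I_{2,2}(C) \end{pmatrix},$$
and straightforward bookkeeping then gives
\begin{align*}
2\,\cE_{2,\widetilde\cK}(\widetilde C) &= \langle C, \cK(C)\rangle_\rho + \langle C^\dagger, \cK(C^\dagger)\rangle_\rho, \\
2\,\cE_{2,\widetilde\cK}(I_{2,2}(\widetilde C)) &= \cE_{2,\cK}(I_{2,2}(C)) + \cE_{2,\cK}(I_{2,2}(C^\dagger)).
\end{align*}
Hence the lemma reduces to proving, for any $\rho$-reversible Lindblad generator $\cK$ and any self-adjoint $X$, the single inequality $\cE_{2,\cK}(I_{2,2}(X))\leq \cE_{2,\cK}(X)$.

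For this self-adjoint case, set $Y := \Gamma_\rho^{1/2}(X)$ and take its Jordan decomposition $Y = Y_+ - Y_-$ with $Y_\pm \geq 0$ and $Y_+Y_- = 0$. Defining $A_\pm := \Gamma_\rho^{-1/2}(Y_\pm)\geq 0$, we have $X = A_+ - A_-$ while $I_{2,2}(X) = \Gamma_\rho^{-1/2}(|Y|) = A_+ + A_-$. Using $\rho$-reversibility (the self-adjointness of $\cK$ with respect to $\langle \cdot,\cdot\rangle_\rho$), a short expansion of the Dirichlet forms yields
$$\cE_{2,\cK}(X) - \cE_{2,\cK}(I_{2,2}(X)) = -4\,\langle A_+, \cK(A_-)\rangle_\rho,$$
so it remains to show $\langle A_+, \cK(A_-)\rangle_\rho \leq 0$. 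The function $\varphi(t) := \langle A_+, \Phi_t(A_-)\rangle_\rho = \tr(\rho^{1/2}A_+\rho^{1/2}\Phi_t(A_-))$ is non-negative for every $t\geq 0$ by positivity of $\Phi_t$, while $\varphi(0) = \tr(\rho^{1/2}A_+\rho^{1/2}A_-) = \tr(Y_+Y_-) = 0$. Hence $\varphi'(0^+)\geq 0$, which gives $\langle A_+, \cK(A_-)\rangle_\rho = -\varphi'(0)\leq 0$.

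I expect the main technical obstacle to lie in the block-reduction step: extracting $|\Gamma_{\widetilde\rho}^{1/2}(\widetilde C)|$ correctly in block form (which swaps the roles of $C$ and $C^\dagger$ through $|D|$ and $|D^\dagger|$ for $D:=\Gamma_\rho^{1/2}(C)$), and keeping straight the scalar factors coming from $\widetilde\rho = \frac{1}{2}\II_2\otimes \rho$. Once the twisted orthogonality $\tr(\rho^{1/2}A_+\rho^{1/2}A_-) = 0$ is noted, the self-adjoint inequality itself is essentially the non-commutative incarnation of the classical fact that Dirichlet forms contract under the absolute-value map.
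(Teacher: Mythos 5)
Your argument is correct, and its organization is genuinely different from the paper's even though the two turn out to be closely related. You embed $C$ as the off-diagonal block of the self-adjoint operator $\widetilde C$ on $\mathbb{C}^2\otimes\cH$, with amplified state $\widetilde\rho=\tfrac12\II_2\otimes\rho$ and lifted generator $\widetilde\cK=\cI\otimes\cK$, thereby reducing the lemma to the clean self-adjoint contraction $\langle I_{2,2}(X),\cK(I_{2,2}(X))\rangle_\rho\le\langle X,\cK(X)\rangle_\rho$, which you prove via Jordan decomposition, the twisted orthogonality $\tr(\rho^{1/2}A_+\rho^{1/2}A_-)=\tr(Y_+Y_-)=0$, and positivity of $\Phi_t=\e^{-t\cK}$. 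The paper instead goes at general $C$ directly: it forms the two positive semidefinite $2\times 2$ block operators $Z_0,Z_1$ with diagonals $I_{2,2}(C),I_{2,2}(C^\dagger)$ and off-diagonals $\pm C,\pm C^\dagger$, derives their positivity from that of $\bigl(\begin{smallmatrix}|D|&\pm D^\dagger\\ \pm D&|D^\dagger|\end{smallmatrix}\bigr)$ and complete positivity of $\Gamma_\rho^{-1/2}$, and reads off the inequality from $g(t):=\langle Z_1,\id\otimes\Psi_t(Z_0)\rangle_{\II\otimes\rho}\ge 0$ together with $g(0)=0$. Both are derivative-at-zero arguments and both ultimately require $2$-positivity of $\Phi_t$; indeed, the Jordan parts of your $\Gamma_{\widetilde\rho}^{1/2}(\widetilde C)$ are precisely (up to scalar and a block transposition) the paper's matrices $Y_0,Y_1$, so the two arguments are secretly one. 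Your factored presentation has the advantage of making the classical fact that Dirichlet forms contract under the absolute-value map transparent. One caveat on notation: with the paper's definition $\cE_{2,\cL}(X)=\langle I_{2,2}(X),\cL(X)\rangle_\sigma$ for non-positive $X$, the off-diagonal self-adjoint $\widetilde C$ actually has $\cE_{2,\widetilde\cK}(\widetilde C)=0$ (a block-diagonal pairs to zero against a block-off-diagonal), so your displays $2\,\cE_{2,\widetilde\cK}(\widetilde C)=\langle C,\cK(C)\rangle_\rho+\langle C^\dagger,\cK(C^\dagger)\rangle_\rho$ and $\cE_{2,\cK}(X)-\cE_{2,\cK}(I_{2,2}(X))=-4\langle A_+,\cK(A_-)\rangle_\rho$ should have $\langle\widetilde C,\widetilde\cK(\widetilde C)\rangle_{\widetilde\rho}$ and $\langle X,\cK(X)\rangle_\rho$ on the left-hand side; that is what your computation uses and what the lemma actually asks for, so this is a notational slip, not a gap.
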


\begin{proof}
Define $D:=\Gamma_{\rho}^{\frac 12}(C)$. Then for $j\in \{0,1\}$
$$Y_{j}:= \begin{pmatrix}
|D| & (-1)^j D^\dagger\\
(-1)^j D & |D^\dagger|
\end{pmatrix}\geq 0,$$
is positive semidefinite~\cite{Bhatia15}. Since $\Gamma_{\rho}^{-1/2}$ is completely positive
we have
$$Z_{j}:=\id\otimes \Gamma_{\rho}^{-1/2}(Y_j) = \begin{pmatrix}
I_{2, 2}(C) & (-1)^j C^\dagger\\
(-1)^j C & I_{2, 2}(C^\dagger)
\end{pmatrix}\geq0.$$
On the other hand, $\Psi_t= \e^{-t\cK}$ is completely positive. Therefore,  
$$\id \otimes  \Psi_t (Z_0) = \begin{pmatrix}
 \Psi_t(I_{2, 2}(C)) &  \Psi_t(C^\dagger)\\
 \Psi_t(C) &  \Psi_t(I_{2, 2}(C^\dagger))
\end{pmatrix}\geq 0,$$
is positive semidefinite. Putting these together we find that 
$$g(t):=\langle Z_1, \id \otimes  \Psi_t (Z_0)\rangle_{\II\otimes \rho}\geq 0, \qquad \forall t\geq 0.$$
We note that 
\begin{align*}
g(t) & = \big\langle I_{2, 2}(C), \Psi_t(I_{2, 2}(C))\big\rangle_{\rho} + \big\langle I_{2, 2}(C^\dagger), \Psi_t(I_{2, 2}(C^\dagger))\big\rangle_{\rho} - \big\langle C, \Psi_t(C)\big\rangle_{\rho}- \big\langle C^\dagger, \Psi_t(C^\dagger)\big\rangle_{\rho}.
\end{align*} 
From this expression it is clear that 
$$g(0)= \|I_{2, 2}(C)\|_{2, \rho}^2 + \|I_{2, 2}(C^\dagger)\|_{2, \rho}^2 - \|C\|_{2, \rho}^2 -\|C^\dagger\|_{2, \rho}^2 =0.$$
Therefore, we must have $g'(0)\geq 0$ which is equivalent to the desired inequality.

\end{proof}

Now we have all the required tools for proving Theorem~\ref{thm:tensorization-alpha-2}. Indeed, we can prove a stronger statement out of which Theorem~\ref{thm:tensorization-alpha-2} is implied by a simple induction.

\begin{theorem}\label{thm:tensorization-alpha-2-stronger}
Let $\dim \cH=2$ and $\cL(X) = X-\tr(\sigma X) \II$ for some positive definite density matrix $\sigma\in \cD_+(\cH)$. Also let $\cK$ be a Lindblad generator associated to a primitive QMS that is reversible with respect to some positive definite state $\rho\in \cD_+(\cH')$. 
Then we have 
$$\alpha_2(\cL\otimes \cI' + \cI\otimes \cK)= \min\{\alpha_2(\cL), \, \alpha_2(\cK)\},$$
where $\cI$ and $\cI'$ denote the identity superoperators acting on $\cB(\cH)$ and $\cB(\cH')$ respectively.   

\end{theorem}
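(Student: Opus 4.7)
The plan is to prove the stronger equality $\alpha_2(\cL\otimes \cI'+\cI\otimes\cK)=\min\{\alpha_2(\cL),\alpha_2(\cK)\}$ by establishing the two inequalities separately. For the upper bound, I would test the infimum in the definition of $\alpha_2(\cL\otimes \cI'+\cI\otimes\cK)$ on operators of the factorized form $X=Y\otimes \II'$ and $X=\II\otimes Z$. Using $\cL(\II)=0$ and $\cK(\II')=0$ (primitivity), a short computation shows that on such operators both the entropy and the Dirichlet form reduce to their single-system counterparts, so the ratio equals $\cE_{2,\cL}(Y)/\Ent_{2,\sigma}(Y)$ in one case and $\cE_{2,\cK}(Z)/\Ent_{2,\rho}(Z)$ in the other. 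Taking infima over $Y>0$ and $Z>0$ yields the desired upper bound by $\min\{\alpha_2(\cL),\alpha_2(\cK)\}$.

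For the lower bound, I would diagonalize $\sigma=\operatorname{diag}(\theta,1-\theta)$ and write a general $X>0$ in the block form~\eqref{eq:X-block}. Applying Lemma~\ref{lem:entropy-inequality} bounds $\Ent_{2,\sigma\otimes\rho}(X)$ by $\Ent_{2,\sigma}(M)$ plus $\theta\Ent_{2,\rho}(A)+(1-\theta)\Ent_{2,\rho}(B)+\sqrt{\theta(1-\theta)}[\Ent_{2,\rho}(I_{2,2}(C))+\Ent_{2,\rho}(I_{2,2}(C^\dagger))]$. Since $M\geq 0$ (by the lemma), $A,B\geq 0$ (as diagonal blocks of $X\geq 0$), and $I_{2,2}(C),I_{2,2}(C^\dagger)\geq 0$ by construction, I can apply the $2$-log-Sobolev inequality for $\cL$ to the first term and for $\cK$ to each of the remaining four, replacing the entropies by Dirichlet forms divided by $\alpha_2(\cL)$ or $\alpha_2(\cK)$. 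Bounding both denominators below by $\min\{\alpha_2(\cL),\alpha_2(\cK)\}$, the proof reduces to two Dirichlet-form inequalities.

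The first required inequality is $\cE_{2,\cL}(M)\leq \cE_{2,\cL\otimes\cI'}(X)$. Using $\cL(\cdot)=\cdot-\tr(\sigma\,\cdot)\II$ and computing both sides in block form, this simplifies to
$$\theta(1-\theta)\bigl(\|A\|_{2,\rho}-\|B\|_{2,\rho}\bigr)^2\leq \theta(1-\theta)\,\|A-B\|_{2,\rho}^2,$$
which is the reverse triangle inequality for the $\|\cdot\|_{2,\rho}$-norm (a consequence of Minkowski, since $p=2\ge1$). The second is
$$\theta\cE_{2,\cK}(A)+(1-\theta)\cE_{2,\cK}(B)+\sqrt{\theta(1-\theta)}\bigl[\cE_{2,\cK}(I_{2,2}(C))+\cE_{2,\cK}(I_{2,2}(C^\dagger))\bigr]\leq \cE_{2,\cI\otimes\cK}(X).$$
A direct block computation shows $\cE_{2,\cI\otimes\cK}(X)$ equals the same expression but with $\cE_{2,\cK}(I_{2,2}(C))+\cE_{2,\cK}(I_{2,2}(C^\dagger))$ replaced by $\langle C,\cK(C)\rangle_\rho+\langle C^\dagger,\cK(C^\dagger)\rangle_\rho$, so the inequality is exactly Lemma~\ref{lem:2-positive}. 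Combining both estimates gives $\Ent_{2,\sigma\otimes\rho}(X)\leq \frac{1}{\min\{\alpha_2(\cL),\alpha_2(\cK)\}}\cE_{2,\cL\otimes\cI'+\cI\otimes\cK}(X)$, as needed.

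The main obstacle I anticipate is handling the off-diagonal block $C$: it is not positive (nor self-adjoint in general), so the $2$-log-Sobolev inequality does not apply to $C$ directly. This is precisely what forces the introduction of $I_{2,2}(C)$ in Lemma~\ref{lem:entropy-inequality}, and Lemma~\ref{lem:2-positive} is exactly tailored to control the resulting Dirichlet form via the $2\times 2$ block-positivity trick. The dimension-two restriction on $\cH$ enters at this step, through King's block-norm inequality underlying Lemma~\ref{lem:entropy-inequality}. Finally, Theorem~\ref{thm:tensorization-alpha-2} follows from the present result by an easy induction in $n$, with base case $\cK_1=\cL$ and inductive step $\cK_n=\cL\otimes\cI^{\otimes(n-1)}+\cI\otimes\cK_{n-1}$, noting that $\cK_{n-1}$ inherits primitivity and $\sigma^{\otimes(n-1)}$-reversibility from $\cL$.
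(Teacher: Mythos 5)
Your proposal is correct and follows essentially the same route as the paper: apply Lemma~\ref{lem:entropy-inequality} to bound the entropy, invoke the single-system $2$-log-Sobolev inequalities, and then control the Dirichlet form via a block computation combined with Lemma~\ref{lem:2-positive}. The only cosmetic differences are in presentation: for the upper bound you spell out the specific test operators $Y\otimes\II'$ and $\II\otimes Z$ (which is indeed how the paper's tensor-product restriction is meant to be used), and for the bound $\cE_{2,\cL}(M)\leq\cE_{2,\cL\otimes\cI'}(X)$ you phrase the key step as the reverse triangle inequality $(\|A\|_{2,\rho}-\|B\|_{2,\rho})^2\leq\|A-B\|_{2,\rho}^2$, whereas the paper applies Cauchy--Schwarz $\langle A,B\rangle_\rho\leq\|A\|_{2,\rho}\|B\|_{2,\rho}$; these are the same inequality.
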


\begin{proof}
Let $\alpha=\min\{\alpha_2(\cL), \, \alpha_2(\cK)\}$. 
By restricting $X$ in the $2$-log-Sobolev inequality to be of the tensor product form and using
$$\Ent_{2, \sigma\otimes \rho}(Y\otimes Y') = \Ent_{2, \sigma}(Y)+\Ent_{2, \rho}(Y'),$$ 
we conclude that $\alpha_2(\cL\otimes \cI + \cI\otimes \cK)\leq \alpha$.  To prove the inequality in the other direction we need to show that  for any $X\in \cP(\cH\otimes \cH')$ we have
\begin{align}\label{eq:des-09a}
\alpha\, \Ent_{2, \sigma\otimes \rho} (X) \leq \cE_{2, \cL\otimes \cI' + \cI\otimes \cK} (X).
\end{align}
Assume, without loss of generality, that $\sigma$ is diagonal of the form~\eqref{eq:sigma-diag}, and that $X\in \cP(\cH\otimes \cH')$ has the block form~\eqref{eq:X-block}. Define $M$ by~\eqref{eq:M-block}. Then by Lemma~\ref{lem:entropy-inequality} we have
\begin{align*}
\Ent_{2, \sigma\otimes \rho}(X)\leq &~ \Ent_{2, \sigma}(M) + \theta \Ent_{2, \rho}(A) +(1-\theta)\Ent_{2, \rho}(B)\\
& ~+ \sqrt{\theta(1-\theta)}\,\Ent_{2, \rho}(I_{2, 2}(C)) + \sqrt{\theta(1-\theta)}\,\Ent_{2, \rho}(I_{2, 2}(C^\dagger)).
\end{align*}
On the other hand by the definition of $\alpha$ we have
$$\alpha\,\Ent_{2, \sigma}(M) \leq \cE_{2, \cL}(M),$$
and
$$\alpha\, \Ent_{2, \rho}(Y) \leq \cE_{2, \cK}(Y),$$
for all $Y\in \big\{ A, B, I_{2, 2}(C), I_{2, 2}(C^\dagger)  \big\}$.
Therefore, we have 
\begin{align}
\alpha\,\Ent_{2, \sigma\otimes \rho}(X)&\leq  \cE_{2, \cL}(M) + \theta \cE_{2, \cK}(A) +(1-\theta)\cE_{2, \cK}(B)\nonumber\\
& \quad+ \sqrt{\theta(1-\theta)}\,\cE_{2, \cK}(I_{2, 2}(C)) + \sqrt{\theta(1-\theta)}\,\cE_{2, \cK}(I_{2, 2}(C^\dagger))\nonumber\\
&\leq  \cE_{2, \cL}(M) + \theta \cE_{2, \cK}(A) +(1-\theta)\cE_{2, \cK}(B)\nonumber\\
& \quad+ \sqrt{\theta(1-\theta)}\,\langle C, \cK(C)\rangle + \sqrt{\theta(1-\theta)}\,\langle C^\dagger, \cK(C^\dagger)\rangle, \label{eq:7yua}
\end{align}
where in the second inequality we use Lemma~\ref{lem:2-positive}.
We now have
\begin{align*}
\cE_{2, \,\cL\otimes \cI' + \cI\otimes \cK} (X) & = \langle X, (\cL\otimes \cI' + \cI\otimes \cK)(X)\rangle_{\sigma\otimes \rho}\\
& = \langle X, \cL\otimes \cI'(X)\rangle_{\sigma\otimes \rho} + \Bigg\langle \begin{pmatrix}
A &  C\\
 C^\dagger & B
\end{pmatrix},  \begin{pmatrix}
\cK(A) &  \cK(C)\\
 \cK(C^\dagger) & \cK(B)
\end{pmatrix}\Bigg\rangle_{\sigma\otimes \rho}.
\end{align*}
We compute each term in the above sum separately. 
\begin{align*}
\big\langle X, \,\cL\,\otimes\, &\,\cI'(X)\big\rangle_{2, \sigma\otimes \rho}  \\
& = \Bigg\langle \begin{pmatrix}
A &  C\\
 C^\dagger & B
\end{pmatrix},  \begin{pmatrix}
(1-\theta)(A-B) &  C\\
 C^\dagger & \theta(B-A)
\end{pmatrix}\Bigg\rangle_{2, \sigma\otimes \rho} \\
& = \theta(1-\theta)\langle A, A-B\rangle_\rho +\theta(1-\theta)\langle B, B-A\rangle_\rho + 2\sqrt{\theta(1-\theta)}\langle C, C\rangle_\rho\\
& = \theta(1-\theta)\|A\|_{2,\rho}^2 +\theta(1-\theta)\|B\|_{2,\rho}^2 -2\theta(1-\theta)\langle A, B\rangle_\rho+ 2\sqrt{\theta(1-\theta)}\|C\|_{2, \rho}\\
& \geq \theta(1-\theta)\|A\|_{2,\rho}^2 +\theta(1-\theta)\|B\|_{2,\rho}^2 -2\theta(1-\theta)\|A\|_{2, \rho}\cdot \|B\|_{2, \rho}+ 2\sqrt{\theta(1-\theta)}\|C\|_{2, \rho}\\
& = \langle M, \cL (M)\rangle_\sigma\\
& = \cE_{2, \cL}(M).
\end{align*}
For the second term we compute
\begin{align*}
\Bigg\langle \begin{pmatrix}
A &  C\\
 C^\dagger & B
\end{pmatrix}, & \begin{pmatrix}
\cK(A) &  \cK(C)\\
 \cK(C^\dagger) & \cK(B)
\end{pmatrix}\Bigg\rangle_{\sigma\otimes \rho} \\
& = \theta \langle A, \,\cK (A)\rangle_\rho + (1-\theta) \langle B, \cK (B)\rangle_\rho  \\
&\quad + \sqrt{\theta(1-\theta)} \langle C, \cK (C)\rangle_\rho  + \sqrt{\theta(1-\theta)} \langle C^\dagger, \cK (C^\dagger)\rangle_\rho\\
& = \theta \cE_{2, \cK}(A) + (1-\theta) \cE_{2, \cK}(B)  \\
&\quad + \sqrt{\theta(1-\theta)} \langle C, \cK (C)\rangle + \sqrt{\theta(1-\theta)} \langle C^\dagger, \cK (C^\dagger)\rangle.
\end{align*}
Therefore, we have
\begin{align*}
\cE_{2, \cL\otimes \cI' + \cI\otimes \cK}(X)& \geq  \cE_{2, \cL}(M) +\theta \cE_{2, \cK}(A) + (1-\theta) \cE_{2, \cK}(B)\\
& \quad   + \sqrt{\theta(1-\theta)} \langle C, \cK (C)\rangle + \sqrt{\theta(1-\theta)} \langle C^\dagger, \cK (C^\dagger)\rangle.
\end{align*}
Comparing this to~\eqref{eq:7yua} we arrive at the desired inequality~\eqref{eq:des-09a}. 

\end{proof}

We now give the exact expression of the $2$-log-Sobolev constant of the simple Lindblad generator (in any dimension). {We recall that the case of the $1$-log-Sobolev constant was found in \cite{MFW16} (see also \cite{KT13} when $\sigma=\II/d$). } {The proof in our general setting is similar to the one of \cite{MFW16}. We however provide it in \Cref{proofofLSI2} for the sake of completeness.}

\begin{theorem}\label{thm:LSC-2-simple}
Let $\sigma\in \cD_+(\cH)$ be arbitrary and let $\cL(X) = X-\tr(\sigma X) \II$ be the simple Lindblad generator. Then we have 
\begin{align}\label{eq:exact-LSC-2-simple}
\alpha_2(\cL)= \frac{1-2s_{\min}(\sigma)}{\log\big(1/s_{\min}(\sigma)-1\big)},
\end{align}
where $s_{\min}(\sigma)$ is the minimum eigenvalue of $\sigma$.
\end{theorem}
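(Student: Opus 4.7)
The strategy is to reduce the quantum $2$-log-Sobolev problem to a classical one and then to apply the explicit formula of Diaconis--Saloff-Coste~\cite{DSC96}. The two ingredients are (i) a computation showing that on operators $X$ commuting with $\sigma$ the quantum Dirichlet form and entropy coincide with their classical analogues, and (ii) an argument that the infimum in the definition of $\alpha_2(\cL)$ is in fact attained on such commuting $X$.

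For (i), diagonalise $\sigma = \sum_i s_i |i\rangle\langle i|$. For $X = \sum_i x_i|i\rangle\langle i|$ commuting with $\sigma$, direct computation shows
$$\cE_{2,\cL}(X) = \|X\|_{2,\sigma}^2 - (\tr \sigma X)^2 = \Var_\pi(x), \qquad \Ent_{2,\sigma}(X) = \Ent_\pi(x^2),$$
where $\pi$ is the probability distribution with $\pi_i = s_i$. Therefore the infimum of $\cE_{2,\cL}(X)/\Ent_{2,\sigma}(X)$ restricted to $X$ commuting with $\sigma$ equals the classical $2$-log-Sobolev constant of the simple Markov chain with stationary distribution $\pi$, which by~\cite{DSC96} equals $(1-2\pi_{\min})/\log(1/\pi_{\min}-1) = (1-2s_{\min}(\sigma))/\log(1/s_{\min}(\sigma)-1)$.

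Step (ii) is the crux, and it is precisely here that the ``qubit'' statement referenced in the introduction enters. For any pair of indices $i\neq j$ in the $\sigma$-eigenbasis, consider the $2\times 2$ submatrix
$$X^{(ij)} = \begin{pmatrix} x_{ii} & x_{ij} \\ \overline{x_{ij}} & x_{jj}\end{pmatrix}, \qquad \sigma^{(ij)} = \mathrm{diag}(s_i, s_j).$$
The Dirichlet form of the simple generator picks up a clean additive contribution $2\sqrt{s_is_j}|x_{ij}|^2$ from the off-diagonal entry, while the entropy $\Ent_{2,\sigma}(X)$ depends on $|x_{ij}|^2$ only through the spectrum of $\Gamma_\sigma^{1/2}(X)$. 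The key qubit lemma to establish is that the ratio $\cE_{2,\cL}/\Ent_{2,\sigma}$, restricted to the qubit block and viewed as a function of $|x_{ij}|^2$ at fixed diagonal entries, is minimised when $x_{ij}=0$; this is the ``function of qubit density matrices optimised over diagonal ones''. Iterating over all pairs $(i,j)$ (equivalently, averaging $X$ against the spectral projectors of $\sigma$) then shows that the full infimum $\alpha_2(\cL)$ is realised by a commuting $X$, and combined with step (i) this gives the claimed formula.

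The principal obstacle is the qubit inequality itself. Parametrising the qubit block by its diagonal entries and $t := |x_{ij}|^2 \geq 0$, the Dirichlet form is linear in $t$, but the entropy is transcendental in $t$ through the eigenvalues of the $2\times 2$ matrix $\Gamma_{\sigma^{(ij)}}^{1/2}(X^{(ij)})$. A natural strategy is to differentiate both quantities at $t=0$, compare the two derivatives, and then invoke a convexity/monotonicity argument in $t$; alternatively, Lemma~\ref{lem:entropy-inequality} (applied with the qubit factor carrying $\sigma^{(ij)}$) provides a subadditive decomposition of the entropy that may suffice to carry out the comparison directly. Once the qubit reduction is in place, the classical Diaconis--Saloff-Coste result immediately yields~\eqref{eq:exact-LSC-2-simple}.
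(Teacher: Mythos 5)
Your step (i) is correct, and the overall outline — reduce to the case where $X$ commutes with $\sigma$, then invoke Diaconis--Saloff-Coste — matches the paper. But step (ii) is the entire substance of the proof, and you leave it unproven: you explicitly describe it as ``the principal obstacle'' and offer only two speculative avenues (differentiating at $t=0$, or somehow repurposing Lemma~\ref{lem:entropy-inequality}) without carrying either out. That is a genuine gap, not a proof.

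Beyond incompleteness, the proposed mechanism is structurally suspect. You want to kill off-diagonal entries $x_{ij}$ one pair at a time by a ``qubit block'' comparison, but both $\cE_{2,\cL}(X)$ and especially $\Ent_{2,\sigma}(X)$ are global spectral quantities of $\Gamma_\sigma^{1/2}(X)$: changing one off-diagonal entry perturbs the whole spectrum, so the ratio does not decompose over $2\times 2$ blocks, and there is no reason an ``iterate over pairs'' scheme monotonically decreases it. (Lemma~\ref{lem:entropy-inequality} also runs in the wrong direction for this: it bounds the entropy of a block matrix from above in terms of its blocks, whereas you would need control of the entropy from below as $x_{ij}\to 0$.) You have also over-read the introduction: the remark about ``a certain function of qubit density matrices optimized over diagonal ones'' is loose phrasing; the theorem and its proof are dimension-independent and contain no qubit-block reduction.

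The paper's actual argument for step (ii) is different and considerably cleaner. Normalize $X=\Gamma_\sigma^{-1/2}(\sqrt\rho)$ so that $\Ent_{2,\sigma}(X)=D(\rho\|\sigma)$ and $\cE_{2,\cL}(X)=1-\big[\tr(\sqrt\sigma\sqrt\rho)\big]^2$. Fixing the eigenvalues $\{s_i\}$ of $\sigma$ and $\{r_k\}$ of $\rho$, both quantities are functions of the overlap matrix $A=(a_{ik})$, $a_{ik}=|\langle i|\tilde k\rangle|^2$, which is doubly stochastic: $\Ent$ is affine in $A$ and $\cE$ is concave in $A$ (being $1$ minus the square of a nonnegative affine functional). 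Hence $\cE-\beta\,\Ent$ is concave in $A$, and its minimum over the Birkhoff polytope is attained at a vertex, i.e.\ a permutation matrix, i.e.\ the case $[\rho,\sigma]=0$. This Birkhoff/concavity step is exactly what your proposal is missing, and it sidesteps all the block-wise difficulties.
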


We can now derive a tensorization-type result for a wide class of Lindblad generators. 
Let $\cL$ be a  $\sigma$-reversible and primitive Lindblad generator. 
Recall that the \emph{spectral gap} of $\cL$ is defined by
$$\lambda(\cL) = \inf_{X} \frac{\cE_{2, \cL}(X)}{\Var_\sigma(X)},$$
where $\Var_\sigma(X) =\langle X, X\rangle_\sigma - \langle X, \II\rangle_\sigma^2 =\|X\|_{2, \sigma}^2 -\langle X, \II\rangle_\sigma^2$, see e.g.~\cite{KT13}. Observe that $\Var_\sigma(X)$ is the squared length of the projection of $X$ onto the subspace orthogonal to $\II \in \cB(\cH)$ with respect to the inner product $\langle \cdot, \cdot\rangle_\sigma$. On the other hand, $\II$ is  the sole\footnote{This $0$-eigenvector is unique since $\cL$ is assumed to be primitive.} $0$-eigenvector of $\cL$ {up to a phase} which is self-adjoint with respect to this inner product. Therefore, $\lambda(\cL)$ is the minimum non-zero eigenvalue of~$\cL$. Note that, since $\cL$ {has positive spectral gap}, the Dirichlet form $\cE_{2, \cL}$ is non-negative, so $\lambda(\cL)>0$. Indeed, $\lambda(\cL)$ is really the spectral gap of $\cL$ above the zero eigenvalue.

The spectral gap satisfies the tensorization property, as shown below. Observe  that 
$$ \cK_n= \sum_{i=1}^n \widehat \cL_i,$$
is a sum of mutually commuting operators. Then the eigenvalues of $\cK_n$ are summations of eigenvalues of individual $\widehat \cL_i$'s. Since each $\widehat\cL_i$ is a tensor product of $\cL$ with some identity superoperator, the set of its eigenvalues is the same as that of $\cL$. Using these we conclude that 
\begin{align}\label{eq:lambda-tensorization}
\lambda(\cK_n) = \lambda(\cL), \qquad \forall n.
\end{align}

It is well-known that $\lambda(\cL)\geq \alpha_2(\cL)$ \cite{CM15,KT13}. The following corollary gives a lower bound on $\alpha_2(\cL)$ in terms of $\lambda(\cL)$.

\begin{corollary}\label{cor:2-LSC-general}
Let $\dim \cH=2$ and $\sigma\in \cD_+(\cH)$.
For any $\sigma$-reversible primitive Lindblad generator $\cL$ we have
$$\alpha_2(\cK_n) \geq \frac{1-2s_{\min}(\sigma)}{\log\big( 1/s_{\min}(\sigma)-1\big) } \lambda(\cL),$$
where $s_{\min}(\sigma)$ denotes the minimal eigenvalue of $\sigma$.
\end{corollary}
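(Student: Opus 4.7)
The plan is a quantum adaptation of the classical Diaconis--Saloff-Coste comparison method: reduce the general case to the already-solved simple-generator case by bounding $\cE_{2,\cK_n}$ below by $\lambda(\cL)$ times the Dirichlet form of the corresponding tensor-power simple generator. Write $\cL_s(X):=X-\tr(\sigma X)\II$ for the simple qubit Lindblad generator with the same stationary state $\sigma$, and set $\cK_n^s:=\sum_{i=1}^n\widehat{\cL_s}_i$, where $\widehat{\cL_s}_i:=\cI^{\otimes(i-1)}\otimes\cL_s\otimes\cI^{\otimes(n-i)}$ as in~\eqref{eq:hat-L-i}. Combining Theorem~\ref{thm:tensorization-alpha-2} with Theorem~\ref{thm:LSC-2-simple} we already know
$$\alpha_2(\cK_n^s)\;=\;\alpha_2(\cL_s)\;=\;\frac{1-2s_{\min}(\sigma)}{\log(1/s_{\min}(\sigma)-1)}.$$
Hence the corollary reduces to establishing the pointwise bound $\cE_{2,\cK_n}(X)\geq\lambda(\cL)\,\cE_{2,\cK_n^s}(X)$ for all $X\in\cB(\cH^{\otimes n})$: chaining this with the log-Sobolev inequality for $\cK_n^s$ yields $\alpha_2(\cK_n)\geq\lambda(\cL)\alpha_2(\cK_n^s)$, which is the claim.

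By linearity in the generator and the fact that both $\cK_n$ and $\cK_n^s$ are sums over sites, it suffices to prove the site-wise comparison $\cE_{2,\widehat\cL_i}(X)\geq\lambda(\cL)\,\cE_{2,\widehat{\cL_s}_i}(X)$ for each $i$ and then sum. By Lemma~\ref{lem:reversible}, $\cL$ is self-adjoint with respect to $\langle\cdot,\cdot\rangle_\sigma$; by Propositions~\ref{prop:contraction} and \ref{prop:dirichlet-positive} it is positive semidefinite in this inner product; and primitivity forces $\ker\cL=\mathrm{span}\{\II\}$. So the eigenvalues of $\cL$ are $0=\mu_0<\lambda(\cL)=\mu_1\leq\cdots\leq\mu_{d^2-1}$ with $d=\dim\cH$. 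Diagonalize $\cL$ in an $\langle\cdot,\cdot\rangle_\sigma$-orthonormal eigenbasis $\{E_0=\II,E_1,\dots,E_{d^2-1}\}$ of $\cB(\cH)$. The key observation is that $\cL_s$ is diagonal in the \emph{same} basis: $\cL_s(E_0)=0$ trivially, and for $k\geq 1$ the orthogonality condition reads $\tr(\sigma E_k)=\langle\II,E_k\rangle_\sigma=0$, so $\cL_s(E_k)=E_k$.

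Now expand $X$ uniquely along the $i$-th tensor slot as $X=\sum_k X_k\otimes_i E_k$ with $X_k\in\cB(\cH^{\otimes(n-1)})$, where $\otimes_i$ denotes placement of the second factor in slot $i$. Using the orthonormality of $\{E_k\}$ together with the factorization $\langle Y_1\otimes_i F_1,\,Y_2\otimes_i F_2\rangle_{\sigma^{\otimes n}}=\langle Y_1,Y_2\rangle_{\sigma^{\otimes(n-1)}}\langle F_1,F_2\rangle_\sigma$, a direct computation yields
$$\cE_{2,\widehat\cL_i}(X)=\sum_k\mu_k\|X_k\|_{2,\sigma^{\otimes(n-1)}}^2,\qquad\cE_{2,\widehat{\cL_s}_i}(X)=\sum_{k\neq 0}\|X_k\|_{2,\sigma^{\otimes(n-1)}}^2.$$
Since $\mu_0=0$ and $\mu_k\geq\lambda(\cL)$ for $k\geq 1$, the site-wise comparison is immediate. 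Summing over $i$ and applying the known value of $\alpha_2(\cK_n^s)$ completes the proof.

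There is no real obstacle here: once the simultaneous $\langle\cdot,\cdot\rangle_\sigma$-diagonalizability of $\cL$ and $\cL_s$ is noticed the argument becomes purely linear-algebraic. Note that the spectral-gap comparison step holds in arbitrary finite dimension; the qubit hypothesis $\dim\cH=2$ enters solely through the appeal to Theorem~\ref{thm:tensorization-alpha-2}, so any future strengthening of that tensorization result would automatically upgrade the corollary to higher dimensions.
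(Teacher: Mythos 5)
Your proof is correct and follows essentially the same route as the paper's: combine Theorems~\ref{thm:tensorization-alpha-2} and \ref{thm:LSC-2-simple} to get the simple-generator bound, then establish the site-wise Dirichlet-form comparison $\cE_{2,\widehat\cL_i}(X)\geq\lambda(\cL)\,\cE_{2,\widehat{\cL_s}_i}(X)$ via the spectral gap. The paper phrases this comparison geometrically, via the squared length of the projection of $X$ onto $\mathcal W_i^\perp=(\ker\widehat\cL_i)^\perp$, whereas you make the same argument explicit by simultaneously diagonalizing $\cL$ and $\cL_s$ in a $\langle\cdot,\cdot\rangle_\sigma$-orthonormal eigenbasis; the two formulations are equivalent.
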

This corollary is a non-commutative version of Corollary A.4 of~\cite{DSC96} and gives a stronger bound compared to Corollary 6 of \cite{TPK14}. It would be interesting to compare this corollary with the result of King~\cite{King14} who generalized the hypercontractivity inequalities of~\cite{MO10} for the unital qubit depolarizing channel to all unital qubit quantum Markov semigroups. Here, having a bound on the 2-log-Sobolev constant of the $\sigma$-reversible generalized qubit depolarizing channel (and its tensorization property), we derive a bound on the 2-log-Sobolev constant of all qubit $\sigma$-reversible QMS.

\begin{proof}[Proof of \Cref{cor:2-LSC-general}]
Let $\cL'$ be the simple Lindblad generator that is $\sigma$-reversible,
and let $X\in \cP(\cH^{\otimes n})$ be arbitrary. Then by Theorem~\ref{thm:tensorization-alpha-2} and Theorem~\ref{thm:LSC-2-simple} we have
\begin{align}\label{eq:lambda-s-min-2}
\frac{1-2s_{\min}(\sigma)}{\log\big( 1/s_{\min}(\sigma)-1\big) } \, \Ent_{2, \sigma^{\otimes n}} \leq \sum_{i=1}^n \big\langle X, \widehat \cL'_i(X)\big\rangle_{\sigma^{\otimes n}}.
\end{align}
Let $\mathcal W_i\subset \cB(\cH^{\otimes n})$ be the subspace spanned by operators of the form $A_1\otimes \cdots \otimes A_n \in \cB(\cH^{\otimes n})$ with $A_i=\II\in \cB(\cH)$. In other words, $\mathcal W_i = \ker(\widehat \cL'_i)$. Then $\big\langle X, \widehat \cL'_i(X)\big\rangle_{\sigma^{\otimes n}}$ equals the squared length of the projection of $X$ onto $\mathcal W_i^{\perp}$. 
On the other hand, since $\cL$ is primitive and $\sigma$-reversible, we also have $\mathcal W_i=\ker \widehat \cL_i $ and $\mathcal W_i^{\perp}$ is invariant under $\widehat \cL_i$. Moreover, by definition $\lambda(\widehat \cL_i)$ is the minimum eigenvalue of $\widehat\cL_i$ restricted to $\mathcal W_i^{\perp}$ (i.e., the minimum non-zero eigenvalue). We conclude that
$$\lambda(\widehat\cL_i)\big\langle X, \widehat \cL'_i(X)\big\rangle_{\sigma^{\otimes n}}\leq \big\langle X, \widehat \cL_i(X)\big\rangle_{\sigma^{\otimes n}}.$$
On the other hand since $\widehat\cL_i$ equals the tensor product of $\cL$ with some identity superoperators, $\lambda(\widehat \cL_i) = \lambda(\cL)$. Therefore, 
$$\lambda(\cL)\big\langle X, \widehat \cL'_i(X)\big\rangle_{\sigma^{\otimes n}}\leq \big\langle X, \widehat \cL_i(X)\big\rangle_{\sigma^{\otimes n}}.$$
Using this in~\eqref{eq:lambda-s-min-2} we arrive at
$$\lambda(\cL)\frac{1-2s_{\min}(\sigma)}{\log\big( 1/s_{\min}(\sigma)-1\big) } \, \Ent_{2, \sigma^{\otimes n}} \leq \sum_{i=1}^n \big\langle X, \widehat\cL_i(X)\big\rangle_{\sigma^{\otimes n}} = \langle X, \cK_n(X)\rangle_{\sigma^{\otimes n}}.$$
This gives the desired bound on $\alpha_2(\cK_n)$.

\end{proof}

\begin{corollary}\label{cor:tensor}
Let $\dim \cH=2$ and $\sigma\in \cD_+(\cH)$.
Let $\cL$ be a $\sigma$-reversible primitive Lindblad generator. Then for any $1\leq q\leq p$ and $t\geq 0$ satisfying 
\begin{align*}
t\geq \frac{\log\big( 1/s_{\min}(\sigma)-1\big)}{4\lambda(\cL) \, \big(1-2s_{\min}(\sigma)\big)}\log\frac{p-1}{q-1},
\end{align*}
we have 
$\| \Phi_t^{\otimes n}(X)\|_{p, \sigma}\leq \|X\|_{q, \sigma}$ for all $X> 0$

\end{corollary}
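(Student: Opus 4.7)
The plan is to combine the uniform lower bound on $\alpha_2(\cK_n)$ from Corollary~\ref{cor:2-LSC-general} with the log-Sobolev-implies-hypercontractivity statement of Corollary~\ref{cor:gen-hyper-1}, applied at the tensor level to the generator $\cK_n = \sum_{i=1}^n \widehat{\cL}_i$ that drives $\Phi_t^{\otimes n}$.

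First I would observe that $\cK_n$ is primitive and (strongly) reversible with respect to $\sigma^{\otimes n}$ whenever $\cL$ has the analogous properties with respect to $\sigma$, since the $\widehat{\cL}_i$ mutually commute and act nontrivially on disjoint tensor factors. Applying Corollary~\ref{cor:gen-hyper-1} directly to $\cK_n$ then yields the hypercontractivity estimate
$$ \bigl\|\Phi_t^{\otimes n}(X)\bigr\|_{p,\sigma^{\otimes n}} \;\leq\; \|X\|_{q,\sigma^{\otimes n}} \qquad \text{for all } X>0,$$
provided $1\leq q\leq p$ and $t\geq \frac{1}{4\alpha_2(\cK_n)}\log\frac{p-1}{q-1}$.

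Next I would invoke the qubit tensorization result Corollary~\ref{cor:2-LSC-general}, which furnishes the $n$-independent lower bound
$$ \alpha_2(\cK_n) \;\geq\; \frac{1-2s_{\min}(\sigma)}{\log\bigl(1/s_{\min}(\sigma)-1\bigr)}\,\lambda(\cL). $$
Substituting this lower bound into the sufficient threshold for $t$ in the preceding display immediately produces the condition stated in the corollary, completing the proof.

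In effect, the argument is simply a matter of plugging two previously established results into one another, and no genuinely new obstacle arises at this stage. The substantial work has been front-loaded into the block-matrix entropy inequality of Lemma~\ref{lem:entropy-inequality}, the qubit tensorization of the $2$-log-Sobolev constant (Theorem~\ref{thm:tensorization-alpha-2}), the explicit evaluation of $\alpha_2$ for the simple generator (Theorem~\ref{thm:LSC-2-simple}), and the spectral-gap comparison that bootstraps the bound from the simple generator to a general qubit $\sigma$-reversible generator in Corollary~\ref{cor:2-LSC-general}. The only subtle point worth checking is that one is entitled to take $\beta_2 = \alpha_2(\cK_n)$ in the hypercontractivity criterion rather than the a priori smaller quantity $\inf_{p\in[1,2]}\alpha_p(\cK_n)$; this is precisely what the monotonicity $p\mapsto \alpha_p(\cK_n)$ on $[0,2]$ in Corollary~\ref{eq:mon-LS-constant} (a consequence of the quantum Stroock--Varopoulos inequality) provides, and it is here that strong $\sigma$-reversibility of $\cL$ enters as the natural hypothesis.
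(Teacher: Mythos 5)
Your argument is indeed the intended one: combine the $n$-independent lower bound $\alpha_2(\cK_n) \geq \frac{1-2s_{\min}(\sigma)}{\log(1/s_{\min}(\sigma)-1)}\lambda(\cL)$ from Corollary~\ref{cor:2-LSC-general} with the hypercontractivity criterion of Corollary~\ref{cor:gen-hyper-1} applied to the generator $\cK_n$ of $\Phi_t^{\otimes n}$, and substitute the bound into the threshold for $t$. The reduction is exactly as you describe, and there is nothing more to the derivation.

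That said, the subtle point you flag at the end is not merely a stylistic observation but a genuine gap between the \emph{stated} hypothesis of the corollary and what the proof actually uses. Corollary~\ref{cor:gen-hyper-1}, which lets one put $\beta_2=\alpha_2(\cK_n)$ rather than $\inf_{p\in[1,2]}\alpha_p(\cK_n)$, is proved via the quantum Stroock--Varopoulos inequality (Theorem~\ref{thm:QSVineq}) and Corollary~\ref{eq:mon-LS-constant}, both of which require $\cL$ (hence $\cK_n$) to be \emph{strongly} $\sigma$-reversible, i.e.\ self-adjoint for $\langle\cdot,\cdot\rangle_{1,\sigma}$, which is strictly stronger than the detailed-balance condition $\Gamma_\sigma\circ\cL\circ\Gamma_\sigma^{-1}=\cL^*$ stated in the corollary. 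If one tried instead to invoke Theorem~\ref{thm:gen-hyper}, one would need a lower bound on $\alpha_p(\cK_n)$ for \emph{all} $p\in[1,2]$, and the spectral-gap comparison $\langle X,\widehat\cL_i(X)\rangle_{\sigma^{\otimes n}}\geq\lambda(\cL)\langle X,\widehat\cL'_i(X)\rangle_{\sigma^{\otimes n}}$ underlying Corollary~\ref{cor:2-LSC-general} is a statement about the $2$-Dirichlet form only; it does not propagate to $\cE_{p,\cL}$ for $p\neq 2$, so this route does not close the gap either. The correct reading, as you suggest, is that the hypothesis of Corollary~\ref{cor:tensor} should be strong $\sigma$-reversibility (note that the simple generator $\cL'$ used in the intermediate steps is automatically strongly $\sigma$-reversible, so this costs nothing there). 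You were right to pause on this, and I would make the strengthened hypothesis explicit rather than leave it as a parenthetical remark.
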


%**************************************

\section{Application: second-order converses}\label{sec6}
One of the primary goals of information theory is to find optimal rates of
information-theoretic tasks. For instance, for the task of information transmission over a noisy channel, this optimal rate is the capacity. The latter is said to satisfy the \textit{strong converse property} if any attempt to transmit information at a rate higher than it fails with certainty in the limit of infinitely many uses of the channel. In this section, we show how reverse hypercontractivity inequalities can be used to derive finite sample size strong converse bounds in the tasks of asymmetric quantum hypothesis testing and classical communication through a classical-quantum channel.

\subsection{Quantum hypothesis testing}
Binary quantum hypothesis testing concerns the problem of discriminating between two different
quantum states, and is essential for various quantum information-processing protocols. 
Suppose that a party, Bob, receives a quantum system, with the knowledge that it is prepared either in the state $\rho$ (the null hypothesis) or in the state $\sigma$ (the alternative hypothesis) over a finite-dimensional Hilbert space ${\cal H}$.
His aim is to infer which hypothesis is true, i.e., which state the system is in. To do so he performs a measurement on the system that he receives. This is most generally described by a POVM $\{T,\II - T\}$ where $0 \le T \le \II$; When the measurement outcome is $T$ he infers that the state is $\rho$, and otherwise it is $\sigma$. Adopting the nomenclature from classical hypothesis testing, we refer to $T$ as a test. The probability that Bob correctly guesses the state to be $\rho$ is then equal to $\tr(T \rho)$, whereas his probability of correctly guessing the state to be $\sigma$ is
$\tr((\II -T)\sigma)$. Bob can erroneously infer the state to be $\sigma$ when it is actually $\rho$ or vice versa. The corresponding error probabilities are referred to as the Type~I error and Type~II error, respectively, and are given as follows:
$$\alpha(T) := \tr ((\II - T)\rho),~~~~~~~ \beta(T) := \tr (T \sigma),$$
Correspondingly, if multiple (say, $n$) identical copies of the system are available, and a test $T_n \in {\cal B}({\cal H}^{\otimes n})$ is performed on the $n$ copies, then the Type~I and Type~II errors are given by
$$\alpha_n(T_n) := \tr ((\II_n - T_n)\rho^{\otimes n}),~~~~~~~ \beta_n(T_n) := \tr (T_n \sigma^{\otimes n}),$$
where $\II_n$ denotes the identity operator in ${\cal B}({\cal H}^{\otimes n})$. There is a trade-off between the two error probabilities and there are various ways to optimize them. In the setting of \textit{asymmetric quantum hypothesis testing}, one minimizes the Type~II error under the constraint that the Type~I error stays below a threshold value $\eps \in (0,1)$. In this case one is interested in the following quantity
\begin{align}
\beta_{n, \eps} := \min \{\beta_n(T_n) \, : \, \alpha_n(T_n) \leq \eps, ~ 0\leq T_n\leq \II_n\},
\label{bne}
\end{align}
where the infimum is taken over all possible tests $T_n \in {\cal B}({\cal H}^{\otimes n})$. The quantum Stein lemma \cite{hiai1991,ON00}  states that
$$\lim_{n \to \infty} \left(-\frac{1}{n}\log \beta_{n, \eps}\right) = D(\rho||\sigma)~~~~~\forall \eps\in(0,1).$$

The \emph{asymptotic strong converse rate} $R_{sc}$ of the above quantum hypothesis testing problem is defined to be the smallest number $R$ such that if
$$\limsup_{n \to \infty} \frac{1}{n} \log \beta_n(T_n) \leq - R,$$
for some sequence of tests $\{T_n\}_{n \in {\mathbb{N}}}$, then 
$$\lim_{n \to \infty} \alpha_n(T_n) = 1.$$ 
This quantity has been shown to be equal to Stein's exponent $D(\rho||\sigma)$. 
In this section we are interested in obtaining a bound on the rate of convergence of $\alpha_n(T_n)$as a function of $n$, that is when Bob receives a finite number of identical copies 
of the quantum system. We use reverse hypercontractivity in order to obtain our bound. Before stating and proving the main theorem of this section, we recall the following important inequality that will be used in the proof.
\begin{lemma}[Araki-Lieb-Thirring inequality \cite{AL02,A90}]\label{ALT}
For any $A,B\in\cP(\cH)$, and $r\in[0,1]$,
\begin{align*}
	\tr(B^{r/2}A^r B^{r/2} )\le \tr(B^{1/2} A B^{1/2})^r.
\end{align*}	
\end{lemma}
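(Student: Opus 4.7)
The plan is to derive the trace inequality from the stronger Araki log-majorization
\begin{align*}
\prod_{j=1}^k \lambda_j\big(B^{r/2}A^rB^{r/2}\big) \,\le\, \prod_{j=1}^k \lambda_j\big((B^{1/2}AB^{1/2})^r\big), \qquad 1\le k\le \dim \cH,
\end{align*}
where eigenvalues are listed in decreasing order. Since both sides are non-negative, log-majorization implies weak majorization, and taking $k=\dim\cH$ immediately yields the stated trace bound $\tr(B^{r/2}A^rB^{r/2})\le \tr((B^{1/2}AB^{1/2})^r)$.

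The log-majorization itself I would obtain from its $k=1$ (operator-norm) case by the standard antisymmetric-tensor-power trick: applying the $k=1$ inequality with $A$ and $B$ replaced by $\Lambda^k A$ and $\Lambda^k B$, and using $\|\Lambda^k X\|_\infty = \prod_{j=1}^k \lambda_j(X)$ for $X\ge 0$ together with $\Lambda^k(XY) = (\Lambda^k X)(\Lambda^k Y)$ and $\Lambda^k(X^r) = (\Lambda^k X)^r$. So the task reduces to showing $\|B^{r/2}A^rB^{r/2}\|_\infty \le \|B^{1/2}AB^{1/2}\|_\infty^r$, which after taking square roots via $\|X^*X\|_\infty = \|X\|_\infty^2$ is equivalent to the Cordes inequality $\|A^{r/2}B^{r/2}\|_\infty \le \|A^{1/2}B^{1/2}\|_\infty^r$ for $r\in[0,1]$.

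For the Cordes step I would apply Hadamard's three-lines theorem to the operator-valued analytic function $F(z) := A^{z/2}B^{z/2}$ on the strip $\{0\le \mathrm{Re}(z)\le 1\}$, reducing to scalar matrix elements $\langle u, F(z)v\rangle$ to legitimise the classical statement. On the left edge $F(it)$ is a product of two unitaries, so $\|F(it)\|_\infty = 1$; on the right edge, writing $F(1+it) = A^{1/2}A^{it/2}B^{1/2}B^{it/2}$, absorbing the outer unitary $B^{it/2}$ (which preserves the operator norm) and using $A^{-it/2}AA^{it/2}=A$ gives $\|F(1+it)\|_\infty = \|A^{1/2}B^{1/2}\|_\infty$. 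Hadamard's theorem then delivers $\|F(r)\|_\infty \le \|A^{1/2}B^{1/2}\|_\infty^r$ for $r\in[0,1]$, which is exactly the Cordes inequality.

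The main obstacle is this Cordes step: non-commutativity of $A$ and $B$ rules out any naive scalar rearrangement, and the three-lines argument has to be set up with care to ensure $F$ is bounded and analytic on the strip, which is why one starts from $A,B$ positive definite and extends to the general positive semidefinite case by a routine $\eps$-perturbation. Once Cordes is established, the antisymmetric-tensor lift from rank one to log-majorization and the passage from log-majorization to weak majorization are mechanical.
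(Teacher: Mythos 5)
Your proof is correct, and since the paper does not supply its own proof of \Cref{ALT} but simply cites the references (Araki's 1990 paper in particular), your argument is essentially a reconstruction of the classical route: Cordes via Hadamard three lines, lifted by antisymmetric tensor powers to Araki's log-majorization, then weak majorization, then the trace bound at $k=\dim\cH$. Two small remarks worth being explicit about if you were to write this out in full: (1) what you establish is \emph{weak} log-majorization (it actually becomes a genuine log-majorization, since $\det(B^{r/2}A^rB^{r/2})=\det((B^{1/2}AB^{1/2})^r)=(\det A\det B)^r$, but weak suffices for the trace bound); (2) at the right edge of the strip, the identity $\|A^{1/2}A^{it/2}B^{1/2}\|_\infty=\|A^{1/2}B^{1/2}\|_\infty$ is verified by computing $\|X\|_\infty^2=\|X^\dagger X\|_\infty=\|B^{1/2}A^{-it/2}AA^{it/2}B^{1/2}\|_\infty=\|B^{1/2}AB^{1/2}\|_\infty$ — which is exactly how you invoked $A^{-it/2}AA^{it/2}=A$, so the step is sound, but the intermediate manipulation is worth displaying since a naive attempt to ``cancel'' $A^{it/2}$ against $B^{1/2}$ would fail.
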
	

Our main result, from which a bound for the finite blocklength strong converse rate follows directly as a corollary, is given by \Cref{thm:QHT}.

\begin{theorem}\label{thm:QHT}
	Let $\rho,\sigma\in\cD_+(\cH)$ being faithful density matrices.\footnote{What we really need is that the supports of $\rho$ and $\sigma$ being the same (and not being the whole $\cH$) since in this case we may restrict everything to this support. 
} Then for any test $0\le T_n\le \II_n$, where $T_n \in {\cal B}({\cal H}^{\otimes n})$
\begin{align}\label{eq-state}
\log \tr(\sigma^{\otimes n} T_n ) &\geq -nD(\rho\|\sigma) - 2 
\sqrt{{n \|\sigma^{-1/2}\rho\sigma^{-1/2}\|_\infty \log\frac{1}{\tr(\rho^{\otimes n} T_n)}}}    +\log\tr(\rho^{\otimes n} T_n). 
\end{align}
\end{theorem}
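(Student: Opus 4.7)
The plan is to implement a quantum analogue of the classical strong-converse argument of~\cite{LHV17} reviewed in~\eqref{eq7}--\eqref{eq8}, with the variational formula for the relative entropy replaced by \Cref{RHolderHC} (reverse H\"older plus reverse hypercontractivity) together with \Cref{ALT}, followed by an optimization over a semigroup time.

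I would take $\Phi_t(X)=\e^{-t}X+(1-\e^{-t})\tr(\rho X)\II$, the generalized depolarizing semigroup associated to $\rho$ with Lindblad generator $\cL(X)=X-\tr(\rho X)\II$; this $\cL$ is primitive and strongly $\rho$-reversible, and by \Cref{thm:tensorization-alpha-1} applied with $\sigma_i=\rho$ for all $i$, the tensor-power generator $\cK_n$ satisfies $\alpha_1(\cK_n)\ge 1/4$ uniformly in $n$, so \Cref{RHolderHC} applied to $\Phi_t^{\otimes n}=\e^{-t\cK_n}$ reads, for $X,Y>0$ and $p,q\le 1$ with $(1-p)(1-q)\ge \e^{-t}$,
\[\langle X,\Phi_t^{\otimes n}(Y)\rangle_{\rho^{\otimes n}}\ \ge\ \|X\|_{p,\rho^{\otimes n}}\,\|Y\|_{q,\rho^{\otimes n}}.\]
As in the classical proof I would then bracket $\tr(\sigma^{\otimes n}\Phi_t^{\otimes n}(T_n))$. \emph{From above}, the Schr\"odinger dual $\Phi_t^*(\sigma)=\e^{-t}\sigma+(1-\e^{-t})\rho$ together with $\rho\le\|\rho\sigma^{-1}\|_\infty\sigma$ gives
\[\tr(\sigma^{\otimes n}\Phi_t^{\otimes n}(T_n))=\tr\bigl((\e^{-t}\sigma+(1-\e^{-t})\rho)^{\otimes n}T_n\bigr)\le \e^{nt\|\rho\sigma^{-1}\|_\infty}\,\tr(\sigma^{\otimes n}T_n),\]
and \emph{from below}, writing $\tr(\sigma^{\otimes n}Z)=\langle\Gamma_\rho^{-1}(\sigma)^{\otimes n},Z\rangle_{\rho^{\otimes n}}$ and applying the inequality above with $X=\Gamma_\rho^{-1}(\sigma)^{\otimes n}$ and $Y=T_n$ gives
\[\tr(\sigma^{\otimes n}\Phi_t^{\otimes n}(T_n))\ge \|\Gamma_\rho^{-1}(\sigma)\|_{p,\rho}^{n}\,\|T_n\|_{q,\rho^{\otimes n}}.\]
The $\rho$-weighted $q$-norm can in turn be lower-bounded in terms of $\tr(\rho^{\otimes n}T_n)$ via \Cref{ALT}: for $q\in(0,1)$ one has $\|T_n\|_{q,\rho^{\otimes n}}^q\ge\tr(\rho^{\otimes n}T_n^q)\ge\tr(\rho^{\otimes n}T_n)$, the last step using $T_n^q\ge T_n$ since $0\le T_n\le \II_n$.

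Combining the three estimates, taking logarithms, choosing $q=1-\e^{-t}$ on the boundary $(1-p)(1-q)=\e^{-t}$ so that $q^{-1}\le 1+t^{-1}$, and sending $p\to 1$ --- at which point $\|\Gamma_\rho^{-1}(\sigma)\|_{p,\rho}^p$ is a sandwiched-R\'enyi-type quantity whose leading-order expansion at $p=1$ (via \Cref{prop:norm-derivative} and \Cref{prop:ent}(iv)) produces the Umegaki entropy --- yields
\[\log\tr(\sigma^{\otimes n}T_n)\ge -nD(\rho\|\sigma)+(1+t^{-1})\log\tr(\rho^{\otimes n}T_n)-nt\|\rho\sigma^{-1}\|_\infty.\]
Finally, optimizing $t>0$ exactly as in the classical derivation following~\eqref{eq8} (balancing $t^{-1}\log\tr(\rho^{\otimes n}T_n)$ against $nt\|\rho\sigma^{-1}\|_\infty$ at $t\sim\sqrt{-\log\tr(\rho^{\otimes n}T_n)/(n\|\rho\sigma^{-1}\|_\infty)}$) produces the advertised $-2\sqrt{n\|\rho\sigma^{-1}\|_\infty\log(1/\tr(\rho^{\otimes n}T_n))}$ correction. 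The main obstacle I expect is the coupled limit $p\to 1$: classically it reduces to the identity $\|h\|_0=\e^{\E_P[\log|h|]}$, which the introduction emphasizes fails in the non-commutative setting, so the R\'enyi expansion of $\|\Gamma_\rho^{-1}(\sigma)\|_{p,\rho}^p$ must be handled with care to collapse exactly to $D(\rho\|\sigma)$ rather than to a sandwiched-R\'enyi surrogate.
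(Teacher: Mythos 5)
Your overall plan is the right one and matches the paper's: generalized depolarizing semigroup with invariant state $\rho$, uniform $1$-log-Sobolev bound from \Cref{thm:tensorization-alpha-1}, the combined reverse H\"older/reverse hypercontractivity inequality of \Cref{RHolderHC}, and an optimization over $t$. Your upper bound on $\tr(\sigma^{\otimes n}\Phi_t^{\otimes n}(T_n))$ via the Schr\"odinger dual $\Phi_t^*(\sigma)=\e^{-t}\sigma+(1-\e^{-t})\rho$ and the operator domination $\rho\le\|\rho\sigma^{-1}\|_\infty\sigma$ is correct and, if anything, slightly cleaner than the paper's induction.

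However, there is a genuine gap in your lower bound, and you half-acknowledge it without resolving it. First, there is an arithmetic inconsistency: on the boundary $(1-p)(1-q)=\e^{-t}$, choosing $q=1-\e^{-t}$ forces $p=0$, not $p\to 1$. If you actually send $p\to 1$ the constraint forces $q\to-\infty$, in which case $[\tr(\rho_n T_n)]^{1/q}\to 1$ and all dependence on the Type I error is lost. So the consistent choice is $p\to 0$, $q\to 1-\e^{-t}$. Second, even granting $p\to 0$, your plan handles $\big\|\Gamma_\rho^{-1}(\sigma)\big\|_{p,\rho}$ by an entropy expansion. But at $p\to 0$ this factor is precisely the non-commutative analogue of $\|h\|_0$ whose failure the introduction flags; a direct limit does not collapse to $\e^{-D(\rho\|\sigma)}$. (And if you did expand around $p=1$ as written, \Cref{prop:norm-derivative} plus \Cref{prop:ent}(iv) applied to $\Gamma_\rho^{-1}(\sigma)$ produce $\Ent_{1,\rho}\big(\Gamma_\rho^{-1}(\sigma)\big)=D(\sigma\|\rho)$, the relative entropy with the arguments in the wrong order.) The missing idea is that the Araki--Lieb--Thirring inequality (\Cref{ALT}) must be applied to \emph{both} factors coming out of \Cref{RHolderHC}, not just $\|T_n\|_{q,\rho_n}$. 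Applying it with $A=\sigma_n$, $B=\rho_n^{(1-p)/p}$, $r=p$ converts the sandwiched quantity into a Petz-type one,
\begin{align*}
\big\|\Gamma_{\rho_n}^{-1}(\sigma_n)\big\|_{p,\rho_n}\ \ge\ \big[\tr\big(\rho_n^{1-p}\sigma_n^{p}\big)\big]^{1/p}\ =\ \exp\big(-D_{1-p}(\rho_n\|\sigma_n)\big),
\end{align*}
and it is the \emph{Petz} R\'enyi divergence $D_{1-p}$ that has the clean limit $D_{1-p}(\rho\|\sigma)\to D(\rho\|\sigma)$ as $p\to 0$. Without this ALT step for the $\big\|\Gamma_\rho^{-1}(\sigma)\big\|$ factor, the argument does not close.
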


\begin{proof}
{	The result follows by combining \Cref{thm:tensorization-alpha-1} and \Cref{RHolder}}. For simplicity of notation we will use $\sigma_n:=\sigma^{\otimes n}$ and $\rho_n:=\rho^{\otimes n}$.
Let $0\le p,q\le 1$ and let $t \geq 0$ be such that 
\begin{align}\label{cond1}
(1-p)(1-q) &=\e^{-t}.
\end{align}
Let ${\cal L}$ denote the generator of a generalized depolarizing semigroup $\{\Phi_t:\, t\geq 0\}$ with invariant state $\rho$, i.e., $\Phi_t(X)=\e^{-t} X + (1-\e^{-t}) \tr(\rho X) \II$. By Theorem~\ref{thm:tensorization-alpha-1} the $1$-log-Sobolev constants of this QMS and its tensor powers are lower bounded by $1/4$.	Then using \Cref{RHolderHC} for $Y=T_n$ and $X=\Gamma_{\rho_n}^{-1}(\sigma_n)$ we obtain
	\begin{align}\label{eq12}
	\tr\big(\sigma_n \Phi_t^{\otimes n}(T_n)\big)\ge \big\|\Gamma_{\rho_n}^{-1}(\sigma_n)\big\|_{p,\rho_n}\|T_n\|_{q,\rho_n}.
	\end{align}	
An application of the Araki-Lieb-Thirring inequality, \Cref{ALT}, with $A=\sigma_n$, $B=\rho_n^{(1-p)/p}$ and $r=p\in [0,1]$ leads to
\begin{align*}
	\big\|\Gamma_{\rho_n}^{-1}(\sigma_n)\big\|_{p,\rho_n}=\left[\tr\Big( \rho_n^{(1-p)/2p}\sigma_n\rho_n^{(1-p)/2p}\Big)^p\right]^{1/p}\ge \left[\tr\big( \rho_n^{1-p}\,\sigma_n^{p}\,\big)\right]^{1/p} =\exp\left( - D_{1-p}(\rho_n\|\sigma_n)    \right),
\end{align*}
where $$D_{1-p}(\rho\|\sigma):=\frac{-1}{p}\log\tr\left(\sigma^p\,\rho^{1-p}\right),$$
denotes the sandwiched $p$-R\'enyi divergence between $\rho$ and $\sigma$. A very similar application of \Cref{ALT} for $A=T_n$ and $B=\rho_n^{1/q}$ and $r=q\in [0,1]$ yields
\begin{align*}
	\|T_n\|_{q,\rho_n}=\left[\tr\big(\rho_n^{1/2q} T_n\rho_n^{1/2q} \big)^q\right]^{1/q}\ge \left[\tr\big(\rho_n T_n^q\big)\right]^{1/q}\ge \left[\tr\big(\rho_n T_n\big)\right]^{1/q},
\end{align*}	
where in the last inequality, we used that $0\le T_n\le \II$, so that $T_n^q\ge T_n$. Using the last two bounds in~\eqref{eq12}, we get
\begin{align*}
		\tr(\sigma_n \Phi_t^{\otimes n}(T_n))\ge \left[\tr(\rho_n T_n)\right]^{1/q}\exp\left(  -D_{1-p}(\rho_n\|\sigma_n)    \right).
\end{align*}
Taking the limit $p \to 0$ (and $q\to 1-\e^{-t}$) on both sides of the above inequality yields
\begin{align}\label{eq15}
\tr(\sigma_n \Phi_t^{\otimes n}(T_n)) &\geq 
 \left[\tr(\rho_n T_n)\right]^{1/(1-\e^{-t})}\exp\left(  -D(\rho_n\|\sigma_n)    \right).
\end{align}	

Let $\gamma:=\|\sigma^{-1/2}{\rho}\sigma^{-1/2}\|_{\infty}$ and define the superoperator $\Psi_t$ by
$$\Psi_t(X) = \e^{-t} X +\gamma(1-\e^{-t})\tr(\sigma X)\,\II.$$
Then by induction on $n$ it can be shown that $\Psi_t^{\otimes n} -\Phi_t^{\otimes n}$ is a completely positive superoperator. This is clear from definitions for $n=1$, and for every $Y\in \cP(\cH^{\otimes n}\otimes \cH')$, where $\cH'$ is an arbitrary Hilbert space, we have
\begin{align*}
\Psi_t^{\otimes n}\otimes \cI (Y) & = \big(\Psi^{\otimes (n-1)}\otimes \cI\otimes \cI \big) \big(\cI^{\otimes (n-1)}\otimes \Psi_t\otimes \cI (Y)\big)\\
& \geq \big(\Phi^{\otimes (n-1)}\otimes \cI\otimes \cI \big) \big(\cI^{\otimes (n-1)}\otimes \Psi_t\otimes \cI (Y)\big)\\
& =\big(\cI^{\otimes (n-1)}\otimes \Psi_t\otimes \cI \big) \big(\Phi^{\otimes (n-1)}\otimes \cI\otimes \cI (Y)\big)\\
& \geq\big(\cI^{\otimes (n-1)}\otimes \Phi_t\otimes \cI \big) \big(\Phi^{\otimes (n-1)}\otimes \cI\otimes \cI (Y)\big)\\
&= \Phi_t^{\otimes n}\otimes \cI (Y),
\end{align*}
where in the inequalities come from the induction hypothesis and the base of induction. Therefore, $\Psi_t^{\otimes n} -\Phi_t^{\otimes n}$ is a completely positive. On the other hand, for every $Y\in \cB(\cH^{\otimes n})$ we have 
$$\tr\big(\sigma_n\Psi_t^{\otimes n}(Y)\big) =  \big(\e^{-t}+  \gamma(1-\e^{-t})\big)^n\,\tr(\sigma_n Y).$$
This equation is immediate for $n=1$, and for arbitrary $n$ can be proven by first observing that it holds for $Y=Y_1\otimes\cdots \otimes Y_n$ being of a tensor product form, and then using linearity. Putting these together we arrive at
\begin{align*}
	\tr\big(\sigma_n  \Phi_t^{\otimes n}(T_n)\big)&\leq \tr\big(\sigma_n \Psi_t^{\otimes n}\,(T_n)\big )	\\
	&= \big(\e^{-t}+  \gamma(1-\e^{-t})\big)^n\,\tr(\sigma_n T_n).
\end{align*}
Next using the fact that $\gamma\geq 1$ {(which follows simply by taking the trace of the operator inequality $\rho\le\gamma\sigma$),}  the convexity of $h(x)=x^\gamma$ implies $(h(x)-h(1))/(x-1)\geq h'(1)$ for every $x\geq 1$. Therefore, $\e^{\gamma t}-1\ge \gamma(\e^t-1)$ for every $t\geq 0$, and $\e^{-t}+  \gamma(1-\e^{-t})\leq \e^{(\gamma-1)t}$. As a result
\begin{align}
	\tr\big(\sigma_n  \Phi_t^{\otimes n}(T_n)\big)&\leq
	\e^{(\gamma-1)nt}\,\tr(\sigma_n T_n).\label{eq14}
\end{align}
Then from~\reff{eq15} and~\reff{eq14} we get
\begin{align*}
	\left[\tr(\rho_n T_n)\right]^{1/(1-\e^{-t})}\exp\left(  -D(\rho_n\|\sigma_n)    \right)\le \e^{(\gamma-1)nt}\tr(\sigma_nT_n ).
\end{align*}	
Taking the logarithm of both sides yields
\begin{align}
\log \tr(\sigma_nT_n ) &\geq -D(\rho_n\|\sigma_n)  - 
(\gamma -1)nt + \frac{1}{1-\e^{-t}} \log \tr(\rho_n T_n)\nonumber\\
&\geq -D(\rho_n\|\sigma_n)  - \gamma nt + \left(1+ \frac{1}{t}\right) \log \tr(\rho_n T_n),
\label{eq:opt}
\end{align}
where the second inequality follows from $\e^t \geq 1+ t$ and
$$\frac{1}{1-\e^{-t}} = 1+ \frac{1}{\e^t-1}\leq 1+\frac 1 t.$$
Optimizing~\eqref{eq:opt} over the choice of $t$ yields  
$$t=\left(\frac{-\log \tr(\rho_nT_n)}{\gamma n } \right)^{1/2},$$
and we obtain the desired inequality
\begin{align*}
\log \tr(\sigma_nT_n ) &\geq -nD(\rho\|\sigma) - 2 \sqrt{-\gamma n\log \tr(\rho_nT_n)}+  \log \tr(\rho_n T_n).
\end{align*}
	
\end{proof}

\begin{remark}\label{remarkweak}
	{The bound found by the present reverse hypercontractivity technique is weaker than the one found in Equation (75) of \cite{Mosonyi2015}, which is in particular tight as $n\to\infty$. However, as opposed to \cite{Mosonyi2015}, the techniques developed in this paper have the particular advantage that they can be generalized to obtain strong converses in various problems of quantum network information theory (see \cite{cheng2019strong,cheng2019strongb}).
	}
	\end{remark}

\begin{corollary}[Finite-blocklength strong converse bound for quantum hypothesis testing]\label{corr-QHTstrong}
Let $\rho,\sigma\in\cD_+(\cH)$ and $\gamma=\|\rho\sigma^{-1}\|_\infty$. Then for any test $0\le T_n\le \II_n$, where $T_n \in {\cal B}({\cal H}^{\otimes n})$, if the Type~II error satisfies the inequality $\beta_n(T_n) \leq \e^{-nr}$ for $r > D(\rho||\sigma)$, then the Type~I error satisfies 
\begin{align}\label{eq-state}
\alpha_n(T_n) &\geq 1 - \e^{-nf},
\end{align}
where $$ f = \left( \sqrt{\gamma + (r-D(\rho||\sigma))} - \sqrt{\gamma}\right)^2,$$
and hence tends to zero in the limit of $r \to D(\rho||\sigma)$.
\end{corollary}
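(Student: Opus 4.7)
The plan is to derive the corollary directly from Theorem~\ref{thm:QHT} by treating the inequality~\eqref{eq-state} as a quadratic-type constraint in the quantity $\sqrt{-\log \tr(\rho^{\otimes n} T_n)}$ and inverting it to obtain an upper bound on $\tr(\rho^{\otimes n} T_n)$. First I would introduce the shorthand $a := -\log \tr(\rho^{\otimes n} T_n) \geq 0$ and $D := D(\rho\|\sigma)$. The hypothesis $\beta_n(T_n) \leq e^{-nr}$ translates to $\log \tr(\sigma^{\otimes n} T_n) \leq -nr$. Combining this with the bound from Theorem~\ref{thm:QHT}, namely
\[
\log \tr(\sigma^{\otimes n} T_n) \geq -nD - 2\sqrt{n\gamma\, a} - a,
\]
gives
\[
n(r-D) \leq a + 2\sqrt{n\gamma\, a}.
\]

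Next I would recognize the right-hand side as a perfect square shifted by a constant: $a + 2\sqrt{n\gamma\, a} = (\sqrt{a}+\sqrt{n\gamma})^2 - n\gamma$. Rearranging yields
\[
(\sqrt{a}+\sqrt{n\gamma})^2 \geq n\big(\gamma + (r-D)\big),
\]
and since the assumption $r > D$ makes the right-hand side positive, taking square roots preserves the inequality:
\[
\sqrt{a} \geq \sqrt{n\big(\gamma+(r-D)\big)} - \sqrt{n\gamma} = \sqrt{n}\Big(\sqrt{\gamma+(r-D)} - \sqrt{\gamma}\Big).
\]
Squaring gives $a \geq nf$ with $f$ as stated, so $\tr(\rho^{\otimes n} T_n) \leq e^{-nf}$, and hence $\alpha_n(T_n) = 1 - \tr(\rho^{\otimes n} T_n) \geq 1 - e^{-nf}$, which is the claim.

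There is no real obstacle here since the corollary is a one-step algebraic consequence of Theorem~\ref{thm:QHT}. The only minor care needed is to confirm that $r > D$ ensures $\gamma + (r-D) > \gamma \geq 1$ (because $\gamma = \|\rho\sigma^{-1}\|_\infty \geq \tr(\rho) = 1$), so the expression $\sqrt{\gamma+(r-D)} - \sqrt{\gamma}$ is strictly positive and squaring is monotone. Finally, the continuity of $r \mapsto (\sqrt{\gamma+(r-D)}-\sqrt{\gamma})^2$ at $r = D$ gives $f \to 0$ as $r \downarrow D$, recovering the quantum Stein lemma in the limit.
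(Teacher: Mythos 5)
Your proof is correct and is essentially the same as the paper's: both combine the hypothesis $\beta_n(T_n)\le \e^{-nr}$ with the bound of Theorem~\ref{thm:QHT} to obtain a quadratic inequality in $\sqrt{-\log\tr(\rho^{\otimes n}T_n)}$ (the paper's $x_n$, your $\sqrt{a}$) and then solve it. The only cosmetic difference is that you complete the square explicitly where the paper simply invokes the quadratic formula; your side remark that $\gamma\ge 1$ is correct but not needed, since $r>D(\rho\|\sigma)$ alone already guarantees $\sqrt{\gamma+(r-D)}-\sqrt{\gamma}>0$.
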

\begin{proof}
Fix $r> D(\rho\|\sigma)$ and consider a sequence of tests $T_n$ such that $\beta_n(T_n)\le \e^{-nr}$. Then, from \Cref{thm:QHT} we have 
\begin{align*}
-nr &\geq  -n D(\rho||\sigma) - 2\, \sqrt{n\gamma \log \frac{1}{1 - \alpha_n(T_n)}} - \log \frac{1}{1-\alpha_n(T_n)}. 
\end{align*}
Defining $x_n^2 := \log \frac{1}{1-\alpha_n(T_n)}$ this is equivalent to
$$x_n^2 + 2 \,\sqrt{n\gamma}\, x_n \,-\, n\, (r- D(\rho||\sigma)) \geq 0,$$
solving which directly leads to the statement of the corollary.
\end{proof}

\Cref{thm:QHT} also leads to the following finite blocklength second order lower bound on the Type~II error when the Type~I error is less than a threshold value.

\begin{corollary}\label{corr-soa}
Let $\rho,\sigma\in\cD_+(\cH)$ . Then for any $n \in {\mathbb{N}}$ 
and $\eps>0$ the minimal Type~II error satisfies 
\begin{align*}
\beta_{n, \eps} \geq (1-\eps) \exp\left( - nD(\rho||\sigma) - 2\, \sqrt{n\gamma \log\left(\frac{1}{1-\eps}\right)}\,\right),
\end{align*}	
where $\gamma= \|\rho\sigma^{-1}\|_\infty$.
\end{corollary}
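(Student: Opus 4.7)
The claim is essentially a direct corollary of Theorem~\ref{thm:QHT}. The plan is to rewrite the inequality \eqref{eq-state} from that theorem in terms of the error probabilities $\alpha_n(T_n)=1-\tr(\rho^{\otimes n}T_n)$ and $\beta_n(T_n)=\tr(\sigma^{\otimes n}T_n)$, and then specialize to tests satisfying the Type~I constraint $\alpha_n(T_n)\leq\eps$.

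Concretely, I would first exponentiate \eqref{eq-state} to obtain
\begin{align*}
\beta_n(T_n)\;\geq\;(1-\alpha_n(T_n))\,\exp\!\left(-nD(\rho\|\sigma)-2\sqrt{n\gamma\log\tfrac{1}{1-\alpha_n(T_n)}}\,\right),
\end{align*}
which holds for every test $0\leq T_n\leq \II_n$. Next, note that the right-hand side is monotone in $\alpha_n(T_n)$: the prefactor $1-\alpha_n(T_n)$ is decreasing in $\alpha_n(T_n)$, and the argument of the exponential is also decreasing since $\log\tfrac{1}{1-\alpha_n(T_n)}$ is increasing in $\alpha_n(T_n)$. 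Hence for any test with $\alpha_n(T_n)\leq\eps$, we may replace $\alpha_n(T_n)$ by $\eps$ on the right-hand side without increasing it, yielding
\begin{align*}
\beta_n(T_n)\;\geq\;(1-\eps)\,\exp\!\left(-nD(\rho\|\sigma)-2\sqrt{n\gamma\log\tfrac{1}{1-\eps}}\,\right).
\end{align*}

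Finally, taking the infimum over all tests $T_n$ satisfying the Type~I constraint $\alpha_n(T_n)\leq\eps$, as in the definition~\eqref{bne} of $\beta_{n,\eps}$, gives exactly the bound claimed in the corollary. Since every step is either the direct application of Theorem~\ref{thm:QHT} or an elementary monotonicity argument, no substantive obstacle arises — the whole proof is essentially unpacking the notation of $\alpha_n,\beta_n$ in the statement of Theorem~\ref{thm:QHT}.
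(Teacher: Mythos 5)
Your proof is correct, and it is essentially the intended derivation: the paper itself gives no explicit proof of Corollary~\ref{corr-soa}, simply stating that it follows from Theorem~\ref{thm:QHT}, and your unpacking (exponentiate \eqref{eq-state}, identify $\tr(\rho^{\otimes n}T_n)=1-\alpha_n(T_n)$ and $\tr(\sigma^{\otimes n}T_n)=\beta_n(T_n)$, then use monotonicity of the resulting bound in $\alpha_n(T_n)$ on $[0,\epsilon]$ together with the definition~\eqref{bne}) is exactly the right way to do it.
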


\subsection{Classical-quantum channels}

The strong converse property of the capacity of a classical-quantum (c-q) channel was proved independently in \cite{796386,796385}. In this section, we use the quantum reverse hypercontractivity inequality to obtain a finite blocklength strong converse bound for transmission of information through classical-quantum (c-q) channels. Suppose Alice wants to send classical messages belonging to a finite set ${\cal M}$ to Bob, using a memoryless c-q channel:
$${\cal W}: {\cal X} \to {\cal D}({\cal H_B}),$$
where ${\cal X}$ denotes a finite alphabet, and ${\cal H_B}$ is a finite-dimensional Hilbert space with dimension $d$. Thus the output of the channel under input $x\in \mathcal X$ is some quantum state $\rho_x=\mathcal W(x)\in \cD(\cH_B)$.
To send a message $m \in {\cal M}$, Alice encodes it in a codeword
$${\cal E}^{(n)}(m) = x^n(m)\equiv x^n  := (x_1, x_2, \ldots x_n)  \in {\cal X}^n,$$ 
where ${\cal E}^{(n)}$ denotes the encoding map. She then sends it to Bob through $n$ successive uses of the channel ${\cal W}^{\otimes n}$, whose action on the codeword $x^n$ is given by
$${\cal W}^{\otimes n}(x^n) = \rho_{x_1} \otimes \cdots \otimes \rho_{x_n} \equiv \rho_{x^n} .$$
In order to infer Alice's message, Bob applies a measurement, described by a POVM $\Pi^n:= \{\Pi^n_{m'}\}_{m' \in {\cal M}}$ on the state ${\cal W}^{\otimes n}(x^n)=\rho_{x^n}$
that he receives. The outcome of the measurement would be Bob's guess of Alice's message. See Figure~\ref{fig:c-q-channel}. 

\FloatBarrier

\begin{figure}[t]
	\centering

\begin{tikzpicture}
\bXInput[$\mathcal{M}\ni m$]{source}
\bXBloc[3]{encoder}{$\cE^{(n)}$}{source}
\bXLink[]{source}{encoder}

\bXBloc[5]{decoder}{$\mathcal{W}^{\otimes n}$}{encoder}
\bXLink[$x^n\in\mathcal{X}^n$]{encoder}{decoder}

\bXBloc[11]{measure}{$\Pi^n:=\{\Pi^n_{m'}\}_{m'\in\mathcal{M}}$}{decoder}
\bXLink[$\rho_{x^n}=\rho_{x_1}\otimes\,...\,\otimes \rho_{x_n}$]{decoder}{measure}

\bXOutput[3]{output}{measure}
\bXLink[]{measure}{output}
\bXLinkName[0]{output}{$~~~~~~~~~~~~~\hat{m}\in\mathcal{M}$}

\end{tikzpicture}
\caption{Encoding and decoding of a classical message sent over a c-q channel. $\mathcal E^{(n)}$ is the encoding map, and $\Pi^n$ is the POVM constituting the decoding map.}
\label{fig:c-q-channel}
\end{figure}
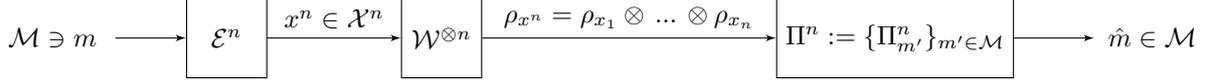

The triple $(|\mathcal{M}|,\mathcal{E}^{(n)},\Pi^n)$ defines a code which we denote as $\mathcal{C}_n$ (see \cite{watrous_2018}). The \emph{rate} of the code is given by $\log|\mathcal{M}|/n$, and its \emph{maximum probability of error} is given by
\begin{align*}
p_{\max}({\cal C}_n; {\cal W}):=\max_{m\in\mathcal{M}} \Big[1-\tr\big(\,\Pi^n_m\,\mathcal{W}^{\otimes n}\circ\cE^{(n)}(m)\big)\Big].
\end{align*}	
We let $C_{n, \eps}(\mathcal W)$ be the maximum rate $\log|\mathcal M|/n$ over all codes $\mathcal C_n=(|\mathcal M|, \mathcal E^{(n)}, \Pi^n)$ with $p_{\max}(\mathcal C_n; \mathcal W)\leq \eps$. Then
the (asymptotic) capacity of the channel is defined by 
 $$C(\mathcal{W}):=\lim_{\eps\to0}\liminf_{n\to\infty} C_{n, \eps}(\mathcal{W}).$$ 
For c-q channels, this is known to be given by \cite{schumacher1997sending,holevo1998capacity}
\begin{align*}
C(\mathcal W) = \max_{P_X} I(X; B)_\rho.
\end{align*}
Here the maximum is taken over all probability distributions $P_X$ on $\mathcal X$, the bipartite state $\rho_{XB}$ is given by 
$$\rho_{XB} = \sum_{x\in \mathcal X} P_X(x) \ket x\bra x\otimes \rho_x,$$
and $I(X; B)_\rho = D(\rho_{XB}\| \rho_X\otimes \rho_B)$ is the mutual information function. 
The fact that the capacity is given by maximum mutual information is indeed implied by its \emph{additivity}~\cite{Shor}. That is, the maximum mutual information associated to the channel $\mathcal W^{\otimes n}$ equals $n$ times the maximum mutual information of $\mathcal W$:
\begin{align}\label{eq:c-q-additivity}
\max_{P_{X^n}} I(X^n; B^n) = n \max_{P_X} I(X; B) = nC(\mathcal W).
\end{align}

\begin{theorem}\label{theo2}
 Let $\mathcal{W}:\mathcal{X}\to\cD(\cH_B)$ be a c-q channel with $\mathcal W(x)=\rho_x$ being faithful for all $x\in \mathcal X$. Then, for any code
${\cal C}_n:=(|\mathcal{M}|,\cE^{(n)},\Pi^n)$ with $p_{\max}(\mathcal C_n; \mathcal W)\leq \eps$ we have
\begin{align*}
I(X^n; B^n)\geq \log |\mathcal M| -2\sqrt{dn\log \frac{1}{1-\eps}} - \log \frac{1}{1-\eps}, 
\end{align*}	
where $d=\dim \cH_B$ and the mutual information is computed with respect to the state
$$\rho_{X^nB^n} = \frac{1}{|\mathcal M|}\sum_{m} \ket{x^n(m)}\bra{x^n(m)}\otimes \rho_{x^n(m)}.$$
\end{theorem}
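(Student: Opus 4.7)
The plan is to transport the proof of \Cref{thm:QHT} to the channel-coding setting by treating each message $m$ as a separate hypothesis-testing problem on $\cH_B^{\otimes n}$. Fix $m$, set $\rho_m := \rho_{x^n(m)} = \rho_{x_1(m)}\otimes\cdots\otimes\rho_{x_n(m)}$, and work with the tensor-product generalized depolarizing semigroup $\Phi_t^{(n),m} = \bigotimes_i \Phi_{t,i}^{(m)}$ whose $i$-th factor has invariant $\rho_{x_i(m)}$. By \Cref{thm:tensorization-alpha-1} this semigroup has $\alpha_1 \geq 1/4$, which is the precise input needed by \Cref{RHolderHC}.

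Applying \Cref{RHolderHC} with $X = \Gamma_{\rho_m}^{-1}((\II/d)^{\otimes n})$ and $Y = \Pi^n_m$, bounding each factor via Araki--Lieb--Thirring (\Cref{ALT}), and taking $p \to 0$, $q \to 1-\e^{-t}$ exactly as in the proof of \Cref{thm:QHT}, I obtain
\[
\tr\bigl((\II/d)^{\otimes n}\Phi_t^{(n),m}(\Pi^n_m)\bigr) \geq \e^{-D(\rho_m\|(\II/d)^{\otimes n})}\,[\tr(\rho_m\Pi^n_m)]^{1/(1-\e^{-t})}.
\]
For the matching upper bound I set $\Psi_t^{(n)} := \bigotimes_i \Psi_{t,i}$ with $\Psi_{t,i}(X) := \e^{-t}X + d(1-\e^{-t})\tr((\II/d)X)\II$; the complete positivity $\Psi_t^{(n)} - \Phi_t^{(n),m} \succeq_{\mathrm{CP}} 0$ holds factor by factor because $d(\II/d) = \II \geq \rho_{x_i(m)}$, and a direct computation gives $\tr((\II/d)^{\otimes n}\Phi_t^{(n),m}(\Pi^n_m)) \leq (\e^{-t}+d(1-\e^{-t}))^n\tr(\Pi^n_m)/d^n \leq \e^{(d-1)nt}\tr(\Pi^n_m)/d^n$.

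Combining the two bounds, using $D(\rho_m\|(\II/d)^{\otimes n}) = n\log d - H(\rho_m)$ and $\tfrac{1}{1-\e^{-t}} \leq 1+\tfrac{1}{t}$, plugging in $\tr(\rho_m\Pi^n_m) \geq 1-\eps$, and optimizing $t = \sqrt{\log(1/(1-\eps))/(dn)}$ yields the per-message inequality $\tr(\Pi^n_m)/d^n \geq (1-\eps)\exp\bigl(-D(\rho_m\|(\II/d)^{\otimes n}) - 2\sqrt{dn\log\tfrac{1}{1-\eps}}\bigr)$. Summing over $m$ with $\sum_m \Pi^n_m \leq \II_n$ (so that $\sum_m \tr(\Pi^n_m)/d^n \leq 1$) and applying Jensen's inequality to the convex function $x \mapsto \e^{-x}$ produces $|\mathcal M|\e^{-\bar D} \leq (1-\eps)^{-1}\e^{2\sqrt{dn\log(1/(1-\eps))}}$ with $\bar D := \tfrac{1}{|\mathcal M|}\sum_m D(\rho_m\|(\II/d)^{\otimes n})$, i.e.\ $\log|\mathcal M| \leq \bar D + \log\tfrac{1}{1-\eps} + 2\sqrt{dn\log\tfrac{1}{1-\eps}}$.

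The main obstacle is the final identification of $\bar D$ with the quantum mutual information $I(X^n;B^n)$: a direct computation shows $\bar D = I(X^n;B^n) + D(\rho_{B^n}\|(\II/d)^{\otimes n})$, so the route just outlined delivers only the slightly looser bound with $\bar D$ in place of $I(X^n;B^n)$. Closing this gap will require running the per-message argument against the reference $\rho_{B^n}$ itself (so that Jensen directly outputs $I(X^n;B^n) = \sum_m p_m D(\rho_m\|\rho_{B^n})$), which in turn demands a non-tensor-product variant of the CP upper bound in the third step. I expect this to be the genuinely non-commutative difficulty at the heart of the theorem, which can be handled by applying the factor-by-factor CP inequality against per-position marginals $\rho_{B_i}$ together with subadditivity of the von Neumann entropy to absorb the discrepancy between $\bigotimes_i \rho_{B_i}$ and $\rho_{B^n}$.
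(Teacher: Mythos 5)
Your blueprint is sound up to the point you flag, and the gap you identify is genuine: with reference $(\II/d)^{\otimes n}$ you indeed get $\bar D = I(X^n;B^n) + D(\rho_{B^n}\|(\II/d)^{\otimes n})$, which is \emph{larger} than $I(X^n;B^n)$ and so yields a strictly weaker upper bound on $\log|\mathcal M|$. However, the repair you sketch does not work: replacing $(\II/d)^{\otimes n}$ by $\bigotimes_i\rho_{B_i}$ leaves the residual $D(\rho_{B^n}\|\bigotimes_i\rho_{B_i})\geq 0$ in the same direction, and subadditivity of von Neumann entropy ($H(\rho_{B^n})\leq\sum_i H(\rho_{B_i})$) only reconfirms this residual is nonnegative rather than cancelling it. There is also no ``genuinely non-commutative difficulty'' hiding here and no need for a non-tensor-product variant of the CP comparison.

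The actual fix, which is what the paper does, is simpler than you suspect: run the per-message hypothesis-testing argument directly against $\sigma_n = \rho_{B^n}$. Nothing in that argument requires $\sigma_n$ to be a tensor product. The tensor-product structure is only needed for the semigroup's invariant state $\rho_{x^n(m)} = \bigotimes_i \rho_{x_i(m)}$, which enters Theorem~\ref{thm:tensorization-alpha-1} and \Cref{RHolderHC}; the reference $\sigma_n$ appears only through $X=\Gamma_{\rho_{x^n(m)}}^{-1}(\sigma_n)$ in the lower bound (where Araki--Lieb--Thirring gives $\|X\|_{p,\rho_{x^n(m)}}\geq \exp(-D_{1-p}(\rho_{x^n(m)}\|\sigma_n))$ for any faithful $\sigma_n$) and through $\tr(\sigma_n\,\cdot\,)$ in the upper bound. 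For the upper bound, the CP comparison $\Phi_{t,x^n(m)}(\Pi^n_m)\leq \Psi_t^{\otimes n}(\Pi^n_m)$ with $\Psi_t(X)=\e^{-t}X+(1-\e^{-t})\tr(X)\II$ is a superoperator-level inequality using only $\rho_x\leq \II$, hence it is preserved under $\tr(\rho_{B^n}\,\cdot\,)$. After averaging over $m$ and using concavity of $\log$ together with the POVM identity $\sum_m\Pi^n_m=\II$, you land on $\tr(\rho_{B^n}\Psi_t^{\otimes n}(\II))=(\e^{-t}+(1-\e^{-t})d)^n\leq\e^{(d-1)nt}$, and the average of $D(\rho_{x^n(m)}\|\rho_{B^n})$ is exactly $I(X^n;B^n)$ with no error term. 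Optimizing $t$ then gives the theorem as stated.
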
	

This theorem together with the additivity result~\eqref{eq:c-q-additivity}  directly imply that for any code of rate larger than $C(\mathcal{W})$, the maximum probability of error goes to one, as $n\to\infty$.	

\begin{proof}
For every $x^n=(x_1, \dots, x_n)\in \mathcal X^n$ let $\Phi_{t, x^n}=\Phi_{t, x_1}\otimes \cdots \otimes \Phi_{t, x_n}$ with
$$\Phi_{t, x}(X) = \e^{-t}X +(1-\e^{-t}) \tr(\rho_x X)\II.$$
Then following similar steps as in the proof of Theorem~\ref{thm:QHT}, using Theorem~\ref{thm:tensorization-alpha-1}, \Cref{RHolderHC} and the Araki-Lieb-Thirring inequality, for every $\Pi_m^n$ we have
$$\tr\big(\rho_{B^n} \Phi_{t, x^n}(\Pi^n_m)\big) \geq \big[\tr\big( \rho_{x^n} \Pi_m^n  \big)\big]^{1/(1-e^{-t})} \e^{-D(\rho_{x^n}\| \rho_{B^n})}.$$
Letting $x^n=x^n(m)$, using $\tr\big( \rho_{x^n(m)} \Pi_m^n  \big)\geq 1-\eps$, taking logarithm of both sides and averaging over the choice of $m\in \mathcal M$ we obtain
\begin{align*}
\frac{1}{|\mathcal M|} \sum_{m\in \mathcal M} \log \tr\big(\rho_{B^n} \Phi_{t, x^n(m)}(\Pi^n_m)\big) & \geq -\frac{1}{|\mathcal M|} \sum_{m\in \mathcal M} D(\rho_{x^n(m)} \| \rho_{B^n}) + \frac{1}{1-\e^{-t}} \log(1-\eps) \\
&= -I(X^n; B^n) + \frac{1}{1-\e^{-t}} \log(1-\eps)\\
&\geq -I(X^n; B^n) + \big(1+\frac{1}{t}\big) \log(1-\eps).
\end{align*}
Now define $\Psi_t(X) = \e^{-t}X + (1-\e^{-t})\tr(X)\II $. Following similar steps as in the proof of Theorem~\ref{thm:QHT}, using $\rho_{x} \leq \II$ it can be shown that $\Psi_t^{\otimes n} - \Phi_{t, x^n(m)}$ is completely positive. Therefore, $\Phi_{t, x^n(m)}(\Pi^n_m)\leq \Psi_t^{\otimes n}(\Pi_m^n)$ and we have  
\begin{align*}
-I(X^n; B^n) + \big(1+\frac{1}{t}\big) \log(1-\eps) & \leq \frac{1}{|\mathcal M|} \sum_m \log \tr\big(\rho_{B^n} \Psi_t^{\otimes n} (\Pi_m^n)\big)\\
&\leq \log \Big(\frac{1}{|\mathcal M|} \sum_m  \tr\big(\rho_{B^n} \Psi_t^{\otimes n} (\Pi_m^n)\Big)\\
& = \log   \Big( \frac{1}{|\mathcal M|}    \tr\big( \rho_{B^n} \Psi_t^{\otimes n}(\II^{\otimes n}_B)  \big)  \Big),
\end{align*}
where the second line follows from the concavity of the logarithm function and in the third line we use the fact that $\{\Pi^n_m:\, m\in\mathcal M\}$ is a POVM.
On the other hand, 
$$\Psi_t^{\otimes n}(\II^{\otimes n}_B) = \big(  \e^{-t} + (1-\e^{-t}) d   \big)^n \II^{\otimes n}_B\leq \e^{(d-1)nt}\II^{\otimes n}_B$$
Therefore, 
\begin{align*}
-I(X^n; B^n) + \big(1+\frac{1}{t}\big) \log(1-\eps) \leq -\log|\mathcal M|  + dnt.
\end{align*}
Optimizing over the choice of $t> 0$, the desired result follows.
\end{proof}
{The above theorem leads to the following finite blocklength second order strong converse bound for the classical capacity of a c-q channel.
\begin{corollary}	
	For any sequence of codes $\mathcal{C}_n:=(|\mathcal{M}|,\mathcal{E}^{(n)},\Pi^n)$ of rates $r:=\frac{|\mathcal{M}|}{n}>{C}(\mathcal{W})$, 
	\begin{align*}
	p_{\max}(\mathcal{C}_n;\mathcal{W})\ge 1-\e^{-nf}\,,
	\end{align*}
	where $f:=\big(\sqrt{d+(r-C(\mathcal{W}))}-\sqrt{d}  \big)^2$.
	\end{corollary}
\begin{proof}
We apply the bound found in \Cref{theo2}, so that
\begin{align*}
nC(\mathcal{W})\ge \log|\mathcal{M}|-2\sqrt{dn\log \frac{1}{1-\eps}} - \log \frac{1}{1-\eps}.
\end{align*}
The result follows by an analysis similar to the one of \Cref{corr-QHTstrong}.
	\end{proof}
\begin{remark}
	As pointed out in \Cref{remarkweak}, the strong converse bound that we find here is weaker than the one of \cite{Mosonyi2017}. However, and as opposed to \cite{Mosonyi2017}, our technique has recently been successfully applied to network information theoretical scenarios (see \cite{cheng2019strong,cheng2019strongb}).
	\end{remark}
}

%%%%%%%%%%%%%%%%%%%%%%%%%%%%%%%%%%%%
%*****************************APPENDIX**************************************

\appendix

\vspace{.5in}

\begin{center}
\textbf{\Large Appendix}
\end{center}

\section{Proof of Proposition~\ref{prop:contraction}} \label{app:contraction}
{\rm (i)} As mentioned in~\cite{DB14} (and explicitly worked out in~\cite{Beigi13})  for $p\geq 1$, contractivity  can be proven using the Riesz-Thorin interpolation theorem. So we focus on $p\in (-\infty, -1]\cup [1/2, 1)$. First let $p=-q\in (-\infty, -1]$, and $X> 0$. We note that 
$$\|\Phi_t(X)\|_{p, \sigma} = \|\Phi_t(X)^{-1}\|_{q, \sigma}^{-1}.$$
On the other hand, $\Phi_t$ is completely positive and unital, and $z\mapsto z^{-1}$ is operator convex. Therefore, by operator Jensen's inequality $\Phi_t(X^{-1})\geq \Phi_t(X)^{-1}$ and by the monotonicity of the norm we have $\|\Phi_t(X)^{-1}\|_{q,\sigma} \leq \|\Phi_t(X^{-1})\|_{q,\sigma}$. We conclude that 
 $$\|\Phi_t(X)\|_{p, \sigma} \geq \|\Phi_t(X^{-1})\|_{q,\sigma}^{-1} \geq \|X^{-1}\|_{q, \sigma}^{-1} = \|X\|_{p, \sigma},$$
where for the second inequality we use $q$-contractivity of $\Phi_t$ for $q\geq 1$. 

Now suppose that $p\in [1/2, 1)$. We note that {its H\"{o}lder conjugate} $\hat p\in (-\infty, -1]$, and that $\Phi_t$ is reverse $\hat p$-contractive. Then using H\"older's duality, for $X>0$ we have
\begin{align*}
\|\Phi_t(X)\|_p& = \inf_{Y>0: \|Y\|_{\hat p, \sigma}\geq 1} \langle Y, \Phi_t(X)\rangle_\sigma\\
& = \inf_{Y>0: \|Y\|_{\hat p, \sigma}\geq 1} \langle \widehat{\Phi}_t(Y), X\rangle_\sigma\\
& \geq \inf_{Z>0: \|Z\|_{\hat p, \sigma}\geq 1} \langle Z, X\rangle_\sigma\\
& = \|X\|_{p, \sigma}\,,
\end{align*}
{where $\widehat{\Phi}_t$ is the adjoint of $\Phi_t$ with respect to $\langle.,.\rangle_\sigma$, for each $t\ge 0$.}
Here the first equality follows from Lemma~\ref{lem:reversible}, and the inequality follows from the $\hat p$-contractivity of $\Phi_t$, i.e, $\|\Phi_t(Y)\|_{\hat p, \sigma} \geq \|Y\|_{\hat p, \sigma} \geq 1$.

\medskip
\noindent (ii) As worked out in~\cite{CMT15} this is an immediate consequence of the operator Jensen inequality.

%************************************************
\section{Second proof of Theorem~\ref{thm:QSVineq}}\label{app:qSV}

The proof is very similar to the one used in~\cite{BarEID17} to prove the strong $L_p$-regularity of the Dirichlet forms. Before stating the proof we need some definitions.

For a compact set $I$ we let $C(I)$ to be the Banach space of continuous, complex valued functions on $I$ (equipped with the supremum norm). Then the Banach space $C(I\times I)$ becomes a $*$-algebra when endowed with the natural involution $f\mapsto f^*$ with $f^*(x,y)=\overline{f(x,y)}$. Thus $C(I\times I)$ is a $C^*$-algebra.

We endow $\cB(\cH)$ with a Hilbert space structure by equipping it with the Hilbert-Schmidt inner product: 
$$\langle X, Y\rangle_{\HS}:= \tr(X^\dagger Y).$$ 
Fix $X,Y\in\cB_{sa}(\cH)$, 
and let $I$ be a compact interval containing the spectrum of both $X$ and $Y$. 
We define a $*$-representation $\pi_{X,Y}: C(I\times I)\rightarrow \cB\big(\cB(\cH)\big)$ that is  uniquely determined by its action on tensor products of functions as follows. For $f, g\in C(I)$ we define $\pi_{X, Y}(f\otimes g)\in \cB\big( \cB(\cH) \big)$ by
\begin{align*}
	\pi_{X,Y}(f\otimes g) (Z)=f(X) Zg(Y),\qquad Z\in\cB(\cH).
\end{align*}
	
The following lemma can be found in \cite{BarEID17} (see Lemma 4.2):

\begin{lemma}\label{lemma4.1}
$\pi_{XY}$ is a $*$-representation between $C^*$-algebras. That is,
\begin{itemize}
\item[{\rm(i)}] $\pi_{XY}(1)=\mathcal{I}$, where $1$ is the constant function on $I\times I$ equal to $1$.
\item[{\rm(ii)}] $\pi_{XY}(f^*g)=\pi_{XY}(f)^*\pi_{XY}(g)$ for all $f,g\in C(I\times I)$.
\item[{\rm(iii)}] If $f\in C(I\times I)$, is a non-negative function, then $\pi_{XY}(f)$ is a positive semi-definite operator on $\cB(\cH)$ for the Hilbert-Schmidt inner product, i.e., $\pi_{X, Y}(f)\in \cP\big( \cB(\cH) \big)$. 
\end{itemize}
\end{lemma}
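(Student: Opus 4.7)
The plan is a standard two-step procedure: first define and verify all three properties on the algebraic span of simple tensors $f \otimes g$ inside $C(I \times I)$, and then extend continuously to all of $C(I \times I)$ using that such simple tensors are dense by the Stone--Weierstrass theorem (the coordinate projections and constants on $I \times I$ separate points, and their polynomial combinations lie in the algebraic tensor product).

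On simple tensors, (i) is immediate since $1 = 1_I \otimes 1_I$ gives $\pi_{X,Y}(1)(Z) = 1_I(X)\, Z\, 1_I(Y) = Z$. For the multiplicative part of (ii), one computes
$$\pi_{X,Y}(f_1\otimes g_1)\circ \pi_{X,Y}(f_2\otimes g_2)(Z) = f_1(X) f_2(X)\, Z\, g_2(Y) g_1(Y) = (f_1 f_2)(X)\, Z\, (g_1 g_2)(Y),$$
where I use that the continuous functional calculus $f \mapsto f(X)$ is a $*$-homomorphism (available because $X$ is self-adjoint), and likewise for $Y$. This matches $\pi_{X,Y}((f_1 f_2)\otimes(g_1 g_2)) = \pi_{X,Y}((f_1 \otimes g_1)(f_2 \otimes g_2))$, since $C(I \times I)$ is commutative. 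For the involution, the Hilbert--Schmidt adjoint satisfies
$$\langle Z, \pi_{X,Y}(f\otimes g)(W)\rangle_{\HS} = \tr(Z^\dagger f(X) W g(Y)) = \tr\bigl((\bar f(X)\, Z\, \bar g(Y))^\dagger W\bigr),$$
using $f(X)^\dagger = \bar f(X)$ and $g(Y)^\dagger = \bar g(Y)$ (once again from self-adjointness of $X, Y$), so $\pi_{X,Y}(f\otimes g)^* = \pi_{X,Y}(\bar f \otimes \bar g) = \pi_{X,Y}((f\otimes g)^*)$. Combined with multiplicativity, this yields (ii) on the algebraic tensor product.

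To extend to all of $C(I \times I)$, I need the contractivity estimate $\|\pi_{X,Y}(h)\|_{\mathrm{op}} \le \|h\|_{C(I\times I)}$ on the dense $*$-subalgebra. The cleanest justification is to invoke the standard fact that any unital $*$-homomorphism from a commutative $C^*$-algebra into another $C^*$-algebra is automatically contractive: the preceding steps produce exactly such a homomorphism into $\cB(\cB(\cH))$ regarded as a $C^*$-algebra with the operator norm induced by the Hilbert--Schmidt inner product on $\cB(\cH)$. Uniform continuity then furnishes a unique extension on which (i) and (ii) persist.

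Finally, for (iii), given $f \ge 0$ in $C(I \times I)$, set $h := \sqrt{f} \in C(I \times I)$, which is real and non-negative, hence self-adjoint in the $C^*$-algebra sense. Then $f = h^* h$, and (ii) gives
$$\pi_{X,Y}(f) = \pi_{X,Y}(h)^*\, \pi_{X,Y}(h),$$
which is manifestly positive semi-definite on $(\cB(\cH), \langle \cdot, \cdot\rangle_{\HS})$. The single nontrivial step is the boundedness needed for the continuous extension; everything else is direct bookkeeping with the functional calculus.
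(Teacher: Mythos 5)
The paper does not prove this lemma at all; it cites it directly from \cite{BarEID17} (their Lemma~4.2). So there is no in-paper proof to compare against, and I assess your argument on its own terms.

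Your plan — verify (i), (ii) on simple tensors, extend by density, then get (iii) from (ii) via $f=(\sqrt f)^{*}\sqrt f$ — is reasonable, and the algebra on simple tensors is correct: unitality is immediate; multiplicativity uses commutativity of $g_1(Y)$ and $g_2(Y)$; the Hilbert--Schmidt adjoint computation giving $\pi_{X,Y}(f\otimes g)^{*}=\pi_{X,Y}(\bar f\otimes\bar g)$ correctly uses $f(X)^{\dagger}=\bar f(X)$ for self-adjoint $X$; and deducing (iii) from (ii) and $\sqrt f\in C(I\times I)$ is clean.

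The gap is in the extension step, and it is circular. Automatic contractivity of unital $*$-homomorphisms applies when the domain is a $C^*$-algebra. At the point where you invoke it, your map is defined only on the algebraic tensor product $C(I)\odot C(I)$, a dense $*$-subalgebra of $C(I\times I)$ that is \emph{not} a $C^*$-algebra, so the theorem does not apply. You need boundedness on the dense subalgebra in order to produce the extension on which the automatic-contractivity theorem could act — that is exactly what you are trying to establish. In general, $*$-homomorphisms out of a dense $*$-subalgebra need not be bounded: take the polynomials, dense in $C([0,1])$, and $p\mapsto p(T)$ for a self-adjoint $T$ with $\sigma(T)=[0,2]$. (Here that pathology is ruled out because $I\supseteq\sigma(X)\cup\sigma(Y)$, but you never use that fact.)

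The quick repair in this finite-dimensional setting avoids density entirely. Write $X=\sum_i x_i P_i$ and $Y=\sum_j y_j Q_j$ and note that on simple tensors your definition forces
\begin{align*}
\pi_{X,Y}(h)(Z)=\sum_{i,j} h(x_i,y_j)\,P_i Z Q_j=\sum_{i,j} h(x_i,y_j)\,E_{ij}(Z),\qquad E_{ij}(Z):=P_i Z Q_j,
\end{align*}
a formula that makes sense for every $h\in C(I\times I)$; take it as the definition. The $E_{ij}$ are mutually orthogonal self-adjoint idempotents on $\big(\cB(\cH),\langle\cdot,\cdot\rangle_{\HS}\big)$ summing to $\cI$, so $\|\pi_{X,Y}(h)\|\le\max_{i,j}|h(x_i,y_j)|\le\|h\|_{C(I\times I)}$ gives contractivity directly, (i) and (ii) reduce to identities for the scalars $h(x_i,y_j)$, and (iii) is immediate because each coefficient is nonnegative. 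This is both more elementary and removes the logical circularity.
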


Now, for any function $f\in C(I)$, define $\tilde{f}$ to be the function in $C(I\times I)$ defined by
\begin{align}\label{eqftilde}
	\tilde{f}(s,t)=\left\{\begin{aligned}
		&\frac{f(s)-f(t)}{s-t}\qquad s\ne t\\
		&f'(s)\qquad~~~~~~~~~ s=t.
\end{aligned}\right.
\end{align}
The following lemma, proved in \cite{BarEID17} (see Lemma 4.2), gives a generalization of the chain rule formula to a derivation. 

\begin{lemma}\label{lemma4.2}
Let $X, Y\in \cB_{sa}(\cH)$ and
let $I$ be a compact interval containing the spectrums of $X, Y$.
Let $f\in C(I)$ be a continuously differentiable function such that $f(0)=0$. Then for all $V\in \cB(\cH)$ we have
\begin{align*}
		Vf(Y)-f(X)V=\pi_{XY}(\tilde{f})(VY-XV),
\end{align*} 
where $\tilde{f}$ is defined by~\eqref{eqftilde}.
\end{lemma}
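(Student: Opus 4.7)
\medskip

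\noindent\emph{Proof proposal.} The plan is to establish the identity first for polynomials $f$ with $f(0)=0$ by a direct telescoping argument, and then extend to continuously differentiable $f$ by a $C^1$ density argument that is compatible with the map $f\mapsto \tilde f$.

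\emph{Step 1: The polynomial case.} Consider first a monomial $f(s)=s^n$ with $n\geq 1$, so that $f(0)=0$. Then $\tilde f(s,t)=\sum_{k=0}^{n-1} s^k t^{n-1-k}$, hence by definition of $\pi_{XY}$ on tensor products (and linearity),
\begin{align*}
\pi_{XY}(\tilde f)(W)=\sum_{k=0}^{n-1} X^k\, W\, Y^{n-1-k},\qquad W\in\cB(\cH).
\end{align*}
Applying this to $W=VY-XV$, the right-hand side telescopes:
\begin{align*}
\sum_{k=0}^{n-1}\bigl(X^k VY^{n-k}-X^{k+1}VY^{n-1-k}\bigr)=VY^n-X^n V,
\end{align*}
which is exactly the claimed identity. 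By linearity, the identity extends to every polynomial $p$ with $p(0)=0$.

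\emph{Step 2: Approximation.} Given a continuously differentiable $f\in C(I)$ with $f(0)=0$, I would approximate $f$ in the $C^1(I)$ norm by a sequence of polynomials $p_n$ with $p_n(0)=0$ (Weierstrass gives a $C^1$-approximation of $f'$ by polynomials $q_n$; then set $p_n(s)=\int_0^s q_n(u)\,du$, which vanishes at $0$ and satisfies $p_n\to f$, $p_n'\to f'$ uniformly on $I$). The key analytic point is that such $C^1$-convergence implies $\tilde p_n\to \tilde f$ in $C(I\times I)$: for $s\neq t$, the mean value theorem yields
\begin{align*}
\bigl|\tilde f(s,t)-\tilde p_n(s,t)\bigr|=\left|\frac{(f-p_n)(s)-(f-p_n)(t)}{s-t}\right|\leq \|(f-p_n)'\|_\infty,
\end{align*}
while on the diagonal $\tilde f(s,s)-\tilde p_n(s,s)=f'(s)-p_n'(s)$, which again is controlled by $\|(f-p_n)'\|_\infty$. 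Hence $\|\tilde f-\tilde p_n\|_{C(I\times I)}\to 0$.

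\emph{Step 3: Passage to the limit.} By \Cref{lemma4.1}, $\pi_{XY}$ is a $*$-representation between $C^*$-algebras and hence is bounded (contractive), so $\pi_{XY}(\tilde p_n)\to\pi_{XY}(\tilde f)$ in operator norm on $\cB(\cH)$. Moreover $p_n\to f$ uniformly on $I$, so by continuous functional calculus $p_n(X)\to f(X)$ and $p_n(Y)\to f(Y)$ in operator norm. Taking limits on both sides of the identity established in Step 1 for each $p_n$ yields
\begin{align*}
Vf(Y)-f(X)V=\pi_{XY}(\tilde f)(VY-XV),
\end{align*}
as required. The main subtlety is the second step, namely verifying that convergence in $C^1(I)$ is the right topology for $f\mapsto\tilde f$ to be continuous with values in $C(I\times I)$; once this is in place, the rest is either elementary (telescoping) or a standard continuity argument for $*$-representations.
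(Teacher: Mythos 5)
Your proof is correct. Note that the paper does not itself prove this lemma --- it simply cites Lemma~4.2 of \cite{BarEID17} --- but your argument (telescoping for monomials $s^n$, extension to all polynomials by linearity, then $C^1$-density together with the observation that $\|\tilde f - \tilde p_n\|_{C(I\times I)} \le \|f' - p_n'\|_{\infty}$ and the automatic contractivity of the $*$-representation $\pi_{XY}$ from \Cref{lemma4.1}) is the standard proof of such divided-difference identities and is essentially what appears in that reference. A minor remark: the hypothesis $f(0)=0$ is not actually needed, since both sides of the identity and $\tilde f$ itself are invariant under adding a constant to $f$; you only use it to normalize the approximating polynomials so that $p_n(0)=0$, which is harmless.
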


We can now prove the theorem. 	
By the result of~\cite{CM16} (an extension of Lemma~\ref{lem:choi}), there are superoperators $\partial_j:\cB(\cH)\rightarrow \cB(\cH)$ of the form
$$\partial_j(X)= [V_j, X] =  V_j X- XV_j,$$
where $V_j\in \cB(\cH)$, such that 
\begin{align}\label{eq3.3}
\langle X,\cL(Y)\rangle_\sigma = \sum_{j}\langle \partial_jX,\partial_j Y\rangle_{\sigma}.
\end{align}
Moreover, $V_j$'s are such that there are $\omega_j\geq 0$ with
$$\sigma V_j = \omega_j V_j \sigma.$$
Using the above equation one can show~\cite{BarEID17} that
\begin{align}\label{eq:220}
&\partial_j\big(I_{q,p}(X)\big)=\Gamma_\sigma^{-\frac{1}{q}}\bigg(V_j \Big(\Gamma_{\sigma}^{\frac{1}{p}}\big(\omega_j^{-\frac{1}{2p}}X\big)\Big)^{\frac{p}{q}}-\Big(\Gamma^{\frac{1}{p}}_\sigma\big(\omega_j^{\frac{1}{2p}}X\big)\Big)^{\frac{p}{q}}V_j\bigg).
\end{align}

For arbitrary $X> 0$ define  $Y_j:= \omega_j^{-1/4}\, \Gamma_{\sigma}^{\frac{1}{2}}(X)$ and
 ${Z}_j:= \omega_j^{1/4}\,\Gamma_{\sigma}^{\frac{1}{2}}(X)$. Using~\eqref{eq:220} we compute
\begin{align}
\mathcal{E}_{q,\mathcal{L}}\big(I_{q,2}(X)\big)&=\frac{q\hat q}{4}\big\langle I_{\hat{q},q}\big(I_{q,2}(X)\big),\mathcal{L}\big(I_{q,2}(X)\big)\big\rangle_\sigma\nonumber\\
			&=\frac{q\hat q}{4}\big\langle I_{\hat{q},2}(X),\mathcal{L}\big(I_{q,2}(X)\big)\big\rangle_\sigma\nonumber\\
			&=\frac{q\hat q}{4} \sum_{j} \langle \partial_j I_{\hat{q},2}(X),\partial_j I_{q,2}(X)\rangle_\sigma\label{eq:app-03}\\			
			&=\frac{q\hat q}{4} \sum_{j} \Big\langle \Gamma_\sigma^{-\frac{1}{\hat{q}}} \Big(V_j {Y_j}^{2/\hat{q}}-{Z}_j^{2/\hat{q}}V_j\Big), \Gamma^{-\frac{1}{q}} \Big(V_jY_j^{2/q}-{Z}_j^{2/q}V_j\Big)\Big\rangle_\sigma\label{eq:app-04}\\
			&=\frac{q\hat q}{4} \sum_{j} \Big\langle V_j {Y_j}^{2/\hat{q}}-{Z}_j^{2/\hat{q}}V_j, V_j{Y}_j^{2/q}-{Z}_j^{2/q}V_j\Big\rangle_{\HS}\nonumber\\
			&=\frac{q\hat q}{4}\sum_{j} \Big\langle \pi_{{Z}_j,{Y}_j}\big(\tilde{f}_{2/\hat{q}}\big)(V_j{Y}_j-{Z}_jV_j),\pi_{{Z}_j,{Y}_j}\big(\tilde{f}_{2/q}\big)(V_j{Y}_j-{Z}_jV_j)\Big\rangle_{\HS}\label{eq:app-06}\\
			&=\frac{q\hat q}{4}\sum_{j} \Big\langle V_j{Y}_j-{Z}_jV_j, \pi_{{Z}_j,{Y}_j}\big(\tilde{f}_{2/\hat{q}}\big)^*\pi_{{Z}_j,{Y}_j}\big(\tilde{f}_{2/q}\big)(V_j{Y}_j-{Z}_jV_j)\Big\rangle_{\HS}\nonumber\\
			&= \frac{q\hat q}{4} \sum_{j}  \Big\langle V_j {Y}_j-Z_jV_j,\pi_{Z_j,Y_j}\big(\tilde{f}_{2/\hat{q}}^*\tilde{f}_{2/q}\big)(V_jY_j-Z_jV_j)\Big\rangle_{\HS}, \label{eq:app-08}
\end{align}
where in~\eqref{eq:app-03} we used~\eqref{eq3.3}, in~\eqref{eq:app-04} we used~\eqref{eq:220}, and in~\eqref{eq:app-06} we used the chain rule formula of Lemma \ref{lemma4.2} for the functions $f_\alpha$ with $f_\alpha(x)=x^{\alpha}$. Finally, in~\eqref{eq:app-08} we used part (ii) of Lemma~\ref{lemma4.1}. 

Now, using the proofs of Theorem~2.1 and Lemma~2.4 of~\cite{MOS12}, for any $x,y\geq 0$ and $0\leq p\leq q\leq 2$ we have 
\begin{align}\label{classineq}
q\hat{q}(x^{1/\hat{q}}-y^{1/\hat{q}})(x^{1/q}-y^{1/q})\le p\hat{p}(x^{1/\hat{p}}-y^{1/\hat{p}})(x^{1/p}-y^{1/p}). 
\end{align}
This means that for all $x, y$ we have 
$$q\hat q \big(\tilde{f}_{2/\hat{q}}^*\tilde{f}_{2/q}\big)(x, y) \leq p\hat p \big(\tilde{f}_{2/\hat{p}}^*\tilde{f}_{2/p}\big)(x, y).$$
Hence, by part (iii) of Lemma~\ref{lemma4.1} we have 
\begin{align*}
\mathcal{E}_{q,\mathcal{L}}(I_{q,2}(X))&\le \frac{p \hat{p}}{4} \sum_{j}  \Big\langle V_j Y_j-Z_jV_j,\pi_{Z_j,Y_j}(\tilde{f}_{2/\hat{p}}^*\tilde{f}_{2/p})(V_jY_j-Z_jV_j)\Big\rangle_{\HS}\\
			&=\mathcal{E}_{p,\mathcal{L}}(I_{p,2}(X)).
\end{align*}

{
\begin{remark}
	The difference with the proof of $L_p$-regularity of \cite{BarEID17} lies in the choice of the inequality \reff{classineq} used at the end of the proof.
	\end{remark}}

\section{Proof of  \Cref{thm:LSC-2-simple}}\label{proofofLSI2}
	Since both $\Ent_{2, \sigma}(X)$ and $\cE_{2, \cL}(X)$ are homogenous of degree two {in $X$}, to  prove a log-Sobolev inequality, without loss of generality we can assume that $X$ is of the form $X=\Gamma_\sigma^{-1/2}(\sqrt \rho)$ where $\rho$ is a density matrix. In this case 
	$$\Ent_{2, \sigma}(X) = D(\rho\| \sigma), \qquad \langle X, \cL X\rangle_\sigma = 1- \big[\tr\big(\sqrt \sigma \sqrt \rho\big)\big]^2.$$
	Let   $\sigma= \sum_{i=1}^d s_i\ket i\bra i$ and $\rho=\sum_{k=1}^d r_k \ket{\tilde k}\bra{\tilde k}$ be the eigen-decompositions of $\sigma$ and $\rho$. Then
	$$\Ent_{2, \sigma}(X) = \sum_{k=1}^d r_k \log r_k - \sum_{i, k=1}^d  |\langle i| \tilde k\rangle |^2r_k \log s_i,$$
	and 
	$$\langle X, \cL X\rangle_\sigma = 1- \Big( \sum_{i, k=1}^d |\langle i| \tilde k\rangle|^2 \sqrt{s_i r_k}     \Big)^2.$$
	Let $A=(a_{ik})_{d\times d}$ be a $d\times d$ matrix whose entries are given by
	$$a_{ik} = |\langle i| \tilde k\rangle |^2.$$ 
	Observe that, fixing the eigenvalues $s_i$'s and $r_k$'s, the entropy $\Ent_{2, \sigma}(X)$ is a linear function of $A$ and $\cE_{2, \cL}(X)$ is concave function of $A$. 
	On the other hand, 
	since both $\{\ket 1, \dots, \ket{d}\}$ and $\{\ket {\tilde 1}, \dots, \ket{\tilde d}\}$ form orthonormal bases, $A$ is a doubly stochastic matrix.  Then by Birkhoff's
	theorem, $A$ can be written as a convex combination of permutations matrices. We conclude that if an inequality of the form 
	\begin{align*}
	\beta  \Big(\sum_{k=1}^d r_k \log r_k - \sum_{i, k=1}^d  a_{ik}r_k \log s_i\Big)\leq 1- \Big( \sum_{i, k=1}^d a_{ik} \sqrt{s_i r_k}     \Big)^2,
	\end{align*}
	holds for all permutation matrices $A$, then it holds for all doubly stochastic $A$, and then for all $\sigma, \rho$ with the given eigenvalues.  We note that $A$ is a permutation matrix when $\{\ket 1, \dots, \ket{d}\}$ and $\{\ket {\tilde 1}, \dots, \ket{\tilde d}\}$ are the same bases ({up to} some permutation) which means that $\sigma$ and $\rho$ commute. Therefore, a log-Sobolev inequality of the form 
	$$\beta \Ent_{2, \sigma} \big(  \Gamma_{\sigma}^{-1/2}(\rho) \big) \leq \cE_{2, \cL}\big(\Gamma_\sigma^{-1/2}(\rho)\big),$$
	holds for all $\rho$ if and only if it holds for all $\rho$ that commute with $\sigma$. That is, to find the log-Sobolev constant 
	$$\alpha_2(\cL) = \inf_{\rho} \frac{\cE_{2, \cL}\big(\Gamma_\sigma^{-1/2}(\rho)\big)}{\Ent_{2, \sigma} \big(  \Gamma_{\sigma}^{-1/2}(\rho) \big) },$$
	we may restrict to those $\rho$ that commute with $\sigma$. This optimization problem over such $\rho$ is equivalent to computing the $2$-log-Sobolev constant of the \emph{classical} simple Lindblad generator, and has been solved in Theorem~A.1 of~\cite{DSC96}.

\qed

%*******************************************************************************************
\bibliographystyle{abbrv}
\bibliography{biblio}

\end{document}